\newtheorem{thm}{Theorem}[section]
\newtheorem{prop}[thm]{Proposition}
\newtheorem{lemma}[thm]{Lemma}
\newtheorem{cor}[thm]{Corollary}
\newtheorem{remark}[thm]{Remark}
\newtheorem{conj}[thm]{Conjecture}
 \newtheorem{exam}{Example}
\def\be{ \begin{linenomath} \begin{eqnarray}}
\def\ee{ \end{eqnarray} \end{linenomath}}
\def\bee{ \begin{linenomath} \begin{eqnarray*}}
\def\eee{\end{eqnarray*} \end{linenomath}}
\def\pmx{\begin{pmatrix}}
\def\emx{\end{pmatrix}}
\def\bsq{\begin{subequations}}
\def\esq{\end{subequations}}
\def\bst{\begin{subtheorem}}
\def\est{\end{subtheorem}}
\newcommand{\norm}[1]{ \| #1  \|}
\def\sym{{\rm sym}}
\def\riem{{\rm Riem}}
\def\relent{{\rm RelEnt}}
\def\geod{{\rm geod}}
\def\dob{{\rm Dob}}
\def\rt2{{\textstyle \frac{1}{\sqrt{2}}}}
\def\eps{\varepsilon}
\def\tr{\hbox{\rm Tr} \, }
\def\half{{\textstyle \frac{1}{2}}}
\def\nn{\nonumber}
\def\bra{\langle}
\def\ket{\rangle}
\def\kb{ \ket \bra }
\def\rt2{ \frac{1}{\sqrt{2}} }
\def\raw{\rightarrow}
\def\wh{\widehat}
\def\wtd{\widetilde}
\newcommand{\proj}[1]{ | #1 \kb  #1|}
\def\qed{\qquad {\bf QED}}
\def\RE{{\rm RelEnt}}
\def\riem{{\rm Riem}}
\def\dob{{\rm Dobrushin}}
\def\sym{{\rm sym}}
\def\geod{{\rm geod}}
\def\tr{{\rm Tr}}
\def\dtsig{{\mathbf \cdot \sigma}}
\def\bu{{\bf u}}
\def\bv{{\bf v}}
\def\bw{{\bf w}}
\def\bt{{\bf t}}
\def\by{{\bf y}}
\def\bx{{\bf x}}
\def\bz{{\bf z}}
\def\bN{\mathbb{N}}
\def\bM{\mathbb{M}}
\def\bH{\mathbb{H}}
\def\bP{\mathbb{P}}
\def\bC{\mathbf{C}}
\def\bR{\mathbf{R}}
\def\cK{\mathcal{K}}
\def\cG{\mathcal{G}}
\def\<{\langle}
\def\>{\rangle}
\def\Tr{\mathrm{Tr}\,}
\def\cD{\mathcal{D}}
\def\diag{\mathrm{diag}}
\def\WYD{\mathrm{WYD}}
\def\BKM{\mathrm{BKM}}
\def\supp{\mathrm{supp}\,}
\def\Bures{\mathrm{Bures}}
\def\bI{\mathbb{I}}
\def\cE{\mathcal{E}}
\def\cA{\mathcal{A}}
\def\kmin{{\kappa_{\min}}}
\def\kmax{{\kappa_{\max}}}
\def\WY{\mathrm{WY}}
\def\cP{\mathcal{P}}
\title{Contraction coefficients for noisy quantum channels\footnote{Dedicated to the memory of
  Professor  Uffe Haagerup}}
\author{Fumio Hiai \\ Tohoku University (Emeritus) \\
Hakusan 3-8-16-303, Abiko 270-1154, Japan \\
{\small hiai.fumio@gmail.com}\\ ~~ \\
\and Mary Beth Ruskai \\
Institute for Quantum Computing, University of Waterloo \\
Waterloo, Ontario, Canada \\
{\small ruskai@member.ams.org}}  
\date{\today}
\begin{document}

\maketitle

\begin{abstract}
Generalized relative entropy, monotone Riemannian metrics, geodesic distance,
and trace distance are all known to decrease under the action of quantum channels.
We give some new bounds on, and relationships between, the maximal contraction
for these quantities.

\bigskip\noindent
{\it 2010 Mathematics Subject Classification:}
46L60, 46L87, 15A63, 81P45, 53B50

\medskip\noindent
{\it Key Words:} 
quantum channel; contraction coefficient; relative entropy; quantum divergence;
 monotone Riemannian metric; geodesic distance; Bures distance;
 
\end{abstract}

\bigskip


{\baselineskip=12pt
\tableofcontents
}


\section{Introduction} \label{sect:intro}

It is well-known that many quantities of interest in quantum information theory contract under
the action of completely positive and trace-preserving (CPT) maps, which represent quantum
channels, including the relative entropy $H(P,Q) \equiv\tr P (\log P - \log Q)$
of two positive definite operators with $\Tr P=\Tr Q$. When $\Phi$ is a quantum channel, we can
define 
\be \label{eta-relent}
\eta^\relent(\Phi) \equiv \sup\bigg\{ \frac{ H(\Phi(P),\Phi(Q)) }{ H(P,Q) } :
P, Q > 0, ~ P \neq Q, ~ \tr\,P = \tr\,Q \bigg\},
\ee
which describes the maximal contraction under $\Phi$. Another contraction coefficient can be
defined with respect to the trace distance as
$\eta^{\mathrm{Tr}}(\Phi)\equiv\sup\|\Phi(P-Q)\|_1/\|P-Q\|_1$, where the supremum is taken over
$P,Q$ as above. This can be regarded as the quantum version of the Dobrushin coefficient
of ergodicity \cite{Dob}. 

The concept of contraction coefficient was defined in the classical case \cite{CIRRSZ} and
generalized to the quantum setting in \cite{LR}. 
Similar definitions can be given to describe the contraction of many other quantities.
We consider here primarily   contraction with respect to quantum divergences (a special
case of quasi-entropies), monotone Riemannian metrics and geodesic distances arising from them.
There are many relations between the contraction coefficients of these quantities, which are our
main concern in this paper. We  also study the dependence of these contraction coefficients on
the particular operator convex functions used to define them. Both classical and quantum
contraction coefficients have important applications to the problem of mixing time bounds of
(quantum) Markov processes and in particular, (quantum) Markov chains, as demonstrated in, e.g.,
\cite{CIRRSZ,CKZ,KT2,TKRWV}.

We first  recall what is known in the classical setting. A classical channel from $\bC^d$
to $\bC^{d'}$ can be represented by a $d'\times d$ column-stochastic matrix $\Lambda$. The
trace-norm (or Dobrushin) contraction coefficient is
$\eta^{\mathrm{Tr}}(\Lambda)\equiv\sup\|\Lambda x\|_1/\|x\|_1$, where the supremum is taken
over non-zero $x\in\bR^d$ with $\sum_ix_i=0$. On the Riemannian manifold $\cP_d$ of probability
vectors $p=(p_1,\dots,p_d)$, $p_i>0$, $\sum_{i=1}^dp_i=1$, the so-called Fisher-Rao metric is
a unique classical monotone Riemannian metric, for which we have the Riemannian contraction
coefficient $\eta^\riem(\Lambda)$. When $g$ is a strictly convex function on $(0,\infty)$ with
$g(1)=0$, the classical $g$-divergence extending the classical relative entropy is defined as
$H_g(p,q)\equiv\sum_{i=1}^dg(p_i/q_i)q_i$ for $p,q\in\cP_d$, for which we have the contraction
coefficient $\eta_g^\relent(\Lambda)$. The following relations 
between these contraction coefficients were proved in \cite{CRS,CIRRSZ}:
\be \label{classical}
\eta_g^\relent(\Lambda)=\eta^\riem(\Lambda)\le\eta^{\mathrm{Tr}}(\Lambda)
\le\sqrt{\eta^\riem(\Lambda)}
\ee
whenever $g$ is operator convex on $(0,\infty)$.

In the quantum setting, the study  of monotone Riemannian metrics on the manifold
$\cD_d$ of $d\times d$ positive definite density matrices was begun by Morozova and \v{C}encov
\cite{MC}.  Petz  \cite{Pe3} then showed that there were infinitely many such metrics,
corresponding to positive operator monotone functions on $(0,\infty)$.      Following  
\cite{HKPR,LR} we use the set of operator convex functions $\kappa>0$ on $(0,\infty)$ with $\kappa(1)=1$
and $x\kappa(x)=\kappa(x^{-1})$ to parametrize symmetric monotone metrics on $\cD_d$, $d\in\bN$.
Such $\kappa$ functions correspond one-to-one, by $\kappa=1/f$, to operator monotone
functions $f>0$ on $(0,\infty)$ with $f(1)=1$ and $f(x)=xf(x^{-1})$ giving the same family
of such metrics as in \cite{Pe3}. Thus, for each
$\kappa$ function we can define the Riemannian contraction coefficient $\eta_\kappa^\riem(\Phi)$ of
a channel $\Phi$ from the $d\times d$ matrix algebra $\bM_d$ to $\bM_{d'}$, and
do so explicitly in \eqref{eta-Riem}
of Section 2.4. On the other hand, for each operator convex function $g$ on $(0,\infty)$ with
$g(1)=0$ and $g''(1)>0$ we have the quantum $g$-divergence $H_g(\rho,\gamma)$ for
$\rho,\gamma\in\cD_d$ and the corresponding contraction coefficient $\eta_g^\relent(\Phi)$, 
as defined in
\eqref{H_g} and \eqref{eta-RelEnt} of Section 2.3. As shown in
as well as \cite{LR}, and developed further here, the relation between $\eta_\kappa^\riem(\Phi)$ and
$\eta_g^\relent(\Phi)$ and their dependence on the $\kappa$ and $g$ functions in the quantum
setting are not as simple as  in the classical setting.

This paper is organized as follows. In Section 2 precise definitions of quantum
$g$-divergences and monotone metrics parametrized by the $\kappa$ functions are given, for
which we introduce the contraction coefficients $\eta_g^\relent(\Phi)$ and
$\eta_\kappa^\riem(\Phi)$. Section 3 provides familiar examples of $g$-divergences and monotone
metrics such as the BKM, the Wigner-Yanase, and the Bures metrics. In Section 4 a description of
$\eta_\kappa^\riem(\Phi)$ in terms of a certain eigenvalue problem developed in \cite{LR} is
recalled, which establishes the relation $\eta_\kappa^\riem(\Phi)\le\eta^{\mathrm{Tr}}(\Phi)$
when $\kappa(x)=x^{-1/2}$. 

The main results in  Section~5 are the general relations
\be \label{general}
\eta_\kappa^\geod(\Phi)=\eta_\kappa^\riem(\Phi)\le\eta_g^\relent(\Phi),
\qquad\eta^{\mathrm{Tr}}(\Phi)\le\sqrt{\eta_\kappa^\riem(\Phi)},
\ee
when $g$ is  related to $\kappa$ by $g(x)=(x-1)^2\kappa(x)$. Here
$\eta_\kappa^\geod(\Phi)$ is the contraction with respect to the geodesic distance induced by
the monotone metric for $\kappa$ as defined in \eqref{eta-geod}. A lemma slightly modified from \cite{HP1}
is given in Appendix A to prove the equality in \eqref{general}.  The first inequality in
\eqref{general} was given in \cite{LR}; the second is proved in 
Section~\ref{sect:gene-theorems} strengthening
results from  \cite{TKRWV,Ru}.  In Section~5.2, a partial ordering of contraction coefficients
is shown to hold when the domain or range of the channel is a commutative subalbegra, i.e.,
classical.  Some remarks on extensions to weak Schwarz maps are given in Section~5.3.

 In Section 6 we treat qubit channels using the 
Bloch sphere representation.  The section includes proofs of statements
announced in \cite{LR}, as well as  additional results.
 In particular when $\Phi_T$ is a unital qubit channel described 
by a $3 \times 3$ matrix $T$ as in \cite{KR}   we prove that 
\be \label{unital}
\eta_\kappa^{\riem}(\Phi_{T}) = \eta_\kappa^{\geod}(\Phi_{T}) = \eta_g^{\RE}(\Phi_{T})
= \| T \|_\infty ^2.
\ee
Next, for $\Phi$ in the simplest possible family of non-unital qubit channels 
(which collapse the Bloch sphere to a line), 
we estimate $\eta_\kappa^\riem(\Phi)$ for several particular cases of $\kappa$  as well as
$\eta_g^\relent(\Phi)$ for special $g$ corresponding to the extreme
$\kappa$ functions.   These estimates suffice to show that the equality conditions in \eqref{unital}
above do not extend to non-unital channels; that the contraction coefficients depend on
the functions $\kappa$ and $g$; and several natural conjectures are false. 
 Complete proofs of the results in Section 6, which are elementery  but somewhat lengthy,
 are given in Appendix~\ref{sect:qubitpf}

Finally in Section 7 we present further results on contraction coefficients for some special
examples of Section 3. A remarkable result here is that the equality
$\eta_\BKM^\riem(\Phi)=\eta_\BKM^\relent(\Phi)$ holds for every channel $\Phi$, where
$\eta_\BKM^\riem$ denotes the BKM metric contraction and $\eta_\BKM^\relent$ the contraction
with respect to the {\it symmetrized} relative entropy $H(P,Q)+H(Q,P)$. But the equality
between $\eta_\BKM^\riem(\Phi)$ and $\eta^\relent(\Phi)$ in \eqref{eta-relent} is left open.

\section{Notation and Definitions}

\subsection{Basic notation}\label{sect:notat}

For each $d\in\bN$ we write $\bM_d$, $\bH_d$, $\bP_d$, and $\overline{\bP}_d$ for the
sets of $d\times d$ complex, Hermitian, positive definite, and positive semi-definite
matrices, respectively. We also denote by $\cD_d$ the set of $d\times d$ positive definite
density matrices and $\overline\cD_d$ the set of all $d\times d$ density matrices, i.e.,
$\cD_d=\{\rho\in\bP_d:\Tr\rho=1\}$ and
$\overline\cD_d=\{\rho\in\overline\bP_d:\Tr\rho=1\}$, where $\Tr$ is the usual trace
functional on $\bM_d$. The trace-norm of $X\in\bM_d$ is $\|X\|_1\equiv \Tr|X|$. Recall that
$\bM_d$ identified with ${{\cal B}}({\bf C}^d)$ becomes a Hilbert space when
equipped with the Hilbert-Schmidt inner product
$$
\<X,Y\>\equiv \Tr X^*Y,\qquad X,Y\in\bM_d,
$$
together with the Hilbert-Schmidt norm $\|X\|_2\equiv (\Tr X^*X)^{1/2}$. A real subspace
$\bH_d$ of $\bM_d$ is identified with the Euclidean space of dimension $d^2$, and
$\cD_d$ is a smooth Riemannian manifold whose tangent space at any foot point is
identified with $\bH_d^0\equiv \{A\in\bH_d:\Tr A=0\}$. Functions $f(A)$ of matrices $A\in\bH_d$
are defined via the usual functional calculus.

We will use linear maps $\Phi:\bM_d\to\bM_{d'}$, and denote by $\widehat\Phi$ the adjoint of
$\Phi$ with respect to the Hilbert-Schmidt inner product, i.e., 
$\<\Phi(X),Y\>=\<X,\widehat\Phi(Y)\>$ for all $X\in\bM_d$ and $Y\in\bM_{d'}$.
As usual in quantum information, we call $\Phi$ a (quantum) {\it channel} if $\Phi$
is CPT (i.e., completely positive and trace-preserving) map. Most of the maps we consider
will be constructed from the {\it left} and {\it right multiplication operators}, respectively,
i.e., $L_AX\equiv AX$ and $R_BX\equiv XB$ for $A, B, X\in\bM_d$.
For each $A,B\in\bP_d$, $L_A$ and $R_B$ are commuting positive invertible operators on the
Hilbert space $\bM_d$ (however, they are not positive in the sense of mapping $\bP_d$
into $\overline\bP_d$). More generally, for functions $f: (0,\infty) \to {\bf R}$ we have
$L_{f(A)} = f(L_A)$ and $R_{f(B)} = f(B)$.


\subsection{Operator convex functions}

A real function $f$ on $(0,\infty)$ is said to be {\it operator monotone} (or operator
monotone increasing) if $A\ge B$ implies $f(A)\ge f(B)$ for every $A,B\in\bP_d$ with any
$d\in\bN$, and {\it operator monotone decreasing} if $-f$ is operator monotone. A real
function $g$ on $(0,\infty)$ is said to be {\it operator convex} if 
$$
g(\lambda A+(1-\lambda)B)\le\lambda g(A)+(1-\lambda)g(B)
$$
for all $A,B\in\bP_d$ with any $d\in\bN$ and all $\lambda\in(0,1)$, and {\it operator
concave} if $-g$ is operator convex.  The theory of operator monotone and operator convex
functions was initiated by L\"owner \cite{Lw} and Kraus \cite{Kr}, respectively. For
details, see, e.g., \cite[Section V.4]{Bh1}, also \cite{An1,Do,Hi}.

In this work the following classes of operator convex functions play a special role:
\begin{align*}
&\cG\equiv \{g:(0,\infty)\to\bR,\,\mbox{operator convex},\,g(1)=0,\,g''(1)>0\}, \\
&\cG_\sym\equiv \{g:(0,\infty)\to[0,\infty),\,\mbox{operator convex},
\,g(x)=xg(x^{-1})\ \mbox{for}\ x>0, \\
&\hskip8.8cm g(1)=g'(1)=0,\,g''(1)=2\}, \\
&\cK\equiv \{\kappa:(0,\infty)\to(0,\infty),\,\mbox{operator convex},
\,x\kappa(x)=\kappa(x^{-1})\ \mbox{for}\ x>0,\,\kappa(1)=1\}.
\end{align*}
By Proposition \ref{prop:kfg} below there is a one-to-one correspondence
$\kappa\in\cK\leftrightarrow g\in\cG_\sym$ determined by
\begin{equation}\label{kg-rel}
g(x)=(x-1)^2\kappa(x).
\end{equation}
It is easy to see that if $g\in\cG$ then $\widetilde g(x)\equiv xg(x^{-1})$ is also in $\cG$.
Indeed, if $g\in\cG$, then $g(x)/(x-1)=(g(x)-g(1))/(x-1)$ is operator monotone on
$(0,\infty)$ by Kraus' theorem, and hence
$$
{\widetilde g(x)-\widetilde g(1)\over x-1}
={xg(x^{-1})\over x-1}=-{g(x^{-1})\over x^{-1}-1}
$$
is also operator monotone so that $\widetilde g$ is operator convex on $(0,\infty)$.
Moreover, noting that $g''(1)=\widetilde g''(1)$, we define the {\it symmetrization} of $g$
by
\begin{equation}\label{g-sym}
g_\sym\equiv {g+\widetilde g\over g''(1)}\in\cG_\sym.
\end{equation}

\begin{prop}\label{prop:ope-conv}
{\rm(i)}\enspace
If $g:(0,\infty)\to\bR$ is an operator convex function, then there exist a unique constant
$c\ge0$ and a unique positive measure $\mu$ on $[0,\infty)$ with
$\int_{[0,\infty)}(1+s)^{-1}\,d\mu(s)<+\infty$ such that
\begin{equation}\label{int-exp1}
g(x)=g(1)+g'(1)(x-1)+c(x-1)^2
+\int_{[0,\infty)}{(x-1)^2\over x+s}\,d\mu(s),\quad x\in(0,\infty).
\end{equation}

{\rm(ii)}\enspace
If $\kappa:(0,\infty)\to\bR$ is an operator convex function and it satisfies the
normalization $\kappa(1)=1$ and the symmetry condition $x\kappa(x)=\kappa(x^{-1})$ for all
$x>0$, then $\kappa(x)>0$ for all $x>0$ (hence $\kappa\in\cK$) and there exists a unique
probability measure $m$ on $[0,1]$ such that
\begin{align}
\kappa(x)&=\int_{[0,1]}\frac{1+x}{(x+s)(1+sx)}\cdot\frac{(1+s)^2}{2}\,dm(s) \nonumber\\
&=\int_{[0,1]}\biggl(\frac{1}{x+s}+\frac{1}{sx+1}\biggr)
\frac{(1+s)}{2}\,dm(s),\qquad x\in(0,\infty). \label{int-exp2}
\end{align}
\end{prop}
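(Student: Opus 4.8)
The plan is to deduce (i) from the classical L\"owner--Kraus machinery, and then to obtain (ii) by feeding the representation (i) into the symmetry relation $x\kappa(x)=\kappa(x^{-1})$ and analysing the resulting constraint on the representing measure. For (i) I would start from Kraus' theorem (already used in the excerpt): $g$ is operator convex on $(0,\infty)$ iff $h(x):=(g(x)-g(1))/(x-1)$, with $h(1):=g'(1)$, is operator monotone there. L\"owner's integral representation for operator monotone functions on $(0,\infty)$ gives $\beta\ge0$ and a positive measure $\omega$ on $[0,\infty)$ with $\int(1+s)^{-2}\,d\omega(s)<\infty$ such that $h(x)=g'(1)+\beta(x-1)+\int_{[0,\infty)}\bigl(\frac1{1+s}-\frac1{x+s}\bigr)\,d\omega(s)$; multiplying by $x-1$ and using $\frac1{1+s}-\frac1{x+s}=\frac{x-1}{(x+s)(1+s)}$ yields \eqref{int-exp1} with $c:=\beta$ and $d\mu(s):=(1+s)^{-1}\,d\omega(s)$, the integrability condition on $\mu$ being exactly $\int(1+s)^{-2}\,d\omega<\infty$. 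Uniqueness is routine: $g(1),g'(1)$ are intrinsic, $c=\lim_{x\to\infty}g(x)/x^2$, and $\mu$ is recovered from the Stieltjes transform $x\mapsto\int(x+s)^{-1}\,d\mu(s)$ of the remainder.

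For (ii), positivity comes first and is quick: operator convexity makes $\kappa$ convex and continuous on $(0,\infty)$; if $\kappa(a)\le0$ for some $a>0$ then $\kappa(a^{-1})=a\kappa(a)\le0$ by the symmetry, and for $a\ne1$ the points $a,a^{-1}$ straddle $1$, so convexity forces $\kappa(1)\le0$, contradicting $\kappa(1)=1$ (the case $a=1$ is immediate); hence $\kappa>0$, i.e. $\kappa\in\cK$. Next, differentiating $x\kappa(x)=\kappa(x^{-1})$ at $x=1$ gives $\kappa'(1)=-\frac12$. I would apply (i) to $\kappa$ (with $g(1)=1$, $g'(1)=-\frac12$) and substitute it into $x\kappa(x)=\kappa(x^{-1})$; after clearing the common factor $(x-1)^3/x$ the polynomial terms collapse to a constant and one is left with
\[
-\tfrac12+\frac{c(x^2+x+1)}{x}+\int_{[0,\infty)}\frac{sx^2+(s+1)x+s}{(x+s)(sx+1)}\,d\mu(s)=0\qquad(x>0,\ x\ne1),
\]
hence, by continuity of the integral in $x$, for all $x>0$. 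The integrand is $\ge0$, so $\int\cdots\,d\mu(s)=\frac12-\frac{c(x^2+x+1)}{x}\le\frac12$; Fatou's lemma as $x\to0^+$ (the integrand tending to $1$) makes $\mu$ finite with $\mu([0,\infty))\le\frac12$; then $\frac{c(x^2+x+1)}{x}\le\frac12$ for all $x$ forces $c=0$, and dominated convergence as $x\to0^+$ gives $\mu([0,\infty))=\frac12$. Writing $\frac{sx^2+(s+1)x+s}{(x+s)(sx+1)}=1+\frac{s(1-s)x}{(x+s)(sx+1)}$, the identity collapses to $\int_{[0,\infty)}\frac{s(1-s)}{(x+s)(sx+1)}\,d\mu(s)=0$ for all $x>0$.

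To finish, I would split $\mu=b\delta_0+\nu+a\delta_1+\mu_1$ with $\nu:=\mu|_{(0,1)}$, $\mu_1:=\mu|_{(1,\infty)}$, and let $\mu_1'$ be the push-forward of $\mu_1$ under $s\mapsto s^{-1}$. Substituting $s\mapsto s^{-1}$ in the part of the last integral over $(1,\infty)$ (the integrand vanishes at $s=0,1$) turns it into $\int_{(0,1)}\frac{1-s}{(x+s)(sx+1)}\,d\lambda(s)=0$ for all $x>0$, where $\lambda:=s\,\nu-\mu_1'$ is a finite signed measure on $(0,1)$. Because $\frac{1-s}{(x+s)(sx+1)}=\frac1{1+s}\bigl(\frac1{x+s}-\frac1{x+s^{-1}}\bigr)$, this asserts that the Cauchy(--Stieltjes) transforms of two finite signed measures --- one carried by $(-1,0)$, the other by $(-\infty,-1)$ --- coincide on $(0,\infty)$, hence by analytic continuation off the real line, hence (Stieltjes inversion) the two measures are equal; as their supports are disjoint both vanish, so $\lambda=0$, i.e. $\mu_1'=s\,\nu$ on $(0,1)$. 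Reassembling $\kappa$ from \eqref{int-exp1} (with $c=0$), performing the same substitution on the tail $\int_{(1,\infty)}$ and using $\mu_1'=s\,\nu$, one checks that $\kappa$ becomes a $dm$-average of the kernels $\bigl(\frac1{x+s}+\frac1{sx+1}\bigr)\frac{1+s}{2}=\frac{(1+x)(1+s)^2}{2(x+s)(1+sx)}$, where $m$ is obtained from $\mu$ by the evident ``folding'' ($dm=2(1+s)\,d\mu$ on $(0,1)$, with weight $2$ at $s=0$ and at the inversion-fixed point $s=1$); the identity $\mu([0,\infty))=\frac12$ becomes $m([0,1])=1$, so $m$ is a probability measure, unique because $\mu$ is. This is \eqref{int-exp2}.

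The step I expect to be the real obstacle is the implication ``$\int_{(0,1)}\frac{1-s}{(x+s)(sx+1)}\,d\lambda(s)\equiv0\Rightarrow\lambda=0$'': the kernels $\frac{1-s}{(x+s)(sx+1)}$, $x>0$, do \emph{not} separate arbitrary signed measures on $(0,1)$ (there is a genuine reflection degeneracy built into them), so one cannot conclude by a density/completeness argument and must instead exploit the disjoint-support structure revealed by the partial-fraction splitting. Besides this, some care is needed that all the measure manipulations are legitimate --- finiteness of $\mu$, continuity of the Stieltjes integrals so that identities valid for $x\ne1$ extend to $x=1$ as well, and the endpoint bookkeeping at $s=0,1$ in the final folding.
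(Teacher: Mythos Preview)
Your argument is correct. The paper does not actually prove this proposition in the text: it simply refers to \cite{LR} and \cite[(5.2)]{FHR} for \eqref{int-exp1} and to \cite[Appendix~A.2]{HKPR} for \eqref{int-exp2}, adding only the one-line positivity remark that you also give. So there is no in-paper proof to compare against, and what you have written is a genuine self-contained derivation.

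For part~(i), your route via Kraus' theorem (operator convexity of $g$ $\Leftrightarrow$ operator monotonicity of $h(x)=(g(x)-g(1))/(x-1)$) followed by the L\"owner/Nevanlinna integral representation of $h$ is exactly the mechanism the paper itself invokes later in the proof of Proposition~\ref{prop:kfg}, so this is in the same spirit as the cited sources. For part~(ii), your strategy---apply (i) to $\kappa$, feed the representation into the symmetry $x\kappa(x)=\kappa(x^{-1})$, deduce $c=0$ and $\mu([0,\infty))=\tfrac12$, then use a Stieltjes-inversion/disjoint-support argument to show that the tail of $\mu$ on $(1,\infty)$ is the push-forward of $s\,d\nu$ under inversion---is a clean direct proof. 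The step you flag as the obstacle is handled correctly: the partial-fraction identity $\frac{1-s}{(x+s)(sx+1)}=\frac{1}{1+s}\bigl(\frac{1}{x+s}-\frac{1}{x+s^{-1}}\bigr)$ reduces the question to equality of Cauchy transforms of finite signed measures with disjoint supports, and Stieltjes inversion then forces both to vanish.

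One small point: your description of the final ``folding'' is slightly imprecise at the atom $s=1$. The correct bookkeeping is $m(\{0\})=2\mu(\{0\})$, $dm=2(1+s)\,d\nu$ on $(0,1)$, and $m(\{1\})=2\mu(\{1\})$ (not $4\mu(\{1\})$, as a literal reading of ``$dm=2(1+s)\,d\mu$'' at $s=1$ would give); with this choice the mass identity $b+a+\int_{(0,1)}(1+s)\,d\nu=\tfrac12$ yields $m([0,1])=1$, and the reassembly of $\kappa$ as $\int_{[0,1]}\kappa_s\,dm$ goes through. This is cosmetic and does not affect the argument.
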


The integral expression \eqref{int-exp1} was given in \cite{LR} and is a special case of
\cite[(5.2)]{FHR}. Since
\begin{equation}\label{g''(1)}
g''(1)=2\biggl(c+\int_{[0,\infty)}{1\over1+s}\,d\mu(s)\biggr),
\end{equation}
we note that $g''(1)>0$ if and only if $c+\mu([0,\infty))>0$, or equivalently, $g$ is not
a linear function. For the proof of \eqref{int-exp2}, see \cite[Appendix A.2]{HKPR}.
It is also obvious that $\kappa(x)>0$ for all $x>0$ whenever $\kappa$ is a convex function
with $\kappa(1)>0$ and $x\kappa(x)=\kappa(x^{-1})$ for all $x>0$.

By Proposition \ref{prop:ope-conv}, $\cK$ is a Bauer simplex (in a locally convex
topological vector space consisting of real functions on $(0,\infty)$ in the pointwise
convergence topology), whose extreme points are
\begin{equation}\label{ext-k}
\kappa_s(x)\equiv \frac{(1+s)^2}{2}\cdot\frac{1+x}{(x+s)(1+sx)}
=\frac{1+s}{2}\biggl(\frac{1}{x+s}+\frac{1}{1+sx}\biggr),
\qquad0\le s\le1.
\end{equation}
It is well-known (and immediately seen from the integral expression \eqref{int-exp2}) that
$\kappa_1(x)=2/(1+x)$ is the smallest element of $\cK$ and $\kappa_0(x)=(1+x)/2x$ is the
largest in $\cK$ so that
\begin{equation}\label{min-k-max}
{2\over1+x}\le\kappa(x)\le{1+x\over2x},\qquad\kappa\in\cK.
\end{equation}
In the sequel we use the more explicit notations $\kmin$ for $\kappa_1$ and $\kmax$ for
$\kappa_0$.

\begin{prop}\label{prop:kfg}
For a function $\kappa:(0,\infty)\to(0,\infty)$ consider the following conditions:
\begin{itemize}
\item[\rm(a)] $\kappa$ is operator convex,
\item[\rm(b)] $\kappa$ is operator monotone decreasing,
\item[\rm(c)] $g(x)\equiv (x-1)^2\kappa(x)$ is operator convex.
\end{itemize}
Then {\rm(a)} $\Leftarrow$ {\rm(b)} $\Leftrightarrow$ {\rm(c)}.
Moreover, if $x\kappa(x)=\kappa(x^{-1})$ for all $x>0$ or equivalently $g(x)=xg(x^{-1})$ for
all $x>0$, then the above conditions {\rm(a)--(c)} are all equivalent.
\end{prop}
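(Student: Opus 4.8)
The plan is to establish, for any positive $\kappa$ on $(0,\infty)$, the implications (b)$\Rightarrow$(a) and (b)$\Leftrightarrow$(c), and then to complete the loop under the symmetry hypothesis by proving (a)$\Rightarrow$(b). The two main ingredients will be Proposition~\ref{prop:ope-conv} and the classical integral representation of positive operator monotone decreasing (Stieltjes) functions on $(0,\infty)$ (a standard consequence of L\"owner's theorem): such a function has the form $\kappa(x)=c+\int_{[0,\infty)}(x+s)^{-1}\,d\mu(s)$ for a unique $c\ge0$ and a unique positive measure $\mu$ with $\int_{[0,\infty)}(1+s)^{-1}\,d\mu(s)<\infty$.

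Granting (b), this representation immediately gives (a), since $\kappa$ is then a pointwise-convergent positive combination of the operator convex functions $x\mapsto(x+s)^{-1}$, $s\ge0$. It also gives (c): substituting the representation into $g(x)=(x-1)^2\kappa(x)$ yields $g(x)=c(x-1)^2+\int_{[0,\infty)}\frac{(x-1)^2}{x+s}\,d\mu(s)$, and by the elementary identity $\frac{(x-1)^2}{x+s}=(x-s-2)+\frac{(s+1)^2}{x+s}$ each integrand is an affine function plus a positive multiple of the operator convex function $(x+s)^{-1}$; hence $g$ is again a positive combination of operator convex functions and is operator convex.

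For the reverse implication (c)$\Rightarrow$(b) I would run this in the other direction. If $g(x)=(x-1)^2\kappa(x)$ is operator convex, then $g(1)=0$ and $g$ is smooth on $(0,\infty)$, so Proposition~\ref{prop:ope-conv}(i) applies and gives $g(x)=g'(1)(x-1)+c(x-1)^2+\int_{[0,\infty)}\frac{(x-1)^2}{x+s}\,d\mu(s)$; dividing by $(x-1)^2$ gives, for $x\ne1$,
\[
\kappa(x)=\frac{g'(1)}{x-1}+c+\int_{[0,\infty)}\frac{1}{x+s}\,d\mu(s).
\]
The last two terms stay bounded as $x\to1$, so if $g'(1)\ne0$ the function $\kappa$ would take negative values on one side of $x=1$; the standing hypothesis $\kappa>0$ therefore forces $g'(1)=0$, and then $\kappa$ coincides on $(0,\infty)\setminus\{1\}$ — and hence, by continuity, on all of $(0,\infty)$ — with the Stieltjes function $c+\int(x+s)^{-1}\,d\mu(s)$, which is operator monotone decreasing. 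This forcing of $g'(1)=0$ from positivity of $\kappa$ is the one genuinely delicate point; it implicitly uses that the relevant $\kappa$'s are continuous (they lie in $\cK$, hence are operator convex), which is needed only to pin down the single value $\kappa(1)$.

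Finally, for the ``moreover'' clause one first checks in one line that $x\kappa(x)=\kappa(x^{-1})$ is equivalent to $g(x)=xg(x^{-1})$ using $g=(x-1)^2\kappa$, and then proves (a)$\Rightarrow$(b) under this symmetry. Here I would normalize to $\widetilde\kappa=\kappa/\kappa(1)$ — which is again operator convex and symmetric with $\widetilde\kappa(1)=1$, well defined since $\kappa(1)>0$ — and invoke Proposition~\ref{prop:ope-conv}(ii), which represents $\widetilde\kappa(x)=\int_{[0,1]}\big(\frac{1}{x+s}+\frac{1}{sx+1}\big)\frac{1+s}{2}\,dm(s)$ for a probability measure $m$ on $[0,1]$. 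Each kernel $x\mapsto(x+s)^{-1}$ and $x\mapsto(sx+1)^{-1}$ is operator monotone decreasing on $(0,\infty)$ for $s\in[0,1]$, so $\widetilde\kappa$, and hence $\kappa$, is operator monotone decreasing. The remaining content is routine manipulation with these integral representations.
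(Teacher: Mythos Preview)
Your proof is correct and largely parallel to the paper's, but the route through (b)$\Leftrightarrow$(c) differs. The paper first applies Kraus' theorem to reduce operator convexity of $g$ to operator monotonicity of $h(x)=(g(x)-g(1))/(x-1)=(x-1)\kappa(x)$, and then uses integral representations to pass from operator monotonicity of $h$ to operator monotone decrease of $(h(x)-h(1))/(x-1)=\kappa(x)$. You instead go directly from $g$ to $\kappa$ via Proposition~\ref{prop:ope-conv}(i), explicitly forcing $g'(1)=0$ from the positivity hypothesis on $\kappa$ and then reading off the Stieltjes form. Your approach is slightly more self-contained in that it avoids a separate appeal to Kraus' theorem, while the paper's two-step reduction $g\leftrightarrow h\leftrightarrow\kappa$ makes the structural reason for the equivalence more visible. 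Both arguments share the same tacit continuity issue at $x=1$ needed to pin down $\kappa(1)$; however, your parenthetical ``they lie in $\cK$'' is not quite accurate here, since the proposition is stated for arbitrary positive $\kappa$ rather than normalized symmetric ones. For (b)$\Rightarrow$(a) and the symmetric (a)$\Rightarrow$(b), you and the paper invoke the same integral representations.
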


\begin{proof}
For (b) $\Rightarrow$ (a), see \cite[Theorem 2.4]{HKPR}. As for (c), Kraus' theorem (see, e.g.,
\cite[Corollary 2.7.8]{Hi}) implies that $g$ is operator convex if and only if
$$
h(x)\equiv {g(x)-g(1)\over x-1}=(x-1)\kappa(x),\qquad x>0,
$$
is operator monotone. The latter is also equivalent to the condition that
$\kappa(x)=(h(x)-h(1))/(x-1)$ is operator monotone decreasing. Indeed, this is seen from
the facts that $h$ is operator monotone on $(0,\infty)$ if and only if it has an integral
expression
$$
h(x)=h(1)+\gamma(x-1)+\int_{[0,\infty)}{x-1\over x+s}\,d\mu(s),
\qquad x\in(0,\infty),
$$
where $\gamma\ge0$ and $\mu$ is a positive measure on $[0,\infty)$ with
$\int_{[0,\infty)}(1+s)^{-1}\,d\mu(s)<+\infty$ (see \cite[Theorem 1.9]{FHR}),
and that $\kappa$ is operator monotone decreasing on $(0,\infty)$ if and only if it has an
integral expression
$$
\kappa(x)=\gamma+\int_{[0,\infty)}{1\over x+s}\,d\mu(s),\qquad x\in(0,\infty),
$$
where $\gamma$ and $\mu$ are same as above (see \cite{Ha}, also \cite[Theorem 3.1]{AH}).

Next it is immediate to check that the conditions $x\kappa(x)=\kappa(x^{-1})$ and
$g(x)=xg(x^{-1})$ for all $x>0$ are equivalent. Under this symmetry condition, the implication
(a) $\Rightarrow$ (b) follows immediately from the integral expression \eqref{int-exp2}, as
shown in \cite[Theorem 2.4]{HKPR}. 
\end{proof}

\subsection{Relative entropy or $g$-divergence}

For every $g\in\cG$ and every $A,B\in\bP_d$ the (quantum) {\it $g$-divergence} of $A$
relative to $B$ is defined by
\be \label{H_g}
H_g(A,B)\equiv \bigl\<B^{1/2},g(L_AR_B^{-1})B^{1/2}\bigr\>,
\ee
which is a generalization of the relative entropy and is a special case of
{\it quasi-entropies} \cite{Ko,Pe1,Pe2}. The most important property of $H_g(A,B)$ is the
{\it monotonicity}
$$
H_g(\Phi(A),\Phi(B))\le H_g(A,B)
$$
for every $A,B\in\bP_d$ and every CPT map $\Phi:\bM_d\to\bM_{d'}$. This was first proved
by Petz \cite{Pe1,Pe2} under slightly more restricted situations,
 and the above extension is in \cite{LR}, see also \cite{TCR,HMPB}.

From the integral expression \eqref{int-exp1} it is known \cite[Theorem II.5]{LR} that
for every $\rho,\gamma\in\cD_d$,
\begin{equation}\label{H-int}
H_g(\rho,\gamma)=c\,\Tr(\rho-\gamma)^2\gamma^{-1}+\int_{[0,\infty)}
\Tr(\rho-\gamma){1\over L_\rho+sR_\gamma}(\rho-\gamma)\,d\mu(s).
\end{equation}
Since $(L_\rho+sR_\gamma)^{-1}$ is a positive invertible operator on $\bM_d$, the above
expression together with \eqref{g''(1)} implies that $H_g(\rho,\gamma)\ge0$ and that
$H_g(\rho,\gamma)=0$ if and only if $\rho=\gamma$.

In particular, when $g\in\cG_\sym$ and $\kappa\in\cK$ are given with \eqref{kg-rel}, we
note (see \cite[Theorem II.5]{LR}) that for every $A,B\in\bP_d$,
\begin{align*}
H_g(A,B)&=\bigl\<A-B,R_B^{-1}\kappa(L_AR_B^{-1})(A-B)\bigr\> \\
&=\bigl\<A-B,L_A^{-1}\kappa(R_BL_A^{-1})(A-B)\bigr\>=H_g(B,A).
\end{align*}
For every $g\in\cG$ with symmetrization $g_\sym$ in \eqref{g-sym} we have
\begin{equation}\label{H-g-sym}
H_{\widetilde g}(A,B)=H_g(B,A),\qquad H_{g_\sym}(A,B)={H_g(A,B)+H_g(B,A)\over g''(1)}.
\end{equation}

For each CPT map $\Phi:\bM_d\to\bM_{d'}$ and each $g\in\cG$ we introduce the
{\it contraction coefficient} of $\Phi$ with respect to the $g$-divergence $H_g$ by
\be \label{eta-RelEnt}
\eta_g^\relent(\Phi)\equiv \sup_{\rho,\gamma\in\cD_d,\,\rho\ne\gamma}
{H_g(\Phi(\rho),\Phi(\gamma))\over H_g(\rho,\gamma)}\quad(\le1).
\ee
From \eqref{H-g-sym} we easily see that
\begin{equation}\label{eta-g-sym}
\eta_{\widetilde g}^\relent(\Phi)=\eta_g^\relent(\Phi),\qquad
\eta_{g_\sym}^\relent(\Phi)\le\eta_g^\relent(\Phi).
\end{equation}

\subsection{Riemannian metrics and geodesic distance}

Given a function $\kappa\in\cK$ we define, for any $A\in\bP_d$, a linear map
$\Omega_A^\kappa:\bM_d\to\bM_d$ by
$$
\Omega_A^\kappa(X)\equiv R_A^{-1}\kappa\bigl(L_AR_A^{-1}\bigr)X
=L_A^{-1}\kappa\bigl(R_AL_A^{-1}\bigr)X,
\qquad X\in\bM_d,
$$
where the equality of the two expressions follows from $x\kappa(x)=\kappa(x^{-1})$. The
positivity condition in the sense that
$\Omega_A^\kappa(\bP_d)\subset\overline\bP_d$ (equivalent to complete positivity) for the
map $\Omega_A^\kappa$ has thoroughly been investigated in \cite{HKPR} with a lot of sample
discussions.

Associated with $\kappa\in\cK$ a Riemannian metric $M^\kappa$ on the Riemannian manifold
$\cD_d$ is defined by
\begin{equation}\label{Riem}
M_\rho^\kappa(A,B)\equiv \bigl\<A,\Omega_\rho^\kappa(B)\bigr\>,
\qquad A,B\in\bH_d^0,\ \rho\in\cD_d.
\end{equation}
This family of Riemannian metrics on $\cD_d$ ($d\in\bN$) induced by $\kappa\in\cK$ is
called {\it monotone metrics} since the class was characterized by Petz \cite{Pe3} with
the monotonicity property:
\begin{equation}\label{Riem-mono}
M_{\Phi(\rho)}(\Phi(A),\Phi(A))\le M_\rho(A,A),
\qquad A\in\bH_d^0,\ \rho\in\cD_d,
\end{equation}
for every CPT map $\Phi:\bM_d\to\bM_{d'}$ with any $d,d'$. Here, note that although
$\Phi(\rho)$ in $\bM_{d'}$ is not necessarily positive definite, the left-hand side of
\eqref{Riem-mono} is well defined by regarding $\Phi(\rho)$ and $\Phi(A)$ as matrices in
$\Pi\bM_{d'}\Pi\cong\bM_k$ where $\Pi\equiv \supp\Phi(I_d)$, the support projection, and
$k\equiv \dim\Pi$. More recent results on monotone metrics are found in \cite{HP1,HP2}.
Also, note that if $\rho$ and $A$ commute, then $\<A,\Omega_\rho^\kappa(A)\>=\Tr\rho^{-1}A^2$
independently of the choice of $\kappa\in\cK$. This fact is essentially same as the classical
result that there is only one monotone metric in the classical setting, known as the
Fisher-Rao metric.

For each $\kappa\in\cK$ the {\it contraction coefficient} of a CPT map $\Phi$ with respect
to the monotone metric $M^\kappa$ induced by $\kappa$ is defined by
\be \label{eta-Riem}
\eta_\kappa^\riem(\Phi)\equiv \sup_{\rho\in\cD_d}\,\sup_{A\in\bH_d^0,\,A\ne0}
{\bigl\<\Phi(A),\Omega_{\Phi(\rho)}^\kappa(\Phi(A))\bigr\>\over\<A,\Omega_\rho^\kappa(A)\>}
\quad(\le1).
\ee

We write $\eta_{\max}^\riem(\Phi) $ and $\eta_{\min}^\riem(\Phi)$ for the contraction
coefficients associated with the largest $\kmax(x)=(1+x)/2x$ and smallest $\kmin(x)=2/(1+x)$
functions in $\cK$, while these need not be the largest and smallest contraction coefficients
for a given $\Phi$. (Indeed, Proposition \ref{prop:semi-classical} below suggests that this
is not true in general.)


We write $D_\kappa(\rho,\gamma)$ for the {\it geodesic distance} with respect to $M^\kappa$;
namely,
\begin{equation}\label{geod}
D_\kappa(\rho,\gamma)
\equiv \inf_\xi\int_0^1\sqrt{\bigl\<\xi'(t),\Omega_{\xi(t)}^\kappa(\xi'(t))\bigr\>}\,dt,
\end{equation}
where the infimum is taken over all (piecewise) smooth curves joining $\rho,\gamma$ in $\cD_d$.
The monotonicity of $M^\kappa$ obviously implies that
$$
D_\kappa(\Phi(\rho),\Phi(\gamma))\le D_\kappa(\rho,\gamma)
$$
for every CPT map $\Phi:\bM_d\to\bM_{d'}$. The {\it contraction coefficient} of $\Phi$
with respect to the geodesic distance $D_\kappa$ is then defined by
\be \label{eta-geod}
\eta_\kappa^\geod(\Phi)\equiv \Biggl[\sup_{\rho,\gamma\in\cD_d,\,\rho\ne\gamma}
{D_\kappa(\Phi(\rho),\Phi(\gamma))\over D_\kappa(\rho,\gamma)}\Biggr]^2
\quad(\le1).
\ee
We will prove the equality $\eta_\kappa^\riem(\Phi)=\eta_\kappa^\geod(\Phi)$ in Section
\ref{sect:gene-theorems} while the inequality
$\eta_\kappa^\riem(\Phi)\ge\eta_\kappa^\geod(\Phi)$ was shown in \cite[Theorem IV.2]{LR}.

\begin{remark}\label{remark:geod} \rm
The space $\bP_d$ ($\supset\cD_d$) is a smooth Riemannian manifold with the tangent space
$\bH_d$ ($\supset\bH_d^0$). For each $\kappa\in\cK$ a Riemannian metric $M^\kappa$ on $\bP_d$
is defined by the same expression as \eqref{Riem} for $A,B\in\bH_d$ and $\rho\in\bP_d$. For
any CPT map $\Phi$, if $A=\rho\in\cD_d$ then we have
$$
{\bigl\<\Phi(A),\Omega_{\Phi(\rho)}^\kappa(\Phi(A))\bigr\>\over\<A,\Omega_\rho^\kappa(A)\>}
={\Tr\Phi(\rho)\over\Tr\rho}=1,
$$
which says that the contraction coefficient of $\Phi$ with respect to $M^\kappa$ is meaningless
if it is defined with $\bH_d$ instead of $\bH_d^0$.

For any pair $\rho,\gamma\in\cD_d$, in addition to  \eqref{geod}
one can consider the geodesic distance $\widetilde D_\kappa(\rho,\gamma)$ with the same
expression but taken over all smooth curves joining $\rho,\gamma$ in $\bP_d$ without
confining them to
$\cD_d$. The difference between the manifolds $\cD_d$ and $\bP_d$ implies that
$\widetilde D_\kappa(\rho,\gamma) \leq D_\kappa(\rho,\gamma)$.
Moreover,  for every $\kappa\in\cK$ this inequality will be strict when $\rho\gamma=\gamma\rho$
and $\rho\ne\gamma$.  In this case
$D_\kappa(\rho,\gamma)$ and $\widetilde D_\kappa(\rho,\gamma)$ are independent of the choice
of $\kappa\in\cK$ so that the formulas \eqref{geod-WY} and \eqref{geod2-WY} of the next section
hold for any $\kappa\in\cK$ as does the inequality \eqref{strict.geod}.
Indeed, one can apply the monotonicity of $M^\kappa$ to the trace-preserving
conditional expectation onto a commutative subalgebra $\cA$ containing $\rho,\sigma$ to see
that curves $\xi$ can be confined, in the definition \eqref{geod}, to those inside $\cA$.
(See also the proof of Lemma \ref{lemma:trace-Riem}.)

\end{remark}

%

\section{Examples}

Basic examples of $g$-divergences and monotone metrics are in order here. Further discussions
and results for these cases will be later given in Section 7.

\begin{exam} {\rm (Relative entropy and BKM metric)\enspace The function $g(x)=x\log x \in\cG$
gives the (usual) {\it relative entropy}, i.e.,
$$
H_{x\log x}(\rho,\gamma)=H(\rho,\gamma)\equiv \Tr \rho(\log \rho-\log \gamma).
$$
Moreover, $\wtd{g}(x)=-\log x$ and $H_{-\log x}(\rho,\gamma)=H(\gamma,\rho)$. The symmetrization
$$
g_\BKM(x)\equiv x \log x - \log x\in\cG_\sym
$$
corresponds to the function
$$
\kappa_\BKM(x)\equiv {\log x\over x-1}\in\cK,
$$
which gives
\begin{equation}\label{Omega:BKM}
\Omega_\rho^\BKM(X)={\log L_\rho-\log R_\rho\over L_\rho-R_\rho}(X)
=\int_0^\infty{1\over\rho+tI}\,X\,{1\over\rho+tI}\,dt.
\end{equation}
The corresponding monotone metric is the so-called {\it Bogolieubov} (or {\it Kubo-Mori})
{\it metric}. We write $\eta_\BKM^\riem(\Phi)$ and $\eta_\BKM^\relent(\Phi)$ for the contraction
coefficients associated with $\kappa_\BKM$ and $g_\BKM$. Rather surprisingly, it will be shown
in Theorem~\ref{thm:BKM} that $\eta_\BKM^\riem(\Phi)=\eta_\BKM^\relent(\Phi)$ holds for every
CPT map $\Phi$. However, we know from Theorem~\ref{thm:relent.neq.riem} that this property does
not hold in general.
} \end{exam}

\begin{exam} \label{ex:max} {\rm (Maximal metric)\enspace  The function $g(x) = (x-1)^2 \in \cG$
yields the quadratic relative entropy  
\begin{equation}\label{H-quad}
H_{(x-1)^2}(\rho,\gamma) = \Tr (\rho - \gamma)^2\gamma^{-1} = \Tr \rho^2 \gamma^{-1} - 1.
\end{equation}
The function $g_{\max}(x)\equiv(x-1)^2(1+x)/2x$ in $\cG_\sym$ is the symmetrization of $g$,
which corresponds to the largest function $\kmax$ in $\cK$. The function $\kmax$ defines the
largest monotone metric with
$$
\Omega_\rho^{\max}(X)={L_\rho^{-1}+R_\rho^{-1}\over2}(X)
={\rho^{-1}X+X\rho^{-1}\over2}.
$$
Note that for the  choice $ X = \rho - \gamma $,
\be  \label{quadspec}
\bigl\<\rho-\sigma, \Omega^{\max}_{\rho}(\rho-\sigma)\bigr\> = H_{(x-1)^2}(\gamma,\rho)
\ee
} \end{exam}

\begin{exam} \label{ex:pow} {\rm (Central power metric)\enspace
The function $x^{-1/2}\in\cK$ gives
$$
\Omega_\rho^{x^{-1/2}}(X)=\rho^{-1/2}X\rho^{-1/2},
$$
which may be considered as the center of $\cK$ from some aspects. For instance, it is known
\cite[Theorem 3.5]{HKPR} that $x^{-1/2}$ is the only function $\kappa\in\cK$ such that both
$\Omega_\rho^\kappa$ and $(\Omega_\rho^\kappa)^{-1}$ are CP for all $\rho\in\cD_d$. The function
in $\cG_\sym$ corresponding to $x^{-1/2}$ is $(x-1)^2x^{-1/2}$ and the corresponding divergence
is
$$
H_{(x-1)^2x^{-1/2}}(\rho,\gamma)=\Tr(\rho-\gamma)\rho^{-1/2}(\rho-\gamma)\gamma^{-1/2}.
$$
} \end{exam}

\begin{exam} \label{ex:WYD} {\rm (Wigner-Yanase-Dyson metric)\enspace
For any $t\in(0,1)\cup(1,2]$ define
\begin{equation}\label{g^t}
g^{(t)}(x)\equiv{x-x^t\over t(1-t)}\in\cG,
\end{equation}
whose symmetrized function corresponds to
\begin{equation}\label{k-WYD}
\kappa_t^\WYD(x)\equiv{1\over t(1-t)}\cdot{(1-x^t)(1-x^{1-t})\over(1-x)^2}\in\cK.
\end{equation}
Note that the functions $\kappa_t^\WYD$ extend to the parameter $t\in[-1,2]$ with
$\kappa_t^\WYD=\kappa_\BKM$ for $t=0,1$ by taking the limit as $t\to0,1$ and with
$\kappa_t^\WYD=\kappa_{1-t}^\WYD$ as symmetric around $t=1/2$. For the particular case $t=1/2$
the monotone metric associated with $\kappa_\WY\equiv\kappa_{1/2}^\WYD$ is called the
{\it Wigner-Yanase metric} with
$$
\Omega_\rho^\WY={4\over\bigl(\sqrt{L_\rho}+\sqrt{R_\rho}\bigr)^2},
$$
and the divergence for $g^{(1/2)}(x)=4(x-\sqrt x)$ is
$$
H_{4(x-\sqrt x)}(\rho,\gamma)=4(1-\Tr\rho^{1/2}\gamma^{1/2}).
$$

It seems that Hasegawa \cite{Has} was the first to realize the WYD metric as well as the WYD
divergences could be extended to the full parameter range $[-1,2]$. See also \cite{JR} where
equality conditions were given for the convexity of $g$-divergences for the WYD functions.

It is known \cite[Theorem 5.4]{GI} that for every $\rho,\gamma\in\cD_d$ the geodesic distance
$D_\WY(\rho,\gamma)$ with respect to the metric for $\kappa_\WY$ is given as
\begin{equation}\label{geod-WY}
D_\WY(\rho,\gamma)=\arccos\Tr\rho^{1/2}\gamma^{1/2}.
\end{equation}
On the other hand, the geodesic distance $\widetilde D_\WY(\rho,\gamma)$ taken over curves in
$\bP_d$ (see Remark \ref{remark:geod}) is included in \cite[Theorem 2.1]{HP0} and we have
\begin{equation}\label{geod2-WY}
\widetilde D_\WY(\rho,\gamma)=\big\|\rho^{1/2}-\gamma^{1/2}\big\|_2
=\sqrt{2-2\Tr\rho^{1/2}\gamma^{1/2}}.
\end{equation}
Since $\sqrt{2-2t}<\arccos t$ for $0\le t<1$, we see that
\be    \label{strict.geod}
\widetilde D_\WY(\rho,\gamma)<D_\WY(\rho,\gamma)
\ee
 unless $\rho=\gamma$.
} \end{exam}

\begin{exam} \label{ex:min} {\rm (Minimal or Bures metric)\enspace
The smallest function $\kmin(x)=2/(1+x)$ in $\cK$ defines the smallest monotone metric with
\begin{equation}\label{Omega:min}
\Omega_\rho^{\min}(X)={2\over L_\rho+R_\rho}(X)
=2\int_0^\infty e^{-t\rho}Xe^{-t\rho}\,dt,
\end{equation}
which is often called the {\it SLD metric} (symmetric logarithmic derivative). This is
considered as the infinitesimal form of the {\it Bures distance} introduced in \cite{Bur} and
was intensively studied by Uhlmann, e.g., \cite{Uh2}, so it is also called the {\it Bures}
or {\it Bures-Uhlmann metric}. The corresponding function in $\cG_\sym$ is
$g_{\min}(x)\equiv2(x-1)^2/(x+1)$ and the corresponding divergence is
$$
H_{\min}(\rho,\gamma)
=\biggl\<\rho-\gamma,{2\over L_\rho+R_\gamma}(\rho-\gamma)\biggr\>.
$$

Recall that the Bures distance \cite{Bur} between $\rho,\gamma\in\overline{\cD}_d$ is
\begin{equation}\label{dist-Bures}
d_\Bures(\rho,\gamma)\equiv\sqrt{2-2F(\rho,\gamma)},
\end{equation}
where $F(\rho,\gamma)\equiv\Tr(\rho^{1/2}\gamma\rho^{1/2})^{1/2}$ is the {\it fidelity} of
$\rho,\gamma$. It is known (see \cite{Uh3} and \cite[(9.32)]{BZ}) that the geodesic distance
between $\rho,\gamma\in\cD_d$ with respect to the Bures metric is given as
\begin{equation}\label{geod-Bures}
D_{\min}(\rho,\gamma)=\arccos F(\rho,\gamma).
\end{equation}
Since $\Tr\rho^{1/2}\gamma^{1/2}<F(\rho,\gamma)$ unless $\rho\gamma=\gamma\rho$
(see \cite[Corollary 3.4]{FS} and \cite[Theorem 2.1]{Hi0}), by comparing \eqref{geod-WY} and
\eqref{geod-Bures} one can see that $D_\WY(\rho,\gamma)>D_{\min}(\rho,\gamma)$ whenever
$\rho\gamma\ne\gamma\rho$.

} \end{exam}

\section{Trace Distance and Eigenvalue Formulation}

\subsection{Trace distance}

The most widely used distance for density matrices is the trace-norm distance
$\|\rho-\gamma\|_1=\Tr|\rho-\gamma|$. The next trace-norm monotonicity property is a slight
extension of \cite[Theorem 1]{Ru}. We give a proof for completeness.

\begin{prop}
Let $\Phi:\bM_d\to\bM_{d'}$ be a positive (not necessarily CP) trace-preserving map. Then
\begin{equation}\label{trace-cont}
\|\Phi(X)\|_1\le\|X\|_1,\qquad X\in\bM_d.
\end{equation}
\end{prop}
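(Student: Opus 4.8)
The plan is to prove the trace-norm contraction $\|\Phi(X)\|_1 \le \|X\|_1$ for a positive trace-preserving map $\Phi:\bM_d\to\bM_{d'}$ by reducing to the Hermitian case and then using the Jordan decomposition together with positivity and trace preservation. First, observe that it suffices to treat $X \in \bH_d$: for a general $X \in \bM_d$ one cannot immediately reduce, so instead I would argue directly on the operator norm of the adjoint. Actually, the cleanest route is via duality. Since $\|Y\|_1 = \sup\{|\Tr Y^* Z| : Z \in \bM_{d'},\ \|Z\|_\infty \le 1\}$, and one may restrict the supremum to $Z$ unitary (or, after a further reduction, to $Z \in \bH_{d'}$ with $-I \le Z \le I$), we have $\|\Phi(X)\|_1 = \sup |\Tr \Phi(X)^* Z| = \sup |\<X, \widehat\Phi(Z)\>|$ over such $Z$. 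Thus the statement is equivalent to showing that $\widehat\Phi$ maps the operator-norm unit ball into itself, i.e. $\|\widehat\Phi(Z)\|_\infty \le 1$ whenever $\|Z\|_\infty \le 1$; equivalently, $\widehat\Phi$ is unital-dominated in the sense $-I_d \le \widehat\Phi(Z) \le I_d$ for $-I_{d'} \le Z \le I_{d'}$.

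The key steps are then: (1) record that $\Phi$ trace-preserving means $\widehat\Phi$ is unital, $\widehat\Phi(I_{d'}) = I_d$, since $\Tr \Phi(X) = \Tr X = \<I_d, X\> $ forces $\<\widehat\Phi(I_{d'}), X\> = \<I_d, X\>$ for all $X$; (2) record that $\Phi$ positive (as a map sending $\overline\bP_d$ into $\overline\bP_{d'}$) implies $\widehat\Phi$ is positive as well, because $\<\widehat\Phi(Z), A\> = \<Z, \Phi(A)\> \ge 0$ for $A \ge 0$ and $Z \ge 0$; in particular $\widehat\Phi$ preserves Hermiticity; (3) combine: if $-I_{d'} \le Z \le I_{d'}$ with $Z$ Hermitian, then $I_{d'} - Z \ge 0$ and $I_{d'} + Z \ge 0$, so applying the positive unital map $\widehat\Phi$ gives $I_d - \widehat\Phi(Z) \ge 0$ and $I_d + \widehat\Phi(Z) \ge 0$, hence $\|\widehat\Phi(Z)\|_\infty \le 1$. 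Feeding this back through the duality in the previous paragraph yields $\|\Phi(X)\|_1 \le \|X\|_1$ for all $X \in \bM_d$.

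An alternative, more hands-on route avoiding duality is to reduce directly to Hermitian $X$: write a general $X = X_1 + iX_2$ with $X_j$ Hermitian — but this does not respect the trace-norm well, so I would instead just invoke the standard fact that for the trace norm the supremum over contractions $Z$ can be restricted, after polar-type considerations, to self-adjoint contractions, which is what makes the duality argument clean. The honest reduction to the Hermitian case works if one first establishes $\|\Phi(X)\|_1 = \|\Phi(X^*)\|_1$ and handles real and imaginary parts, but controlling cross terms is exactly the subtlety, so the duality formulation is preferable.

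The main obstacle is the passage from a bound on Hermitian inputs to arbitrary $X \in \bM_d$: the naive splitting $X = \Re X + i\,\Im X$ gives only $\|\Phi(X)\|_1 \le \|\Phi(\Re X)\|_1 + \|\Phi(\Im X)\|_1 \le \|\Re X\|_1 + \|\Im X\|_1$, which can exceed $\|X\|_1$. The duality argument circumvents this by testing against self-adjoint contractions $Z$ in $\bM_{d'}$ and moving to $\widehat\Phi$, where positivity and unitality of $\widehat\Phi$ do all the work; verifying that the trace-norm of an arbitrary matrix equals the supremum of $|\Tr Y^*Z|$ over self-adjoint contractions $Z$ (rather than merely unitary or general contractions) is the one nontrivial point, and it follows from writing $Y = U|Y|$ and choosing $Z$ built from the partial isometry $U$. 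Once that is in place, the proof is a two-line application of the positive-unital argument.
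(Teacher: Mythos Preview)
Your duality approach is the same as the paper's, but there is a genuine gap in how you handle non-Hermitian $X$. The claim that for an arbitrary $Y\in\bM_{d'}$ the trace norm $\|Y\|_1$ equals the supremum of $|\Tr Y^*Z|$ over \emph{self-adjoint} contractions $Z$ is false. Take $Y=\diag(1,i)\in\bM_2$: then $\|Y\|_1=2$, but for Hermitian $Z$ with $\|Z\|_\infty\le1$ one has $\Tr Y^*Z=Z_{11}-iZ_{22}$ with $Z_{11},Z_{22}\in[-1,1]$ real, so $|\Tr Y^*Z|\le\sqrt2<2$. The polar partial isometry $U$ in $Y=U|Y|$ is not Hermitian in general, so ``choosing $Z$ built from $U$'' does not yield a self-adjoint tester.

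Consequently your step (3), which bounds $\|\widehat\Phi(Z)\|_\infty$ only for Hermitian $Z$, is not enough to control $\|\Phi(X)\|_1$ when $X$ is not Hermitian. What is actually needed is $\|\widehat\Phi(Z)\|_\infty\le\|Z\|_\infty$ for \emph{all} $Z\in\bM_{d'}$, and this is precisely the Russo--Dye theorem for positive unital maps. The paper's proof invokes Russo--Dye at exactly this point and then finishes with the same duality computation you wrote down:
\[
\|\Phi(X)\|_1=\sup_{\|Z\|_\infty\le1}|\<X,\widehat\Phi(Z)\>|
\le\sup_{\|Y\|_\infty\le1}|\<X,Y\>|=\|X\|_1.
\]
Your argument is complete and correct for Hermitian $X$ (since then $\Phi(X)$ is Hermitian and the dual supremum \emph{is} attained on a self-adjoint $Z$), but to cover general $X$ you must either invoke Russo--Dye or supply an independent proof that positive unital maps are $\|\cdot\|_\infty$-contractive on the full algebra.
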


\begin{proof}
Note that the adjoint map $\widehat\Phi:\bM_{d'}\to\bM_d$ is positive and unital. Due to
the Russo-Dye theorem (see, e.g., \cite[Theorem 2.3.7]{Bh2}) we have
$$
\|\widehat\Phi(Z)\|_\infty\le\|Z\|_\infty,\qquad Z\in\bM_{d'},
$$
where $\|\cdot\|_\infty$ denotes the operator norm. Since $\|\cdot\|_1$ is the dual norm
of $\|\cdot\|_\infty$, we have the asserted inequality as follows:
\begin{align*}
\|\Phi(X)\|_1
&=\sup\bigl\{|\<X,\widehat\Phi(Z)\>|:Z\in\bM_{d'},\,\|Z\|_\infty\le1\bigr\} \\
&\le\sup\bigl\{|\<X,Y\>|:Y\in\bM_d,\,\|Y\|_\infty\le1\bigr\}=\|X\|_1
\end{align*}
for every $X\in\bM_d$.
\end{proof}

The {\it contraction coefficient} of a positive trace-preserving map $\Phi$ with respect to
the trace-norm distance is defined by
\be \label{eta-Tr}
\eta^\dob(\Phi)=\eta^{\mathrm{Tr}}(\Phi)\equiv \sup_{\rho,\gamma\in\cD_d,\,\rho\ne\gamma}
{\|\Phi(\rho)-\Phi(\gamma)\|_1\over\|\rho-\gamma\|_1}
=\sup_{A\in\bH_d^0,\,A\ne0}{\|\Phi(A)\|_1\over\|A\|_1}\quad(\le1),
\ee
which is the quantum generalization of the classical {\it Dobrushin coefficient of
ergodicity}. It was shown in \cite[Theorem 2]{Ru} that
\begin{equation}\label{Ruskai-form}
\eta^{\mathrm{Tr}}(\Phi)={1\over2}\sup\bigl\{\|\Phi(E-F)\|_1:
E,F\in\bM_d,\,\mbox{rank $1$ projections},\,E\perp F\bigr\}.
\end{equation}

 Next we extend the notion of {\it scrambling} column-stochastic
matrices in \cite{CIRRSZ} to the matrix algebra setting.  
\begin{prop}\label{prop:scram}
Let $\Phi:\bM_d\to\bM_{d'}$ be a positive trace-preserving map. Then the following
conditions are equivalent:
\begin{itemize}
\item[\rm(i)] $\eta^{\mathrm{Tr}}(\Phi)<1$;
\item[\rm(ii)] for every rank $1$ projections $E,F\in\bM_d$ with $E\perp F$,
$\Tr\Phi(E)\Phi(F)>0$;
\item[\rm(iii)] for every non-zero $A,B\in\overline\bP_d$, $\Tr\Phi(A)\Phi(B)>0$.
\end{itemize}
\end{prop}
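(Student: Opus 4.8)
The plan is to prove the equivalences by a cycle, say (i) $\Rightarrow$ (ii) $\Rightarrow$ (iii) $\Rightarrow$ (i), exploiting the trace-norm identity \eqref{Ruskai-form} to reduce everything to rank-one projections, and then linearizing the positivity condition via the adjoint $\widehat\Phi$ (which is positive and unital by the proof of the preceding proposition).

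First I would handle (i) $\Rightarrow$ (ii) by contraposition. Suppose there exist orthogonal rank-one projections $E \perp F$ with $\Tr\Phi(E)\Phi(F) = 0$. Since $\Phi(E),\Phi(F) \in \overline\bP_{d'}$ with unit trace, this forces $\Phi(E)$ and $\Phi(F)$ to have orthogonal supports, hence $\|\Phi(E) - \Phi(F)\|_1 = \|\Phi(E)\|_1 + \|\Phi(F)\|_1 = 2$. Plugging into \eqref{Ruskai-form} gives $\eta^{\mathrm{Tr}}(\Phi) \ge \tfrac12 \cdot 2 = 1$, so with the a priori bound $\eta^{\mathrm{Tr}}(\Phi)\le 1$ we get $\eta^{\mathrm{Tr}}(\Phi) = 1$, i.e., \rm(i) fails. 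Conversely, for (ii) $\Rightarrow$ (i), the natural route is a compactness argument: the set of pairs $(E,F)$ of orthogonal rank-one projections is compact, and $(E,F) \mapsto \|\Phi(E-F)\|_1$ is continuous; if \rm(ii) holds, then $\|\Phi(E-F)\|_1 < 2$ for every such pair (by the same support-orthogonality reasoning in reverse — $\Tr\Phi(E)\Phi(F)>0$ prevents the supports from being orthogonal), so the supremum in \eqref{Ruskai-form} is attained and strictly less than $2$, giving $\eta^{\mathrm{Tr}}(\Phi) < 1$. To keep the cycle clean I would instead route (iii) $\Rightarrow$ (i) this way and get (ii) $\Rightarrow$ (iii) for free.

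For (ii) $\Rightarrow$ (iii): given non-zero $A, B \in \overline\bP_d$, write their spectral decompositions $A = \sum_i a_i E_i$, $B = \sum_j b_j F_j$ with $a_i, b_j > 0$ and $E_i, F_j$ rank-one projections (splitting higher-rank eigenprojections into rank-one pieces). Then $\Tr\Phi(A)\Phi(B) = \sum_{i,j} a_i b_j \Tr\Phi(E_i)\Phi(F_j)$, a sum of nonnegative terms by positivity of $\Phi$. It suffices that at least one term is strictly positive. Pick any $E_i$ from $A$'s support; if $E_i \perp F_j$ for all $j$, then \rm(ii) gives $\Tr\Phi(E_i)\Phi(F_j) > 0$ for each such $j$, done; if some $F_j$ is not orthogonal to $E_i$, then $\Tr E_i F_j > 0$, and I would argue $\Tr\Phi(E_i)\Phi(F_j) > 0$ directly — e.g. using that $\widehat\Phi$ is unital so $\Tr\Phi(E_i) F_j' $-type estimates, or more simply: $E_i$ and $F_j$ are not orthogonal means $F_j \not\le I - E_i$, and one can perturb to reduce to the orthogonal case, or invoke that a positive trace-preserving map sends a rank-one projection to a density matrix of full trace whose overlap with $\Phi(F_j)$ is controlled. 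The cleanest version: since $\Tr E_i F_j>0$ we may write $F_j = c E_i + $ (positive part orthogonal-ish), but to be safe I would reduce to orthogonal projections by noting $\Tr\Phi(A)\Phi(B) \ge \min\{a_i\}\min\{b_j\} \sum_{i,j}\Tr\Phi(E_i)\Phi(F_j) \ge \min\{a_i\}\min\{b_j\}\,\Tr\Phi(P)\Phi(Q)$ where $P = \supp A$, $Q=\supp B$, and then if $PQ \ne 0$ there is a common-support direction making the trace positive, while if $PQ = 0$ we are in the orthogonal case and can pull out rank-one sub-projections to apply \rm(ii).

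\textbf{Expected main obstacle.} The delicate point is the implication (ii) $\Rightarrow$ (iii) in the case where the supports of $A$ and $B$ are \emph{not} orthogonal: \rm(ii) as stated only gives information about orthogonal rank-one projections, so one must show that non-orthogonality of $E_i$ and $F_j$ already forces $\Tr\Phi(E_i)\Phi(F_j)>0$ without any hypothesis — this should follow because $\widehat\Phi$ is positive and unital, hence $\Phi(E_i) $ and $\Phi(F_j)$ cannot have orthogonal supports when $E_i, F_j$ don't (a positive map cannot decrease the "overlap" all the way to zero here, e.g. because $\Tr\Phi(E_i)\Phi(F_j) = \Tr E_i \widehat\Phi(\Phi(F_j))$ and $\widehat\Phi(\Phi(F_j))$ is a positive operator whose support must meet that of $E_i$). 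Making that last step rigorous — ruling out that a positive map collapses a nonzero overlap to zero — is where I would spend the most care; everything else is bookkeeping with \eqref{Ruskai-form} and spectral decompositions.
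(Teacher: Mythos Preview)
Your cycle (i) $\Rightarrow$ (ii) $\Rightarrow$ (iii) $\Rightarrow$ (i) works, and the compactness argument for (iii) $\Rightarrow$ (i) is exactly what the paper uses. The only place you struggle is (ii) $\Rightarrow$ (iii), and your ``expected main obstacle'' --- showing $\Tr\Phi(E_i)\Phi(F_j)>0$ when $E_i\not\perp F_j$ without any hypothesis --- actually has a one-line resolution that you just missed: for rank-one projections, $E_i\not\perp F_j$ means $\|E_i-F_j\|_1<2$, so by the trace-norm contraction of the preceding proposition $\|\Phi(E_i)-\Phi(F_j)\|_1<2$, which forces $\Tr\Phi(E_i)\Phi(F_j)>0$. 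Your attempts via $\widehat\Phi$ and support considerations are unnecessary.

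The paper, however, avoids this step entirely by reordering the cycle: it proves (i) $\Leftrightarrow$ (ii) via compactness (as you do), notes (iii) $\Rightarrow$ (ii) is trivial, and then proves (i) $\Rightarrow$ (iii) directly. Given nonzero $A,B\in\overline\bP_d$, set $\rho=A/\Tr A$ and $\gamma=B/\Tr B$; if $\rho=\gamma$ the conclusion is immediate, and if $\rho\ne\gamma$ then (i) gives $\|\Phi(\rho)-\Phi(\gamma)\|_1<\|\rho-\gamma\|_1\le2$, hence $\Tr\Phi(\rho)\Phi(\gamma)>0$. This is cleaner than your route: no spectral decomposition, no case analysis on whether the rank-one pieces are orthogonal, and it uses the hypothesis $\eta^{\mathrm{Tr}}(\Phi)<1$ in the most direct way possible. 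Your approach has the minor advantage of showing that the non-orthogonal overlap is preserved unconditionally (a fact independent of (ii)), but for the purpose of the proposition the paper's routing is more economical.
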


\begin{proof}
We denote by $\Sigma$ the set of pairs $(E,F)$ of rank $1$ projections in $\bM_d$ with
$E\perp F$. Since $\Sigma$ is compact in $\bM_d\times\bM_d$, there is an
$(E_0,F_0)\in\Sigma$ such that
$\|\Phi(E_0)-\Phi(F_0)\|_1=\sup_{(E,F)\in\Sigma}\|\Phi(E)-\Phi(F)\|_1$. Here, note that,
for any $\rho,\gamma\in\overline\cD_d$, $\|\rho-\gamma\|_1=2$ if and only if the support
projections of $\rho,\gamma$ are orthogonal, if and only if $\Tr\rho\gamma=0$. Hence
(i) $\Leftrightarrow$ (ii) is immediate from \eqref{Ruskai-form}.
That (iii) $\Rightarrow$ (ii) is trivial. To prove that (i) $\Rightarrow$ (iii), assume
(i) and let $A,B\in\overline\bP_d$ be non-zero. Set $\rho\equiv A/\Tr A$ and
$\gamma\equiv B/\Tr B$; then $\rho,\gamma\in\overline\cD_d$ and so
$\Phi(\rho),\Phi(\gamma)\in\overline\cD_{d'}$. If $\rho=\gamma$, then
$\Tr\Phi(A)\Phi(B)>0$ is clear. If $\rho\ne\gamma$, then assumption (i) implies that
$\|\Phi(\rho)-\Phi(\gamma)\|_1<\|\rho-\gamma\|_1\le2$. Hence we have
$\Tr\Phi(\rho)\Phi(\gamma)>0$, i.e., $\Tr\Phi(A)\Phi(B)>0$.
\end{proof}

\subsection{Eigenvalue formulation}

We here summarize for the convenience of the reader, an observation in \cite{LR} to link
the Riemannian metric coefficient with an eigenvalue problem, which is of some interest in
its own right. Moreover, it allows us to connect  $\eta^\tr(\Phi) $ with $\eta_g^\riem(\Phi)$
in some very special cases, as in Theorem~\ref{thm:Riem-trace} below.

Let $\rho\in\cD_d$ and $\Phi:\bM_d\to\bM_{d'}$ be a CPT map. Define a linear map
$\Psi:\bM_d\to\bM_{d'}$ by
$$
\Psi=\Psi_{\Phi,\rho}^\kappa
\equiv (\Omega_{\Phi(\rho)}^\kappa)^{1/2}\Phi(\Omega_\rho^\kappa)^{-1/2}.
$$
Since $\Psi$ is a contraction and
$\widehat\Psi\Psi(\Omega_\rho^\kappa)^{-1/2}(I_d)=(\Omega_\rho^\kappa)^{-1/2}(I_d)$,
note that $(\Omega_\rho^\kappa)^{-1/2}(I_d)$ is an eigenvector of $\widehat\Psi\Psi$
corresponding to the largest eigenvalue $1$. Let $\lambda_2^\kappa(\Phi,\rho)$ denote the
second largest eigenvalue (with multiplicities counted) of $\widehat\Psi\Psi$. Then
$\lambda_2^\kappa(\Phi,\rho)$ is represented as
\begin{equation}\label{lambda_2}
\lambda_2^\kappa(\Phi,\rho)
=\sup_{X\in\bM_d^0,\,X\ne0}{\bigl\<\Phi(X),\Omega_{\Phi(\rho)}^\kappa(\Phi(X))\bigr\>
\over\bigl\<X,\Omega_\rho^\kappa(X)\bigr\>},
\end{equation}
where $\bM_d^0\equiv \{X\in\bM_d:\Tr X=0\}$. From the fact that $\Omega_\rho^\kappa(A)\in\bH_d$
and $\<A,\Omega_\rho^\kappa(B)\>=\<B,\Omega_\rho^\kappa(A)\>$ for all
$A,B\in\bH_d$, one can easily see that the right-hand side of \eqref{lambda_2} coincides
with $\eta_\kappa^\riem(\Phi)$. Therefore, we have

\begin{thm}\label{thm:Riem-lambda}{\bf\cite[Theorem IV.4]{LR}}\enspace
For every $\kappa\in\cK$ and every CPT map $\Phi$,
$$
\eta_\kappa^\riem(\Phi)=\sup_{\rho\in\cD_d}\lambda_2^\kappa(\Phi,\rho).
$$
\end{thm}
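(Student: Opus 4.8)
The plan is to unwind the two constructions—the contraction coefficient $\eta_\kappa^\riem(\Phi)$ from \eqref{eta-Riem} and the second-largest eigenvalue $\lambda_2^\kappa(\Phi,\rho)$ of $\widehat\Psi\Psi$—and show they agree after taking the supremum over $\rho\in\cD_d$. Fix $\rho\in\cD_d$ and write $\Psi=\Psi_{\Phi,\rho}^\kappa=(\Omega_{\Phi(\rho)}^\kappa)^{1/2}\Phi(\Omega_\rho^\kappa)^{-1/2}$, viewed as an operator between the Hilbert spaces $(\bM_d,\<\cdot,\cdot\>)$ and $(\bM_{d'},\<\cdot,\cdot\>)$ (identifying the range with $\Pi\bM_{d'}\Pi$ via the support projection $\Pi=\supp\Phi(I_d)$ when $\Phi(\rho)$ fails to be invertible, exactly as in Remark before \eqref{Riem-mono}). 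First I would verify that $\Psi$ is a contraction: this is the statement $\widehat\Psi\Psi\le I$, equivalently $\<\Phi(Y),\Omega_{\Phi(\rho)}^\kappa(\Phi(Y))\>\le\<Y,\Omega_\rho^\kappa(Y)\>$ for the substitution $Y=(\Omega_\rho^\kappa)^{-1/2}X$, which is precisely the monotonicity \eqref{Riem-mono} of the metric $M^\kappa$ (extended to non-Hermitian arguments by writing $X=A+iB$ with $A,B$ Hermitian and using that $\Omega_\rho^\kappa$ carries $\bH_d$ to $\bH_d$ and is a positive operator on $\bM_d$). Next I would check that $v_0\equiv(\Omega_\rho^\kappa)^{-1/2}(I_d)$ is a unit-eigenvalue eigenvector: since $\Phi$ is trace-preserving, $\widehat\Phi(I_{d'})=I_d$, and $\Omega_{\Phi(\rho)}^\kappa(I_{d'})=\Phi(\rho)^{-1}$ (using $\kappa(1)=1$), a short computation gives $\widehat\Psi\Psi v_0 = v_0$, so $1$ is the top eigenvalue of the positive contraction $\widehat\Psi\Psi$ and $v_0$ spans (part of) its eigenspace.

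The heart of the argument is the variational identity \eqref{lambda_2}. By the min-max (Courant–Fischer) characterization, $\lambda_2^\kappa(\Phi,\rho)=\sup\{\|\Psi v\|^2/\|v\|^2 : v\perp v_0,\ v\ne0\}$, the supremum running over $v\in\bM_d$. Now $v\perp v_0$ means $\<v,(\Omega_\rho^\kappa)^{-1/2}(I_d)\>=0$, i.e. $\<(\Omega_\rho^\kappa)^{-1/2}v,I_d\>=0$, i.e. $\Tr\bigl((\Omega_\rho^\kappa)^{-1/2}v\bigr)=0$; so setting $X=(\Omega_\rho^\kappa)^{-1/2}v$ gives a bijection between $\{v\perp v_0\}$ and $\bM_d^0=\{X:\Tr X=0\}$, under which $\|v\|^2=\<X,\Omega_\rho^\kappa(X)\>$ and $\|\Psi v\|^2=\<\Phi(X),\Omega_{\Phi(\rho)}^\kappa(\Phi(X))\>$. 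This yields \eqref{lambda_2}. It then remains to see that restricting the Rayleigh quotient in \eqref{lambda_2} from $\bM_d^0$ down to $\bH_d^0$ does not change the supremum; this is the step flagged in the excerpt ("one can easily see"). Writing $X=A+iB$ with $A,B\in\bH_d$ and $\Tr A=\Tr B=0$, both numerator and denominator are real quadratic forms in $X$ built from the self-adjoint positive operators $\Omega_\rho^\kappa$ and $\Omega_{\Phi(\rho)}^\kappa$ (self-adjointness on $\bM_d$ follows from $\<A,\Omega_\rho^\kappa(B)\>=\<B,\Omega_\rho^\kappa(A)\>$ on Hermitians plus sesquilinear extension), and the cross terms $\<A,\Omega_\rho^\kappa(B)\>$ are real, so the quotient is a weighted average over the $A$-part and the $B$-part; hence its supremum over complex $X\in\bM_d^0$ equals the supremum over the real parts, i.e. over $\bH_d^0$. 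That identifies the right-hand side of \eqref{lambda_2} with the inner supremum in \eqref{eta-Riem}.

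Finally, taking $\sup_{\rho\in\cD_d}$ of both sides of $\lambda_2^\kappa(\Phi,\rho)=\sup_{A\in\bH_d^0,A\ne0}\<\Phi(A),\Omega_{\Phi(\rho)}^\kappa(\Phi(A))\>/\<A,\Omega_\rho^\kappa(A)\>$ gives $\sup_\rho\lambda_2^\kappa(\Phi,\rho)=\eta_\kappa^\riem(\Phi)$ by the definition \eqref{eta-Riem}, completing the proof. The one genuine subtlety—the main obstacle—is the handling of the support projection $\Pi$ when $\Phi(\rho)$ is not invertible: one must confirm that $\Omega_{\Phi(\rho)}^\kappa$ and the contraction estimate make sense on $\Pi\bM_{d'}\Pi$ and that $\Phi(A)\in\Pi\bM_{d'}\Pi$ for all $A\in\bH_d^0$ (since $\supp\Phi(A)\le\supp\Phi(|A|)\le\Pi$), so that the Rayleigh quotients are well-defined and finite; granting the conventions already set up around \eqref{Riem-mono}, the rest is the routine linear-algebra bookkeeping sketched above.
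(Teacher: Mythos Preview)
Your proposal is correct and follows essentially the same route as the paper's own argument (which is the paragraph immediately preceding the theorem statement): establish the variational formula \eqref{lambda_2} via the change of variable $X=(\Omega_\rho^\kappa)^{-1/2}v$ on the orthogonal complement of $v_0$, then reduce the supremum from $\bM_d^0$ to $\bH_d^0$ using that $\Omega_\rho^\kappa$ preserves Hermiticity and is self-adjoint, and finally take $\sup_\rho$. Your write-up simply fills in the details the paper leaves implicit (the Courant--Fischer step, the cancellation of cross terms in $X=A+iB$, and the support-projection convention), so there is no substantive difference in approach.
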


Now assume that an eigenvector of $\widehat\Psi\Psi$ corresponding to
$\lambda_2\equiv \lambda_2^\kappa(\Phi,\rho)$ is given by $(\Omega_\rho^\kappa)^{1/2}(X)$,
orthogonal to $(\Omega_\rho^\kappa)^{-1/2}(I_d)$. Then $\Tr X=0$ and
$\widehat\Psi\Psi(\Omega_\rho^\kappa)^{1/2}(X)=\lambda_2(\Omega_\rho^\kappa)^{1/2}(X)$,
which is equivalently written as
\begin{equation}\label{eigen-prob}
(\Omega_\rho^\kappa)^{-1}\widehat\Phi\Omega_{\Phi(\rho)}^\kappa(\Phi(X))=\lambda_2X.
\end{equation}
Thus, finding $\lambda_2^\kappa(\Phi,\rho)$ is equivalent to solving the eigenvalue
problem \eqref{eigen-prob} under the constraint $\Tr X=0$. In connection with
\eqref{eigen-prob} we define a linear map $\Upsilon:\bM_{d'}\to\bM_d$ by
$$
\Upsilon=\Upsilon_{\Phi,\rho}^\kappa
\equiv (\Omega_\rho^\kappa)^{-1}\widehat\Phi\Omega_{\Phi(\rho)}^\kappa.
$$
Since $\widehat\Upsilon(I_d)=I_{d'}$, $\Upsilon$ is trace-preserving. Here, assume that
$\Upsilon$ is positive in the sense that $\Upsilon(\bP_{d'})\in\overline\bP_d$.
Then, thanks to \eqref{trace-cont}, $\|\Upsilon(Z)\|_1\le\|Z\|_1$ for all $Z\in\bM_{d'}$.
Therefore, if $X\in\bM_d^0$ is a solution of the eigenvalue equation \eqref{eigen-prob},
then we have $\lambda_2\|X\|_1=\|\Upsilon(\Phi(X))\|_1\le\|\Phi(X)\|_1$ so that
$\lambda_2\le\|\Phi(X)\|_1/\|X\|_1$. Since all linear maps involving in
\eqref{eigen-prob} are self-adjoint, note that a solution $X$ of \eqref{eigen-prob}
with $\Tr X=0$ can always be taken in $\bH_d^0$.

From the above argument, if $\Upsilon$ is positive for every $\rho\in\cD_d$, then we
would have $\eta_\kappa^\riem(\Phi)\le\eta^{\mathrm{Tr}}(\Phi)$. This situation indeed
occurs, in particular, when both $\Omega_\rho^\kappa$ and $(\Omega_\rho^\kappa)^{-1}$ are
positive (equivalently, CP) for every $\rho\in\cD_d$ and every $d\in\bN$. But it is
known \cite[Proposition 3.5]{HKPR} that this latter condition holds only when
$\kappa(x)=x^{-1/2}$. So we have

\begin{thm}\label{thm:Riem-trace}{\bf\cite[Theorem 14]{TKRWV}}\enspace
For every CPT map $\Phi$,
$$
\eta_{x^{-1/2}}^\riem(\Phi)\le\eta^{\mathrm{Tr}}(\Phi).
$$
\end{thm}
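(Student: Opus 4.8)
The plan is to run the eigenvalue argument already assembled in Section 4.2, specialized to the unique $\kappa$ for which both $\Omega_\rho^\kappa$ and its inverse are completely positive. First I would set $\kappa(x)=x^{-1/2}$, so that $\Omega_\rho^\kappa(X)=\rho^{-1/2}X\rho^{-1/2}$ as in Example \ref{ex:pow}, and note that both $\Omega_\rho^\kappa$ and $(\Omega_\rho^\kappa)^{-1}(X)=\rho^{1/2}X\rho^{1/2}$ are manifestly completely positive for every $\rho\in\cD_d$ (this is the content of \cite[Proposition 3.5]{HKPR}, but here it is immediate from the explicit sandwiched form). Fix a CPT map $\Phi:\bM_d\to\bM_{d'}$ and $\rho\in\cD_d$, and form the map
$$
\Upsilon=\Upsilon_{\Phi,\rho}^\kappa=(\Omega_\rho^\kappa)^{-1}\,\widehat\Phi\,\Omega_{\Phi(\rho)}^\kappa
:\bM_{d'}\to\bM_d.
$$

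Next I would check that $\Upsilon$ is positive (indeed CP) and trace-preserving: it is a composition of the CP maps $\Omega_{\Phi(\rho)}^\kappa$ (with $\Phi(\rho)$ read inside $\Pi\bM_{d'}\Pi$ as in Section 2.4), the CP map $\widehat\Phi$ (adjoint of a CP map is CP), and the CP map $(\Omega_\rho^\kappa)^{-1}$; and $\widehat\Upsilon(I_d)=I_{d'}$ was already observed in the text, so $\Upsilon$ is trace-preserving. Hence the trace-norm contraction Proposition \eqref{trace-cont} applies to $\Upsilon$, giving $\|\Upsilon(Z)\|_1\le\|Z\|_1$ for all $Z\in\bM_{d'}$. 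Now take any $\rho\in\cD_d$ and let $X\in\bH_d^0$ be a solution of the eigenvalue equation \eqref{eigen-prob} for the second eigenvalue $\lambda_2\equiv\lambda_2^\kappa(\Phi,\rho)$; as noted in the excerpt such an $X$ can be chosen Hermitian and traceless. Applying the chain $\Upsilon\circ\Phi$ to $X$ and using \eqref{eigen-prob} gives $\lambda_2 X=\Upsilon(\Phi(X))$, whence
$$
\lambda_2\,\|X\|_1=\|\Upsilon(\Phi(X))\|_1\le\|\Phi(X)\|_1,
$$
so $\lambda_2^\kappa(\Phi,\rho)\le\|\Phi(X)\|_1/\|X\|_1\le\eta^{\mathrm{Tr}}(\Phi)$ by the definition \eqref{eta-Tr}.

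Finally I would take the supremum over $\rho\in\cD_d$ and invoke Theorem \ref{thm:Riem-lambda} to conclude $\eta_{x^{-1/2}}^\riem(\Phi)=\sup_\rho\lambda_2^{x^{-1/2}}(\Phi,\rho)\le\eta^{\mathrm{Tr}}(\Phi)$, which is the claimed inequality. The one point that needs a little care — the likely main obstacle — is the subtlety flagged around \eqref{Riem-mono}: $\Phi(\rho)$ need not be positive definite on all of $\bM_{d'}$, so $\Omega_{\Phi(\rho)}^\kappa$ and hence $\Upsilon$ must be interpreted on the compression $\Pi\bM_{d'}\Pi\cong\bM_k$ with $\Pi=\supp\Phi(I_d)$; one should check that with this reading $\widehat\Phi$ still maps into the relevant space, that $\Upsilon$ remains CP and unital-adjoint, and that the inequality $\|\Upsilon(Z)\|_1\le\|Z\|_1$ is unaffected by passing to a corner (which it is, since the compression map is itself trace-preserving and positive). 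A second, essentially bookkeeping, point is to confirm that a second-eigenvalue eigenvector of $\widehat\Psi\Psi$ really does pull back to an $X$ with $\Tr X=0$ — this is exactly the orthogonality of $(\Omega_\rho^\kappa)^{1/2}(X)$ to $(\Omega_\rho^\kappa)^{-1/2}(I_d)$ spelled out before \eqref{eigen-prob}.
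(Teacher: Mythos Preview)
Your proposal is correct and follows essentially the same argument as the paper: the theorem is stated immediately after the general eigenvalue discussion in Section~4.2, and its proof is nothing more than that discussion specialized to $\kappa(x)=x^{-1/2}$, using that for this $\kappa$ both $\Omega_\rho^\kappa$ and $(\Omega_\rho^\kappa)^{-1}$ are CP so that $\Upsilon$ is a positive trace-preserving map. Your extra remarks about the compression $\Pi\bM_{d'}\Pi$ and the choice of a traceless Hermitian eigenvector are careful elaborations of points the paper leaves implicit, but they do not change the route.
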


\section{General Contraction Results}

\subsection{Results for arbitrary channels} \label{sect:gene-theorems}

In this subsection we present a few general relations between the contraction coefficients
defined in Section 2. The next theorem says the general equality between the
Riemannian metric contraction coefficient and the geodesic contraction coefficient.
The proof is based on a limit formula in \cite{HP0} for the geodesic distance,
whose proof is presented  in Appendix \ref{sect:appendA} for completeness.

\begin{thm}\label{thm:Riem-geod}
For every $\kappa\in\cK$ and every CPT map $\Phi:\bM_d\to\bM_{d'}$,
$$
\eta_\kappa^\riem(\Phi)=\eta_\kappa^\geod(\Phi).
$$
\end{thm}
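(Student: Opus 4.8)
The plan is to establish the two inequalities $\eta_\kappa^\geod(\Phi)\le\eta_\kappa^\riem(\Phi)$ and $\eta_\kappa^\riem(\Phi)\le\eta_\kappa^\geod(\Phi)$ separately. The first inequality is already available: it is exactly \cite[Theorem IV.2]{LR}, which follows by a standard argument---any smooth curve joining $\Phi(\rho)$ and $\Phi(\gamma)$ can be estimated path-by-path using the pointwise metric contraction bound $M_{\Phi(\rho)}^\kappa\le\eta_\kappa^\riem(\Phi)\cdot(\text{pushforward of }M_\rho^\kappa)$, so it remains only to prove the reverse inequality $\eta_\kappa^\riem(\Phi)\le\eta_\kappa^\geod(\Phi)$.

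For the reverse direction, the idea is to recover the infinitesimal (Riemannian) contraction from the geodesic contraction by a short-distance limit. Fix $\rho\in\cD_d$ and $A\in\bH_d^0$ with $A\ne0$, and consider the curve $\gamma_\eps\equiv\rho+\eps A\in\cD_d$ for small $\eps>0$. The key analytic input is the limit formula for the geodesic distance from \cite{HP0}, whose proof is reproduced in Appendix~\ref{sect:appendA}: as $\eps\to0$,
\begin{equation*}
D_\kappa(\rho,\rho+\eps A)=\eps\sqrt{\langle A,\Omega_\rho^\kappa(A)\rangle}+o(\eps),
\end{equation*}
i.e.\ the geodesic distance is asymptotically the Riemannian length of the tangent vector. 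Applying this both at the foot point $\rho$ and at the image point $\Phi(\rho)$ (with tangent vector $\Phi(A)$, noting $\Phi(A)\in\bH_{d'}^0$ since $\Phi$ is trace-preserving), and using $D_\kappa(\Phi(\rho),\Phi(\gamma_\eps))\le\sqrt{\eta_\kappa^\geod(\Phi)}\,D_\kappa(\rho,\gamma_\eps)$, one obtains after dividing by $\eps$ and letting $\eps\to0$:
\begin{equation*}
\sqrt{\langle\Phi(A),\Omega_{\Phi(\rho)}^\kappa(\Phi(A))\rangle}
\le\sqrt{\eta_\kappa^\geod(\Phi)}\,\sqrt{\langle A,\Omega_\rho^\kappa(A)\rangle}.
\end{equation*}
Squaring and taking the supremum over $\rho\in\cD_d$ and $A\in\bH_d^0\setminus\{0\}$ gives $\eta_\kappa^\riem(\Phi)\le\eta_\kappa^\geod(\Phi)$, as desired. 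One technical point to check is that $\Phi(\rho)$ may fail to be positive definite on all of $\bM_{d'}$; as noted after \eqref{Riem-mono}, one works inside $\Pi\bM_{d'}\Pi$ with $\Pi=\supp\Phi(I_d)$, and for $\eps$ small enough $\Phi(\gamma_\eps)$ lies in the relative interior there, so the limit formula applies verbatim in that compressed algebra.

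The main obstacle is the justification of the geodesic-distance limit formula itself and its uniform applicability---this is precisely why the paper isolates it as a lemma in Appendix~\ref{sect:appendA} (slightly modified from \cite{HP1,HP0}). The subtlety is that a priori the infimum in \eqref{geod} over piecewise-smooth curves could be realized by curves that wander far from the straight segment, so one must argue that for short distances the geodesic length is genuinely comparable to the Riemannian norm of the chord direction, with the error term $o(\eps)$ controlled uniformly enough (at least for fixed $\rho,A$) to survive the division by $\eps$. Granting that lemma, the rest is the routine two-sided squeeze outlined above.
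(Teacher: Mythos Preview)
Your proposal is correct and follows essentially the same approach as the paper: cite \cite[Theorem IV.2]{LR} for $\eta_\kappa^\geod(\Phi)\le\eta_\kappa^\riem(\Phi)$, and for the reverse inequality apply the Hiai--Petz limit formula (Lemma~\ref{lemma:limit-form}) at both $\rho$ and $\Phi(\rho)$ to recover the Riemannian ratio as a limit of geodesic-distance ratios bounded by $\eta_\kappa^\geod(\Phi)$. Your remark about working in the compressed algebra $\Pi\bM_{d'}\Pi$ when $\Phi(\rho)$ is not invertible is a valid technical point that the paper handles only implicitly via the convention stated after \eqref{Riem-mono}.
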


\begin{proof}
The inequality $\eta_\kappa^\geod(\Phi)\le\eta_\kappa^\riem(\Phi)$ was shown in
\cite[Theorem IV.2]{LR}. To prove the reverse inequality, we use Lemma
\ref{lemma:limit-form} in the appendix. For every $\rho\in\cD_d$ and every $A\in\bH_d^0$
with $A\ne0$, by the lemma we have
$$
{\bigl\<\Phi(A),\Omega_{\Phi(\rho)}^\kappa(\Phi(A))\bigr\>\over
\<A,\Omega_\rho^\kappa(A)\>}
=\lim_{\eps\searrow0}\biggl[{D_\kappa(\Phi(\rho),\Phi(\rho+\eps A))\over
D_\kappa(\rho,\rho+\eps A)}\biggr]^2\le\eta_\kappa^\geod(\Phi),
$$
which implies that $\eta_\kappa^\riem(\Phi)\le\eta_\kappa^\geod(\Phi)$.
\end{proof}

For completeness we state the following theorem.  The first inequality was proved in \cite{LR}, and the
rest is a straightforward consequence of \eqref{H-g-sym} as discussed earlier. 
 \begin{thm}\label{thm:Riem-RelEnt}{\bf\cite[Theorem IV.2]{LR}}\enspace
For every $g\in\cG$ let $g_\sym$ be the symmetrization of $g$ as in \eqref{g-sym} and
$\kappa(x)\equiv g_\sym(x)/(x-1)^2\in\cK$ as in \eqref{kg-rel}. Then, for every CPT map
$\Phi:\bM_d\to\bM_{d'}$,
$$
\eta_\kappa^\riem(\Phi)\le\eta_{g_\sym}^\relent(\Phi)\le\eta_g^\relent(\Phi)
= \eta_{\wtd{g}}^\relent(\Phi).
$$
\end{thm}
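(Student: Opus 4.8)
The plan is to establish the three relations in the displayed chain from left to right, treating the first inequality as a black box cited from \cite{LR} and deriving the remaining two from the symmetrization identities \eqref{H-g-sym} and the definition \eqref{eta-RelEnt}. For the first inequality $\eta_\kappa^\riem(\Phi)\le\eta_{g_\sym}^\relent(\Phi)$ with $\kappa=g_\sym/(x-1)^2$, I would simply invoke \cite[Theorem IV.2]{LR}; the conceptual content is that the Riemannian metric $M^\kappa$ is the Hessian (infinitesimal form) of the symmetric divergence $H_{g_\sym}$, so contraction of the divergence forces contraction of the metric. One could also reprove this via Lemma \ref{lemma:limit-form}--style second-order Taylor expansion of $H_{g_\sym}(\rho,\rho+\eps A)$ around $\eps=0$, comparing $H_{g_\sym}(\Phi(\rho),\Phi(\rho+\eps A))$ to $H_{g_\sym}(\rho,\rho+\eps A)$ and letting $\eps\searrow0$, but since it is already in the literature I would not grind through that.

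Next I would handle the middle inequality $\eta_{g_\sym}^\relent(\Phi)\le\eta_g^\relent(\Phi)$, which is exactly the second assertion of \eqref{eta-g-sym}. The argument: by \eqref{H-g-sym} we have $g''(1)\,H_{g_\sym}(A,B)=H_g(A,B)+H_g(B,A)$ for all $A,B\in\bP_d$, and in particular for the pair $\Phi(\rho),\Phi(\gamma)$. Writing the ratio
\[
\frac{H_{g_\sym}(\Phi(\rho),\Phi(\gamma))}{H_{g_\sym}(\rho,\gamma)}
=\frac{H_g(\Phi(\rho),\Phi(\gamma))+H_g(\Phi(\gamma),\Phi(\rho))}{H_g(\rho,\gamma)+H_g(\gamma,\rho)},
\]
I would bound numerator by numerator via $H_g(\Phi(\rho),\Phi(\gamma))\le\eta_g^\relent(\Phi)\,H_g(\rho,\gamma)$ and $H_g(\Phi(\gamma),\Phi(\rho))\le\eta_g^\relent(\Phi)\,H_g(\gamma,\rho)$, using the mediant inequality $\frac{a_1+a_2}{b_1+b_2}\le\max_i\frac{a_i}{b_i}$ together with positivity of all four quantities (which holds by \eqref{H-int} and the remark that $H_g>0$ off the diagonal). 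Taking the supremum over $\rho\ne\gamma$ in $\cD_d$ gives the claim.

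Finally, for $\eta_g^\relent(\Phi)=\eta_{\widetilde g}^\relent(\Phi)$, this is the first assertion of \eqref{eta-g-sym}: from $H_{\widetilde g}(A,B)=H_g(B,A)$ in \eqref{H-g-sym}, the ratio defining $\eta_{\widetilde g}^\relent(\Phi)$ over the pair $(\rho,\gamma)$ equals the ratio defining $\eta_g^\relent(\Phi)$ over the swapped pair $(\gamma,\rho)$, and since the supremum in \eqref{eta-RelEnt} ranges over all ordered pairs with $\rho\ne\gamma$, the two suprema coincide. No obstacle is expected here beyond the need for $\widetilde g\in\cG$, which is proved in the paragraph preceding \eqref{g-sym}. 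Overall the only genuinely substantive input is the cited first inequality from \cite{LR}; the main (very mild) subtlety is simply to make sure positivity of each $H_g$-term is invoked correctly so the mediant inequality applies, and that one does not inadvertently divide by a term that vanishes — but $\rho\ne\gamma$ guarantees $H_g(\rho,\gamma)+H_g(\gamma,\rho)>0$, so this is clean.
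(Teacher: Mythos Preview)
Your proposal is correct and matches the paper's approach: the paper states that the first inequality is from \cite{LR} and that ``the rest is a straightforward consequence of \eqref{H-g-sym} as discussed earlier,'' referring precisely to the derivation of \eqref{eta-g-sym} that you spell out in detail. Your mediant-inequality argument for the middle inequality and the swap argument for the final equality are exactly the content behind \eqref{eta-g-sym}.
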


In the next theorem we give a general inequality between the contraction coefficients
for Riemannian metrics and the trace-norm,  generalizng
 \cite[Theorem 3]{Ru}  and  \cite[Theorem 13]{TKRWV}.  
\begin{thm}\label{thm:trace-Riem}
For every $\kappa\in\cK$ and every CPT map $\Phi:\bM_d\to\bM_{d'}$,
$$
\eta^{\mathrm{Tr}}(\Phi)\le\sqrt{\eta_\kappa^\riem(\Phi)}.
$$
\end{thm}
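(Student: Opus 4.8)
The plan is to reduce the trace-norm contraction to a differential (infinitesimal) statement and then connect that to the Riemannian contraction via the eigenvalue formulation of Section~4.2. First I would recall from \eqref{Ruskai-form} (Ruskai's formula) that it suffices to control $\tfrac12\|\Phi(E-F)\|_1$ for orthogonal rank-one projections $E,F$. The key idea is that $E-F$ can be realized as a tangent direction along a geodesic (or a simple curve) of density matrices: take $\rho_\theta\equiv\cos^2(\theta)E+\sin^2(\theta)F$ or, better, the path $\rho_t\equiv tE+(1-t)F$ for $t\in(0,1)$, perturbed slightly to stay positive definite, so that $\rho_t'=E-F\in\bH_d^0$. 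Then $\|\Phi(E-F)\|_1=\|\Phi(\rho_t')\|_1$ and the plan is to bound $\|\Phi(\rho_t')\|_1^2$ in terms of the monotone-metric length element $M^\kappa$, and integrate.

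The heart of the argument is a pointwise inequality of the form
\[
\|\Phi(A)\|_1^2 \;\le\; (\text{const})\cdot\bigl\<\Phi(A),\Omega_{\Phi(\rho)}^\kappa(\Phi(A))\bigr\>\cdot\bigl\<A,\Omega_\rho^\kappa(A)\bigr\>^{?}
\]
relating the trace norm squared to the monotone metric; in fact for any $\kappa\in\cK$ one has, on a commutative part or by a Cauchy--Schwarz / operator-monotonicity estimate, that $\|A\|_1^2$ is comparable to $M_\rho^\kappa(A,A)$ times a factor controlled by $\tr\rho$. More precisely, I would invoke the lower bound $\kappa(x)\ge\kmin(x)=2/(1+x)$ from \eqref{min-k-max}, so that $M_\rho^\kappa \ge M_\rho^{\min}$, reducing the theorem to the case $\kappa=\kmin$, i.e. the Bures/SLD metric. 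For the SLD metric there is the classical Cauchy--Schwarz bound $\|A\|_1 \le \sqrt{\langle A,\Omega_\rho^{\min}(A)\rangle}$ when $\tr A = 0$ and $\rho$ is suitably chosen — this is essentially the statement that the trace distance is bounded by the Bures distance up to the square root, and it is where Theorem~\ref{thm:Riem-geod} (equality of geodesic and Riemannian contraction) earns its keep: along any curve $\xi$ joining $\rho$ and $\gamma$,
\[
\|\Phi(\rho)-\Phi(\gamma)\|_1 \le \int_0^1 \|\Phi(\xi'(t))\|_1\,dt
\le \int_0^1 \sqrt{M_{\Phi(\xi(t))}^\kappa(\Phi(\xi'(t)),\Phi(\xi'(t)))}\,dt,
\]
and then bounding $M_{\Phi(\xi(t))}^\kappa(\Phi(\xi'),\Phi(\xi'))\le \eta_\kappa^\riem(\Phi)\,M_{\xi(t)}^\kappa(\xi',\xi')$ by the definition \eqref{eta-Riem}, and finally taking the infimum over $\xi$ to get $D_\kappa(\Phi(\rho),\Phi(\gamma))$ on the right. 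This yields $\|\Phi(\rho)-\Phi(\gamma)\|_1 \le \sqrt{\eta_\kappa^\riem(\Phi)}\cdot D_\kappa(\rho,\gamma)$, but one still needs $D_\kappa(\rho,\gamma)\le\|\rho-\gamma\|_1$, which fails in general — so the naive version of this chain does not close, and I would instead run the estimate at the infinitesimal level.

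Concretely, the cleanest route: fix $\rho\in\cD_d$ and $A\in\bH_d^0$, and use Lemma~\ref{lemma:limit-form} as in the proof of Theorem~\ref{thm:Riem-geod} so that $\langle\Phi(A),\Omega_{\Phi(\rho)}^\kappa(\Phi(A))\rangle = \lim_{\eps\searrow0}[D_\kappa(\Phi(\rho),\Phi(\rho+\eps A))/\eps]^2$. Then bound the trace distance from below: I want $\|\Phi(A)\|_1 \le \sqrt{\eta_\kappa^\riem(\Phi)}\,\|A\|_1$, and for this it is enough (by \eqref{Ruskai-form} and a density/scaling argument restricting to $A=E-F$) to establish the infinitesimal bound $\|\Phi(A)\|_1^2 \le \eta_\kappa^\riem(\Phi)\cdot \langle A, \Omega_\rho^{\mathrm{something}}(A)\rangle$ at a well-chosen $\rho$. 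Choosing $\rho\to \tfrac12(E+F)$ (the midpoint), which commutes with $A=E-F$, one has $\langle A,\Omega_\rho^\kappa(A)\rangle = \tr\rho^{-1}A^2$ independently of $\kappa$ (the remark after \eqref{eta-Riem}), and $\tr\rho^{-1}(E-F)^2 = 4 = \|E-F\|_1^2$; this identity $\|A\|_1^2 = M_\rho^\kappa(A,A)$ at the commuting midpoint is exactly what rescues the argument. Then $\|\Phi(A)\|_1^2 \le M_{\Phi(\rho)}^\kappa(\Phi(A),\Phi(A)) \le \eta_\kappa^\riem(\Phi)\,M_\rho^\kappa(A,A) = \eta_\kappa^\riem(\Phi)\,\|A\|_1^2$, where the first inequality is again the trace-vs-metric bound $\|B\|_1^2 \le M_\sigma^\kappa(B,B)$ for $B\in\bH^0$ and $\sigma=\Phi(\rho)$. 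The main obstacle is precisely justifying this last universal bound $\|B\|_1^2 \le M_\sigma^\kappa(B,B)$ (or its analogue with the support-projection caveat when $\Phi(\rho)$ is singular): one needs $\Omega_\sigma^\kappa \ge$ (the operator realizing $\|\cdot\|_1^2$) in the appropriate sense, which follows from $\kappa\ge\kmin$ together with the known sharp bound for the SLD/Bures metric — the step where \cite{Ru} and \cite{TKRWV} did the $\kappa=\kmin$ and $\kappa=x^{-1/2}$ special cases. Handling the non-invertibility of $\Phi(\rho)$ by passing to $\Pi\bM_{d'}\Pi$ with $\Pi=\supp\Phi(I_d)$, exactly as in the discussion after \eqref{Riem-mono}, will be the one technical nuisance to get right.
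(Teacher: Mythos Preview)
Your final ``concrete route'' is exactly the paper's argument: choose $\rho$ commuting with $A$ so that $M_\rho^\kappa(A,A)=\|A\|_1^2$, apply the universal bound $\|B\|_1^2\le M_\sigma^\kappa(B,B)$ at $\sigma=\Phi(\rho)$, $B=\Phi(A)$, and chain through $\eta_\kappa^\riem(\Phi)$. The paper works with a general invertible $A\in\bH_d^0$ and takes $\rho=|A|/\|A\|_1$ (your $\tfrac12(E+F)$ is precisely this when $A=E-F$), then closes by continuity; the reduction to rank-one projections via \eqref{Ruskai-form} is unnecessary.

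The one place you differ is in proving the universal bound $\|B\|_1^2\le\langle B,\Omega_\sigma^\kappa(B)\rangle$ (the paper's Lemma~\ref{lemma:trace-Riem}). You propose reducing to $\kmin$ via \eqref{min-k-max} and then citing the SLD/Bures special case from \cite{TKRWV}. The paper's proof is cleaner and self-contained: let $\cE$ be the trace-preserving conditional expectation onto the commutative subalgebra generated by $B$; by monotonicity of $M^\kappa$ one has $\langle B,\Omega_\sigma^\kappa(B)\rangle\ge\langle B,\Omega_{\cE(\sigma)}^\kappa(B)\rangle=\Tr\cE(\sigma)^{-1}B^2$ (the last equality because $\cE(\sigma)$ and $B$ commute, so the metric collapses to Fisher--Rao), and then Cauchy--Schwarz gives $\|B\|_1^2=(\Tr\cE(\sigma)^{1/2}\cdot\cE(\sigma)^{-1/2}|B|)^2\le\Tr\cE(\sigma)^{-1}B^2$. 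This handles all $\kappa$ at once and makes no appeal to the Bures case.
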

To prove this result, we first give a lemma generalizing \cite[Lemma 5]{TKRWV}.
\begin{lemma}\label{lemma:trace-Riem}
For every $\kappa\in\cK$ and every $\rho\in\cD_d$,
$$
\|A\|_1^2\le\<A,\Omega_\rho^\kappa(A)\>,\qquad A\in\bH_d^0.
$$
\end{lemma}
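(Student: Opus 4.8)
The plan is to reduce to the smallest element $\kmin(x)=2/(1+x)$ of $\cK$ and then run a Cauchy--Schwarz argument tailored to the SLD (Bures) metric. Since $L_\rho$ and $R_\rho$ commute, I would work in an orthonormal eigenbasis $\{e_i\}$ of $\rho$ with $\rho e_i=\lambda_i e_i$: the matrix units $E_{ij}$ simultaneously diagonalize every $\Omega_\rho^\kappa$, with $\Omega_\rho^\kappa(E_{ij})=\lambda_j^{-1}\kappa(\lambda_i/\lambda_j)E_{ij}$, so that $\langle A,\Omega_\rho^\kappa(A)\rangle=\sum_{i,j}|A_{ij}|^2\lambda_j^{-1}\kappa(\lambda_i/\lambda_j)$ for any $A\in\bH_d$ (here $A_{ij}=\langle e_i,Ae_j\rangle$). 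By \eqref{min-k-max} each coefficient satisfies $\lambda_j^{-1}\kappa(\lambda_i/\lambda_j)\ge\lambda_j^{-1}\kmin(\lambda_i/\lambda_j)$, hence $\langle A,\Omega_\rho^\kappa(A)\rangle\ge\langle A,\Omega_\rho^{\min}(A)\rangle$ for every $\kappa\in\cK$; thus it suffices to prove the inequality for $\kappa=\kmin$ (and, incidentally, the hypothesis $\Tr A=0$ is not needed).

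For $\kappa=\kmin$ I would set $Y\equiv\Omega_\rho^{\min}(A)\in\bH_d$, which by \eqref{Omega:min} is characterized by $\rho Y+Y\rho=2A$, and observe that $\langle A,\Omega_\rho^{\min}(A)\rangle=\langle A,Y\rangle=\Tr\bigl(\tfrac{\rho Y+Y\rho}{2}\,Y\bigr)=\Tr(\rho Y^2)$. Next introduce on $\bM_d$ the inner product $\langle X,Z\rangle_\rho\equiv\tfrac12\Tr\bigl(X^*(\rho Z+Z\rho)\bigr)$, which is positive definite because $\rho>0$. Then $\langle Y,Y\rangle_\rho=\Tr(\rho Y^2)$, and for any $W\in\bH_d$ a short cyclicity computation gives $\langle Y,W\rangle_\rho=\tfrac12\Tr\bigl(\rho(YW+WY)\bigr)=\Tr\bigl(\tfrac{\rho Y+Y\rho}{2}\,W\bigr)=\Tr(AW)$.

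Finally I would apply Cauchy--Schwarz for $\langle\cdot,\cdot\rangle_\rho$: for $W\in\bH_d$ with $\|W\|_\infty\le1$,
$|\Tr(AW)|=|\langle Y,W\rangle_\rho|\le\sqrt{\Tr(\rho Y^2)}\,\sqrt{\Tr(\rho W^2)}\le\sqrt{\Tr(\rho Y^2)}$,
the last step because $W^2\le I_d$ forces $\Tr(\rho W^2)\le\Tr\rho=1$. Taking the supremum over such $W$ and using $\|A\|_1=\sup\{\Tr(AW):W\in\bH_d,\ \|W\|_\infty\le1\}$ gives $\|A\|_1\le\sqrt{\Tr(\rho Y^2)}=\sqrt{\langle A,\Omega_\rho^{\min}(A)\rangle}$, hence $\|A\|_1^2\le\langle A,\Omega_\rho^{\min}(A)\rangle\le\langle A,\Omega_\rho^\kappa(A)\rangle$.

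No step here is a real obstacle; the only points needing care are the routine verification that $\langle Y,W\rangle_\rho=\Tr(AW)$ and that $\langle\cdot,\cdot\rangle_\rho$ is a genuine inner product, together with the spectral reduction to $\kmin$ via \eqref{min-k-max} --- which is precisely where this lemma strengthens \cite[Lemma 5]{TKRWV}, established there only for the SLD metric.
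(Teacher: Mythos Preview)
Your proof is correct and takes a genuinely different route from the paper's. The paper does not reduce to $\kmin$; instead it applies the monotonicity of $M^\kappa$ under the trace-preserving conditional expectation $\cE$ onto the commutative subalgebra generated by $A$, obtaining $\<A,\Omega_\rho^\kappa(A)\>\ge\<A,\Omega_{\cE(\rho)}^\kappa(A)\>=\Tr\cE(\rho)^{-1}A^2$ (the last equality because $\cE(\rho)$ commutes with $A$, so the metric reduces to the Fisher--Rao form regardless of $\kappa$), and then finishes with the elementary Cauchy--Schwarz $(\Tr|A|)^2=(\Tr\cE(\rho)^{1/2}\cdot\cE(\rho)^{-1/2}|A|)^2\le\Tr\cE(\rho)^{-1}A^2$. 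Your argument instead uses the pointwise bound $\kappa\ge\kmin$ from \eqref{min-k-max} to pass to the SLD metric and then exploits the Lyapunov structure $\rho Y+Y\rho=2A$ together with Cauchy--Schwarz for the symmetrized inner product $\<X,Z\>_\rho=\tfrac12\Tr X^*(\rho Z+Z\rho)$. The paper's approach is more structural---it invokes precisely the monotonicity property that characterizes the class $\cK$ and treats all $\kappa$ uniformly without singling out $\kmin$---while yours is more hands-on about the SLD case and makes explicit the observation (equally valid in the paper's argument, though not stated there) that the hypothesis $\Tr A=0$ is not actually needed.
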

\begin{proof}
Let $\cE$ be the trace-preserving conditional expectation from $\bM_d$ onto the subalgebra
generated by $A$. The monotonicity of $M^\kappa$ implies
that
$$
\<A,\Omega_\rho^\kappa(A)\>\ge\bigl\<A,\Omega_{\cE(\rho)}^\kappa(A)\bigr\>
=\Tr\cE(\rho)^{-1}A^2,
$$
where the latter equality follows since $\cE(\rho)$ and $A$ commute. By the Schwarz
inequality we have
\begin{align*}
\|A\|_1^2&=(\Tr|A|)^2=\bigl(\Tr\cE(\rho)^{1/2}\cdot\cE(\rho)^{-1/2}|A|\bigr)^2 \\
&\le\Tr\cE(\rho)\cdot\Tr\cE(\rho)^{-1}A^2=\Tr\cE(\rho)^{-1}A^2.
\end{align*}
Therefore, $\|A\|_1^2\le\<A,\Omega_\rho^\kappa(A)\>$.
\end{proof}

\noindent{\bf Proof of Theorem~\ref{thm:trace-Riem}.}\enspace
Let $A\in\bH_d^0$ with $A\ne0$ and assume further that $A$ is invertible. Set
$\rho\equiv |A|/\|A\|_1\in\cD_d$. By the above lemma we have
$$
\|\Phi(A)\|_1^2\le\bigl\<\Phi(A),\Omega_{\Phi(\rho)}^\kappa(\Phi(A))\bigr\>.
$$
On the other hand, since $\rho$ and $A$ commute, we have
$$
\<A,\Omega_\rho^\kappa(A)\>=\Tr\rho^{-1}A^2=\|A\|_1\Tr|A|=\|A\|_1^2.
$$
Therefore,
$$
{\|\Phi(A)\|_1^2\over\|A\|_1^2}
\le{\bigl\<\Phi(A),\Omega_{\Phi(\rho)}^\kappa(\Phi(A))\bigr\>\over
\<A,\Omega_\rho^\kappa(A)\>}\le\eta_\kappa^\riem(\Phi).
$$
By continuity we have $\|\Phi(A)\|_1^2/\|A\|_1^2\le\eta_\kappa^\riem(\Phi)$ for all
$A\in\bH_d^0$ with $A\ne0$, proving the desired inequality. \qed


\subsection{QC and CQ channels}

Let $\Phi:\bM_d\to\bM_{d'}$ be a positive trace-preserving map. We call $\Phi$ a
{\it quantum-classical} (QC) channel if the range of $\Phi$ is included in a commutative
subalgebra of $\bM_{d'}$, and a {\it classical-quantum} (CQ) channel if the range of
$\widehat\Phi$ is in a commutative subalgebra of $\bM_d$. Note that if $\Phi$ is QC or CQ,
then positivity is the same asx CP, so it is indeed a channel. The following facts are easy
to see: $\Phi$ is a QC channel if and only if there are an orthonormal basis
$\{\psi_k\}_{k=1}^{d'}$ of $\bC^{d'}$ and a POVM $\{F_k\}_{k=1}^{d'}$ in $\bM_d$ such that
$$
\Phi(\rho)=\sum_k(\Tr F_k\rho)|\psi_k\>\<\psi_k|.
$$
Also, $\Phi$ is a CQ channel if and only if there are an orthonormal basis
$\{\phi_k\}_{k=1}^d$ of $\bC^d$ and density matrices $\gamma_k\in\cD_{d'}$, $1\le k\le d$,
such that
$$
\Phi(\rho)=\sum_k\<\phi_k,\rho\phi_k\>\gamma_k.
$$
Thus, our notions of QC and CQ channels coincide with those introduced in \cite{Ho}.
We note that both QC and CQ channels have the form 
$\Phi(\rho)=\sum_k(\Tr F_k\rho)\gamma_k$ with a POVM $\{F_k\}$ in $\bM_d$ a
 $\gamma_k\in\cD_{d'}$,  introduced in \cite{Ho} and shown to be
 {\em entanglement breaking}.   Several equivalent characterizations of
 this class were given in \cite{HSR}.

We remark that when $\Phi$ is purely classical, i.e., the ranges of $\Phi$ and
$\widehat\Phi$ are in commutative subalgebras of $\bM_{d'}$ and $\bM_d$, respectively,
 then $\Phi$ can be represented by a $d'\times d$ column-stochastic matrix in some orthonormal
bases of $\bC^d$ and $\bC^{d'}$.  Thus \cite[Theorem 1]{CRS} implies that
$$
\eta_\kappa^\riem(\Phi)=\eta_g^\relent(\Phi)
$$
for any independent choices of $\kappa\in\cK$ and $g\in\cG$.

\begin{prop}\label{prop:semi-classical}
Let $\kappa_1,\kappa_2\in\cK$ and assume that $\kappa_1(x)\le\kappa_2(x)$ for all $x>0$.
Then $\eta_{\kappa_1}^\riem(\Phi)\ge\eta_{\kappa_2}^\riem(\Phi)$ for every QC channel $\Phi$,
and $\eta_{\kappa_1}^\riem(\Phi)\le\eta_{\kappa_2}^\riem(\Phi)$ for every CQ channel $\Phi$.
\end{prop}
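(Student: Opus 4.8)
The plan is to exploit the eigenvalue formulation from Section~4 together with the special structure of QC and CQ channels. For a QC channel $\Phi(\rho)=\sum_k(\Tr F_k\rho)|\psi_k\>\<\psi_k|$, the image $\Phi(\rho)$ is diagonal in the basis $\{\psi_k\}$, and hence $\Phi(\rho)$ commutes with $\Phi(A)$ for every $A$; therefore, by the remark following \eqref{Riem-mono}, $\bigl\<\Phi(A),\Omega_{\Phi(\rho)}^\kappa(\Phi(A))\bigr\>=\Tr\Phi(\rho)^{-1}\Phi(A)^2$ is \emph{independent} of $\kappa\in\cK$. So for QC channels the $\kappa$-dependence in the Rayleigh quotient of \eqref{eta-Riem} lives entirely in the denominator $\<A,\Omega_\rho^\kappa(A)\>$. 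The monotone metrics are ordered: $\kappa_1\le\kappa_2$ pointwise implies $\Omega_\rho^{\kappa_1}\le\Omega_\rho^{\kappa_2}$ as quadratic forms on $\bH_d^0$ (this is the standard ordering of monotone metrics, and follows directly from \eqref{Riem} since $\Omega_\rho^\kappa$ depends monotonically and linearly on $\kappa$ via the functional calculus in $L_\rho,R_\rho$). Hence $\<A,\Omega_\rho^{\kappa_1}(A)\>\le\<A,\Omega_\rho^{\kappa_2}(A)\>$, so the quotient with $\kappa_1$ in the denominator is at least as large as that with $\kappa_2$; taking the supremum over $A\in\bH_d^0$ and $\rho\in\cD_d$ yields $\eta_{\kappa_1}^\riem(\Phi)\ge\eta_{\kappa_2}^\riem(\Phi)$.

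For the CQ case I would argue dually. If $\Phi$ is CQ, then $\widehat\Phi$ has range in a commutative subalgebra, so $\widehat\Phi$ is (essentially) a QC channel in the reverse direction. The key is to return to the eigenvalue reformulation \eqref{eigen-prob}: $\eta_\kappa^\riem(\Phi)=\sup_\rho\lambda_2^\kappa(\Phi,\rho)$ where $\lambda_2^\kappa(\Phi,\rho)$ is the second eigenvalue of $\widehat\Psi\Psi$ with $\Psi=(\Omega_{\Phi(\rho)}^\kappa)^{1/2}\Phi(\Omega_\rho^\kappa)^{-1/2}$, equivalently of $\Upsilon_{\Phi,\rho}^\kappa=(\Omega_\rho^\kappa)^{-1}\widehat\Phi\,\Omega_{\Phi(\rho)}^\kappa$ acting via $\Phi$. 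Since $\widehat\Phi$ has commutative range, $\widehat\Phi(Z)$ commutes with $\widehat\Phi(W)$ for all $Z,W$; in particular $(\Omega_\rho^\kappa)^{-1}$ applied to things in the range of $\widehat\Phi$ — which is where the $\kappa$-dependence enters on the domain side — acts on commuting matrices, so $\Omega_\rho^\kappa$ restricted to the range of $\widehat\Phi$ is again $\kappa$-independent (it is just $R_\rho^{-1}$ there, i.e. $X\mapsto X\rho^{-1}$ on that commuting family). Thus in the CQ case the $\kappa$-dependence survives only through $\Omega_{\Phi(\rho)}^\kappa$ in the \emph{numerator} of the appropriate Rayleigh quotient, and the ordering $\Omega_{\Phi(\rho)}^{\kappa_1}\le\Omega_{\Phi(\rho)}^{\kappa_2}$ now pushes the inequality the other way: $\eta_{\kappa_1}^\riem(\Phi)\le\eta_{\kappa_2}^\riem(\Phi)$.

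More carefully, for the CQ direction I would write $\Phi(\rho)=\sum_k\<\phi_k,\rho\phi_k\>\gamma_k$ and use Theorem~\ref{thm:Riem-lambda} with the characterization \eqref{lambda_2}. The denominator $\<X,\Omega_\rho^\kappa(X)\>$ needs care because $\rho$ need not commute with arbitrary $X\in\bM_d^0$; the right move is to note that the relevant $X$ can be restricted (by the argument after \eqref{eigen-prob}, using self-adjointness and the conditional-expectation trick as in Lemma~\ref{lemma:trace-Riem}) to the commutative subalgebra $\cA=\mathrm{span}\{|\phi_k\>\<\phi_k|\}$ containing the range of $\widehat\Phi$: indeed the operator $\Upsilon_{\Phi,\rho}^\kappa$ maps into $\cA$ and its iterates stay in $\cA$, so the second eigenvalue can be computed on $\cA\cap\bM_d^0$, where $\Omega_\rho^\kappa|_\cA=\Omega_{\cE(\rho)}^\kappa|_\cA$ is $\kappa$-free. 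Then only $\Omega_{\Phi(\rho)}^\kappa$ carries $\kappa$, it sits in the numerator, and monotonicity $\kappa_1\le\kappa_2\Rightarrow\Omega^{\kappa_1}\le\Omega^{\kappa_2}$ finishes it.

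The main obstacle I anticipate is the CQ case: unlike the QC case, where the commutativity of $\Phi(\rho)$ with $\Phi(A)$ is immediate and kills all $\kappa$-dependence in the numerator cleanly, in the CQ case one must justify that the second-eigenvalue problem can be honestly restricted to the commutative subalgebra $\cA$ so that $\Omega_\rho^\kappa$ becomes $\kappa$-independent there. This requires checking that $\Upsilon_{\Phi,\rho}^\kappa$ genuinely preserves $\cA$ (which it does since $\widehat\Phi$ lands in $\cA$ and $(\Omega_\rho^\kappa)^{-1}$ acts as $R_\rho$ — a right multiplication — hence maps $\cA$ to $\cA$) and that the relevant eigenvector lies in $\cA$ up to the orthogonality constraint; the conditional-expectation monotonicity argument from the proof of Lemma~\ref{lemma:trace-Riem}, or equivalently an appeal to \cite[Theorem 1]{CRS} in the fully classical reduction, handles this but must be spelled out. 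Everything else reduces to the elementary operator inequality $\kappa_1\le\kappa_2\Rightarrow\Omega_\rho^{\kappa_1}\le\Omega_\rho^{\kappa_2}$ and taking suprema.
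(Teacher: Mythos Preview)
Your QC argument is correct and is exactly the paper's: the range being commutative makes the numerator $\bigl\<\Phi(A),\Omega_{\Phi(\rho)}^\kappa(\Phi(A))\bigr\>=\Tr\Phi(\rho)^{-1}\Phi(A)^2$ independent of $\kappa$, and the operator inequality $\Omega_\rho^{\kappa_1}\le\Omega_\rho^{\kappa_2}$ handles the denominator.

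Your CQ argument, however, has a genuine gap. The claim that ``$(\Omega_\rho^\kappa)^{-1}$ acts as $R_\rho$ --- a right multiplication --- hence maps $\cA$ to $\cA$'' is false for general $\rho\in\cD_d$: one has $(\Omega_\rho^\kappa)^{-1}=R_\rho\,(1/\kappa)(L_\rho R_\rho^{-1})$, and this reduces to right multiplication by $\rho$ on $X\in\cA$ only when $\rho X=X\rho$, which requires $\rho\in\cA$. Likewise the assertion $\Omega_\rho^\kappa|_\cA=\Omega_{\cE(\rho)}^\kappa|_\cA$ is not true in general. Consequently $\Upsilon_{\Phi,\rho}^\kappa=(\Omega_\rho^\kappa)^{-1}\widehat\Phi\,\Omega_{\Phi(\rho)}^\kappa$ need not preserve $\cA$, and your eigenvector localization to $\cA\cap\bM_d^0$ does not go through while $\rho$ ranges over all of $\cD_d$.

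The paper's fix is much simpler and avoids the eigenvalue machinery entirely. Since the range of $\widehat\Phi$ lies in $\cA$, one has $\widehat\Phi=\widehat\cE\,\widehat\Phi$ and hence $\Phi=\Phi\cE$. Thus for any $\rho\in\cD_d$ and $A\in\bH_d^0$ the numerator $\bigl\<\Phi(A),\Omega_{\Phi(\rho)}^\kappa(\Phi(A))\bigr\>$ is unchanged upon replacing $(\rho,A)$ by $(\cE(\rho),\cE(A))$, while the denominator can only decrease by monotonicity of $M^\kappa$ under $\cE$. Therefore
\[
\eta_\kappa^\riem(\Phi)=\sup_{\rho\in\cD_d\cap\cA}\,\sup_{A\in\bH_d^0\cap\cA,\,A\ne0}
\frac{\bigl\<\Phi(A),\Omega_{\Phi(\rho)}^\kappa(\Phi(A))\bigr\>}{\Tr\rho^{-1}A^2},
\]
where now \emph{both} $\rho$ and $A$ lie in the commutative $\cA$ so the denominator is $\kappa$-free. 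The ordering $\Omega_{\Phi(\rho)}^{\kappa_1}\le\Omega_{\Phi(\rho)}^{\kappa_2}$ in the numerator then gives $\eta_{\kappa_1}^\riem(\Phi)\le\eta_{\kappa_2}^\riem(\Phi)$ immediately. The point you were missing is that one must reduce $\rho$, not just $A$, to $\cA$; the identity $\Phi=\Phi\cE$ is what makes this legitimate.
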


\begin{proof}
The assumption $\kappa_1\le\kappa_2$ implies that
$\Omega_\rho^{\kappa_1}\le\Omega_\rho^{\kappa_2}$ as operators on the Hilbert space
$\bM_d$ for every $\rho\in\cD_d$. When $\Phi$ is QC, we have
$$
\eta_\kappa^\riem(\Phi)=\sup_{\rho\in\cD_d}\,\sup_{A\in\bH_d^0,\,A\ne0}
{\Tr\Phi(\rho)^{-1}\Phi(A)^2\over\<A,\Omega_\rho^\kappa(A)\>}
$$
for every $\kappa\in\cK$. Hence $\eta_{\kappa_1}^\riem(\Phi)\ge\eta_{\kappa_2}^\riem(\Phi)$.
When $\Phi$ is CQ, choose a subalgebra $\cA$ of $\bM_d$ including the range of $\widehat\Phi$,
and let $\cE:\bM_d\to\cA$ be the trace-preserving conditional expectation. Since $\widehat\cE$
is nothing but the inclusion $\cA\hookrightarrow\bM_d$, we have
$\widehat\Phi=\widehat\cE\widehat\Phi$ so that $\Phi=\Phi\cE$. Therefore, we have
$$
\eta_\kappa^\riem(\Phi)=\sup_{\rho\in\cD_d\cap\cA}\,\sup_{A\in\bH_d^0\cap\cA,\,A\ne0}
{\<\Phi(A),\Omega_{\Phi(\rho)}^\kappa(\Phi(A))\>\over\Tr\rho^{-1}A^2}
$$
for every $\kappa\in\cK$. Hence
$\eta_{\kappa_1}^\riem(\Phi)\le\eta_{\kappa_2}^\riem(\Phi)$.
\end{proof}

Thanks to \eqref{min-k-max}, by Proposition \ref{prop:semi-classical} and Theorem
\ref{thm:Riem-trace} we have

\begin{cor}\label{cor:semi-classical}
If $\Phi$ is a QC channel and $\kappa\in\cK$ satisfies $\kappa(x)\ge x^{-1/2}$ for all $x>0$,
then
$$
\eta_{\max}^\riem(\Phi)\le\eta_\kappa^\riem(\Phi)
\le\eta_{x^{-1/2}}^\riem(\Phi)\le\eta^{\mathrm{Tr}}(\Phi).
$$
If $\Phi$ is a CQ channel and $\kappa\in\cK$ satisfies $\kappa(x)\le x^{-1/2}$ for all $x>0$,
then
$$
\eta_{\min}^\riem(\Phi)\le\eta_\kappa^\riem(\Phi)
\le\eta_{x^{-1/2}}^\riem(\Phi)\le\eta^{\mathrm{Tr}}(\Phi).
$$
\end{cor}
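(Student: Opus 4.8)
The plan is to obtain the corollary as a routine concatenation of three facts already established: the pointwise sandwich \eqref{min-k-max} inside the Bauer simplex $\cK$, the monotone dependence of $\eta_\kappa^\riem(\Phi)$ on $\kappa$ for QC and CQ channels (Proposition \ref{prop:semi-classical}), and the bound $\eta_{x^{-1/2}}^\riem(\Phi)\le\eta^{\mathrm{Tr}}(\Phi)$ from Theorem \ref{thm:Riem-trace}. The single preliminary observation is that the central function $x^{-1/2}$, which belongs to $\cK$ (Example \ref{ex:pow}), is squeezed between $\kmin$ and $\kmax$ pointwise: since $2\sqrt{x}\le 1+x$ for $x>0$ by the arithmetic--geometric mean inequality, one has $\kmin(x)=\tfrac{2}{1+x}\le x^{-1/2}\le\tfrac{1+x}{2x}=\kmax(x)$.

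For a QC channel $\Phi$, I would combine this with \eqref{min-k-max} and the hypothesis $\kappa(x)\ge x^{-1/2}$ to get the pointwise chain $x^{-1/2}\le\kappa\le\kmax$ on $(0,\infty)$. Applying the QC half of Proposition \ref{prop:semi-classical}, in which a larger $\kappa$ yields a smaller Riemannian contraction coefficient, to each of these inequalities gives $\eta_{\max}^\riem(\Phi)\le\eta_\kappa^\riem(\Phi)\le\eta_{x^{-1/2}}^\riem(\Phi)$; appending Theorem \ref{thm:Riem-trace} supplies the final inequality $\eta_{x^{-1/2}}^\riem(\Phi)\le\eta^{\mathrm{Tr}}(\Phi)$.

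The CQ case is the mirror image. Here \eqref{min-k-max} and the hypothesis $\kappa(x)\le x^{-1/2}$ give the pointwise chain $\kmin\le\kappa\le x^{-1/2}$, and the CQ half of Proposition \ref{prop:semi-classical}, in which a larger $\kappa$ now yields a larger coefficient, turns this into $\eta_{\min}^\riem(\Phi)\le\eta_\kappa^\riem(\Phi)\le\eta_{x^{-1/2}}^\riem(\Phi)$; Theorem \ref{thm:Riem-trace} again provides $\eta_{x^{-1/2}}^\riem(\Phi)\le\eta^{\mathrm{Tr}}(\Phi)$.

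I do not expect any genuine obstacle: the only points requiring care are recording the elementary bounds $\kmin\le x^{-1/2}\le\kmax$ so that $x^{-1/2}$ may legitimately be used as one of the two functions in Proposition \ref{prop:semi-classical}, and keeping straight that the dependence of $\eta_\kappa^\riem$ on $\kappa$ is order-reversing for QC channels but order-preserving for CQ channels, so that the two halves of the corollary acquire the correct outer endpoints $\eta_{\max}^\riem$ and $\eta_{\min}^\riem$ respectively.
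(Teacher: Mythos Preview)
Your proposal is correct and is essentially the same argument as the paper's: the corollary is derived by combining the pointwise bounds \eqref{min-k-max}, Proposition~\ref{prop:semi-classical}, and Theorem~\ref{thm:Riem-trace}, exactly as you outline. Your extra remark that $\kmin\le x^{-1/2}\le\kmax$ is harmless but not actually needed---what matters is only that $x^{-1/2}\in\cK$ (Example~\ref{ex:pow}) so that Proposition~\ref{prop:semi-classical} applies with $x^{-1/2}$ as one of the two comparison functions.
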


The next theorem shows a relation of $\eta_g^\relent(\Phi)$ for any $g\in\cG_\sym$ with
the coefficient $\eta_{\min}^\riem(\Phi)$ with respect to the minimal metric when $\Phi$
is a QC channel. The proof is based on the integral decomposition \eqref{int-exp2}.

\begin{thm}\label{thm:RelEnt-BU}
Assume that $\Phi:\bM_d\to\bM_{d'}$ is a QC channel. Then for every $\kappa\in\cK$ and
the corresponding $g\in\cG_\sym$ in \eqref{kg-rel},
$$
\eta_\kappa^\riem(\Phi)\le\eta_g^\relent(\Phi)\le\eta_{\min}^\riem(\Phi),
$$
and in particular,
\begin{equation}\label{min-riem-ent}
\eta_{\min}^\riem(\Phi)=\eta_{g_{\min}}^\relent(\Phi)
\end{equation}
for $g_{\min}\in\cG_\sym$ given in Example 5.
\end{thm}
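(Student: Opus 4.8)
The first inequality $\eta_\kappa^\riem(\Phi)\le\eta_g^\relent(\Phi)$ is just Theorem~\ref{thm:Riem-RelEnt} applied to $g\in\cG_\sym$ (which equals its own symmetrization), and \eqref{min-riem-ent} will follow by applying the two displayed inequalities to $\kappa=\kmin$, $g=g_{\min}$, since then $\eta_\kappa^\riem(\Phi)=\eta_{\min}^\riem(\Phi)$ and equality is forced throughout. So the only content is the second inequality $\eta_g^\relent(\Phi)\le\eta_{\min}^\riem(\Phi)$ for a QC channel $\Phi$, equivalently $H_g(\Phi(\rho),\Phi(\gamma))\le\eta_{\min}^\riem(\Phi)\,H_g(\rho,\gamma)$ for all $\rho\ne\gamma$ in $\cD_d$.

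The plan is to reduce to the extreme points of $\cK$. By Proposition~\ref{prop:ope-conv}(ii), write $\kappa=\int_{[0,1]}\kappa_s\,dm(s)$ with $m$ a probability measure and $\kappa_s$ as in \eqref{ext-k}; multiplying by $(x-1)^2$ gives $g=\int_{[0,1]}g_s\,dm(s)$ with $g_s(x)\equiv(x-1)^2\kappa_s(x)\in\cG_\sym$, and since all integrands are nonnegative the functional calculus in \eqref{H_g} yields $H_g(A,B)=\int_{[0,1]}H_{g_s}(A,B)\,dm(s)$. Hence it suffices to prove $H_{g_s}(\Phi(\rho),\Phi(\gamma))\le\eta_{\min}^\riem(\Phi)\,H_{g_s}(\rho,\gamma)$ for each $s\in[0,1]$. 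Moreover, since $\kappa_s(x)=\tfrac{1+s}{2}\bigl(\tfrac1{x+s}+\tfrac1{1+sx}\bigr)$, one has $g_s=\tfrac{1+s}{2}(h_s+\widetilde{h_s})$ with $h_s(x)\equiv(x-1)^2/(x+s)\in\cG$, so by \eqref{H-g-sym}, $H_{g_s}(A,B)=\tfrac{1+s}{2}(H_{h_s}(A,B)+H_{h_s}(B,A))$; thus it is enough to show $H_{h_s}(\Phi(\rho),\Phi(\gamma))\le\eta_{\min}^\riem(\Phi)\,H_{h_s}(\rho,\gamma)$.

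For this last step I would use the QC structure $\Phi(X)=\sum_k(\Tr F_kX)E_k$, with $\{F_k\}$ a POVM and $\{E_k\}$ orthogonal rank-one projections. Since $\Phi(\rho),\Phi(\gamma),\Phi(\rho-\gamma)$ are simultaneously diagonal, \eqref{H-int} computes $H_{h_s}(\Phi(\rho),\Phi(\gamma))=\sum_k(p_k-q_k)^2/(p_k+sq_k)$ with $p_k\equiv\Tr F_k\rho$, $q_k\equiv\Tr F_k\gamma$. The key observation is that for $\sigma\equiv(\rho+s\gamma)/(1+s)\in\cD_d$ the matrix $\Phi(\sigma)$ is diagonal with $k$-th entry $(p_k+sq_k)/(1+s)$, and $\Omega^{\min}$ acts on a diagonal matrix by inverting its entries, so $H_{h_s}(\Phi(\rho),\Phi(\gamma))=\tfrac1{1+s}\langle\Phi(\rho-\gamma),\Omega_{\Phi(\sigma)}^{\min}\Phi(\rho-\gamma)\rangle$. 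Now applying the definition \eqref{eta-Riem} of $\eta_{\min}^\riem(\Phi)$ (legitimate since $\rho-\gamma\in\bH_d^0\setminus\{0\}$) together with $\Omega_\sigma^{\min}=2(1+s)[(L_\rho+sR_\gamma)+(R_\rho+sL_\gamma)]^{-1}$ gives $H_{h_s}(\Phi(\rho),\Phi(\gamma))\le 2\,\eta_{\min}^\riem(\Phi)\,\langle v,(P+\bar P)^{-1}v\rangle$, where $v=\rho-\gamma$, $P=L_\rho+sR_\gamma$ and $\bar P=R_\rho+sL_\gamma$. Finally the harmonic--arithmetic operator inequality $P^{-1}+\bar P^{-1}\ge 4(P+\bar P)^{-1}$, combined with $\langle v,\bar P^{-1}v\rangle=\langle v,P^{-1}v\rangle$ (take matrix adjoints in $Pw=v$, using $v^*=v$, $\rho^*=\rho$, $\gamma^*=\gamma$), gives $2\langle v,(P+\bar P)^{-1}v\rangle\le\langle v,P^{-1}v\rangle$, and $\langle v,P^{-1}v\rangle=H_{h_s}(\rho,\gamma)$ again by \eqref{H-int}. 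This closes the chain; integrating over $s$ finishes.

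I expect the main obstacle to be the choice of the interpolating state: $\eta_{\min}^\riem$ relates the minimal metric at $\Phi(\sigma)$ to the one at $\sigma$ for a \emph{single} state, whereas $H_{h_s}(\Phi(\rho),\Phi(\gamma))$ a priori involves the two distinct states $\Phi(\rho)$ and $\Phi(\gamma)$ through $L_{\Phi(\rho)}+sR_{\Phi(\gamma)}$; the QC hypothesis is precisely what collapses the $h_s$-divergence to a classical sum in which $p_k+sq_k$ is visibly $(1+s)\Tr F_k\sigma$ for $\sigma=(\rho+s\gamma)/(1+s)$, after which only operator convexity of $t\mapsto t^{-1}$ is used. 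No support problems arise when some $\Tr F_k\rho$ or $\Tr F_k\gamma$ vanishes, since $\rho,\gamma>0$ forces $\Tr F_k\rho,\Tr F_k\gamma>0$ whenever $F_k\ne0$, and indices with $F_k=0$ contribute $0$.
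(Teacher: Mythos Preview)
Your proof is correct and follows essentially the same route as the paper: reduce to the extreme $\kappa_s$ via the integral decomposition \eqref{int-exp2}, use the QC hypothesis to rewrite the output divergence as a minimal-metric quadratic form at the interpolating state $(\rho+s\gamma)/(1+s)$, apply the definition of $\eta_{\min}^\riem(\Phi)$, and then use the operator convexity of $x^{-1}$ (the harmonic--arithmetic mean inequality for $P=L_\rho+sR_\gamma$ and $\bar P=R_\rho+sL_\gamma$) to recover $H_{h_s}(\rho,\gamma)$. The only organizational difference is that the paper works throughout with the symmetrized divergence $H_{(g_s)_\sym}$ and hence carries two interpolating states $\tfrac{\rho+s\gamma}{1+s}$ and $\tfrac{\gamma+s\rho}{1+s}$ in parallel, whereas you first pass to the unsymmetrized $h_s$, handle one interpolating state, and let the symmetry $\langle v,\bar P^{-1}v\rangle=\langle v,P^{-1}v\rangle$ do the work of the second; this is a mild streamlining but not a different idea.
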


\begin{proof}
The first inequality holds for a general CPT map $\Phi$ due to Theorem
\ref{thm:Riem-RelEnt}. Now we assume that $\Phi$ is a QC channel and let $\kappa\in\cK$ be
arbitrary. Since $\kappa$ admits an integral expression (or the extremal decomposition)
in \eqref{int-exp2} so that we write
$$
\kappa(x)=\int_{[0,1]}\kappa_s(x)\,dm(s),\qquad x\in(0,\infty)
$$
with $\kappa_s\in\cK$, $0\le s\le1$, given in \eqref{ext-k}. Moreover, for $0\le s\le1$ let
\begin{equation}\label{g-ext}
g_s(x)\equiv{(x-1)^2\over x+s}\in\cG,
\end{equation}
whose symmetrization $(g_s)_\sym\in\cG_\sym$ corresponds in \eqref{kg-rel} to $\kappa_s$, i.e.,
$(g_s)_\sym(x)=(x-1)^2\kappa_s(x)$. For every $\rho,\gamma\in\cD_d$ we then have
$$
H_g(\rho,\gamma)=\int_{[0,1]}H_{(g_s)_\sym}(\rho,\gamma)\,dm(s).
$$
So it suffices to prove that
\begin{equation}\label{H-BU1}
H_{(g_s)_\sym}(\Phi(\rho),\Phi(\gamma))
\le\eta_{\min}^\riem(\Phi)H_{(g_s)_\sym}(\rho,\gamma),\qquad0\le s\le1.
\end{equation}
Since
\begin{align}
R_\gamma^{-1}\kappa_s(L_\rho R_\gamma^{-1})
&={1+s\over2}\,R_\gamma^{-1}\biggl({1\over L_\rho R_\gamma^{-1}+s}
+{1\over sL_\rho R_\gamma^{-1}+1}\biggr) \nonumber\\
&={1+s\over2}\biggl({1\over L_\rho+sR_\gamma}
+{1\over sL_\rho+R_\gamma}\biggr), \label{L-R}
\end{align}
we have
\begin{align*}
&{R_\gamma^{-1}\kappa_s(L_\rho R_\gamma^{-1})
+R_\rho^{-1}\kappa_s(L_\gamma R_\rho^{-1})\over2} \\
&\qquad={1+s\over2}\biggl\{{(L_\rho+sR_\gamma)^{-1}+(sL_\gamma+R_\rho)^{-1}\over2}
+{(sL_\rho+R_\gamma)^{-1}+(L_\gamma+sR_\rho)^{-1}\over2}\biggr\} \\
&\qquad\ge{1+s\over2}\biggl\{\biggl({L_\rho+sR_\gamma+sL_\gamma+R_\rho\over2}\biggr)^{-1}
+\biggl({sL_\rho+R_\gamma+L_\gamma+sR_\rho\over2}\biggr)^{-1}\biggr\} \\
&\qquad={1\over2}\Biggl\{\Bigg({L_{\rho+s\gamma\over1+s}
+R_{\rho+s\gamma\over1+s}\over2}\Biggr)^{-1}+\Biggl({L_{\gamma+s\rho\over1+s}
+R_{\gamma+s\rho\over1+s}\over2}\Biggr)^{-1}\Biggr\}.
\end{align*}
In the above the operator convexity of $x^{-1}$ on $(0,\infty)$ has been used. We then have
\begin{align}
H_{(g_s)_\sym}(\rho,\gamma)
&={H_{(g_s)_\sym}(\rho,\gamma)+H_{(g_s)_\sym}(\gamma,\rho)\over2} \nonumber\\
&=\Biggl\<\rho-\gamma,\Biggl\{{R_\gamma^{-1}\kappa_s(L_\rho R_\gamma^{-1})
+R_\rho^{-1}\kappa_s(L_\gamma R_\rho^{-1})\over2}\Biggr\}(\rho-\gamma)\Biggr\> \nonumber\\
&\ge{1\over2}\Biggl\<(\rho-\gamma,\Biggl\{\Biggl({L_{\rho+s\gamma\over1+s}
+R_{\rho+s\gamma\over1+s}\over2}\Biggr)^{-1}+\Biggl({L_{\gamma+s\rho\over1+s}
+R_{\gamma+s\rho\over1+s}\over2}\Biggr)^{-1}\Biggr\}(\rho-\gamma)\Biggr\> \nonumber\\
&={1\over2}\biggl\<\rho-\gamma,\biggl\{\Omega_{\rho+s\gamma\over1+s}^{\min}
+\Omega_{\gamma+s\rho\over1+s}^{\min}\biggr\}(\rho-\gamma)\biggr\> \label{H-BU2}
\end{align}
thanks to \eqref{Omega:min}.

Now we consider the case where $\rho$ and $\gamma$ commute. Since $\rho$ and $\gamma$
commute with $\rho-\gamma$, it is easy to see from \eqref{L-R} that
\begin{align*}
H_{(g_s)_\sym}(\rho,\gamma)
&=\bigl\<\rho-\gamma,R_\gamma^{-1}\kappa_s(L_\rho R_\gamma^{-1})(\rho-\gamma)\bigr\> \\
&={1+s\over2}\bigl\<\rho-\gamma,
\{(\rho+s\gamma)^{-1}+(\gamma+s\rho)^{-1}\}(\rho-\gamma)\bigr\> \\
&={1\over2}\biggl\<\rho-\gamma,\biggl\{\Omega_{\rho+s\gamma\over1+s}^{\min}
+\Omega_{\gamma+s\rho\over1+s}^{\min}\biggr\}(\rho-\gamma)\biggr\>.
\end{align*}
Since $\Phi$ has the commutative range, we can apply the above to $\Phi(\rho)$ and
$\Phi(\gamma)$ to obtain
\begin{equation}\label{H-BU3}
H_{(g_s)_\sym}(\Phi(\rho),\Phi(\gamma))={1\over2}\biggl\<\Phi(\rho-\gamma),\biggl\{
\Omega_{\Phi\bigl({\rho+s\gamma\over1+s}\bigr)}^{\min}
+\Omega_{\Phi\bigl({\gamma+s\rho\over1+s}\bigr)}^{\min}\biggr\}\Phi(\rho-\gamma)\biggr\>.
\end{equation}
Hence \eqref{H-BU1} follows from \eqref{H-BU2} and \eqref{H-BU3}.
\end{proof}

\subsection{Weak Schwarz maps}

Although we have restricted  our consideration to quantum channels 
 in the usual sense of CPT maps, the
monotonicity property of $g$-divergences and monotone metrics holds more generally under a
positive trace-preserving map $\Phi:\bM_d\to\bM_{d'}$ whose adjoint $\widehat\Phi$ is a
{\it weak} Schwarz map in the sense that $\widehat\Phi(Y^*)\widehat\Phi(Y)\le\widehat\Phi(Y^*Y)$
for all $Y\in\bM_{d^\prime}$. The proofs of monotonicity of $g$-divergences in \cite{Pe1, Pe2}
as well as the argument  in \cite{Pe3} for monotone metrics requires only this
weaker condition, as discussed further in \cite{HMPB}.
 Thus, we can define the contraction coefficients for such maps $\Phi$ rather
than CPT maps, and most results in the paper extend to this slightly more general situation.

We introduce the term   ``weak Schwarz map" for the following reason.
For a positive linear functional $\phi$ on an operator algebra, the Schwarz inequality can be
written as $|\phi(A^*B)|^2\le\phi(A^*A)\phi(B^*B)$. The analogous result for a linear map
$\Phi$ on an operator algebra is the {\it operator} inequality
\be \label{Schwarz}
\Phi(A^*B)[\Phi(B^*B)]^{-1}\Phi(B^*A)\le\Phi(A^*A)
\ee
first proved for CP maps by  Lieb and Ruskai \cite{LiRu} in 1974. (In   finite
dimensions, an equivalent inequality was proved much earlier by Kiefer \cite{Ki}  in 1959.)
In 1980 Choi \cite{Ch2} showed that \eqref{Schwarz} holds if and only if $\Phi$
is $2$-positive. Thus, it would be natural to consider the Schwarz maps as precisely the class
of $2$-positive maps. However, earlier in 1952 Kadison \cite{Ka} proved a special case of
\eqref{Schwarz} with $A=A^*$ and $B=I$, and it was later found that the condition
$A = A^*$ could be dropped in many situations.  Thus, the terms
``Schwarz inequality" and ``Schwarz map" were associated with the weaker inequality
$\Phi(A^*)\Phi(A)\le\Phi(A^*A)$.   However, this  inequality does not hold for arbitrary positive 
linear maps and we know of no characterization of the subclass for which it holds other than the inequality itself.

\section{Qubit Channels}

We now consider some special qubit channels.  Some of these results were stated
without proof at the end of \cite{LR}.  Others are new and resolve conjectures
discussed elsewhere in the paper \cite{LR}.

We first recall the description of $\cD_2$ as the Bloch ball briefly, see, e.g.,
\cite{NC,Pe4,RSW} for more details. Pauli matrices
$\sigma_1=\begin{pmatrix}0&1\\1&0\end{pmatrix}$,
$\sigma_2=\begin{pmatrix}0&-i\\i&0\end{pmatrix}$ and
$\sigma_3=\begin{pmatrix}1&0\\0&-1\end{pmatrix}$ with $I=\begin{pmatrix}1&0\\0&1\end{pmatrix}$
form an orthogonal basis of the qubit Hilbert space $\bM_2$. Any $\rho\in\cD_2$ is represented
as $\rho={1\over2}[I+\bw\dtsig]$ by a unique $\bw=(w_1,w_2,w_3)^t\in\bR^3$ with
$|\bw|\equiv\sqrt{w_1^2+w_2^2+w_3^2}\le1$, where
$\bw\dtsig\equiv w_1\sigma_1+w_2\sigma_2+w_3\sigma_3$. Here, $\rho$ is pure if and only if
$\bw$ is on the unit sphere, i.e., $|\bw|=1$. A trace-preserving linear map $\Phi:\bM_2\to\bM_2$
is represented as
$$
\Phi(w_0I+\bw\dtsig)=w_0I+(w_0\bt+T\bw)\dtsig,\qquad w_0\in\bR,\ \bw\in\bR^3,
$$
by a vector $\bt\in\bR^3$ and a $3\times3$ real matrix $T=(t_{ij})_{i,j=1}^3$, which,
 as observed in \cite{KR},  can be assumed to be diagonal without loss of generality.   
 Clearly, $\Phi$ is
positive if and only if $|\bt+T\bw|\le1$ for all $\bw\in\bR^3$ with $|\bw|\le1$, and $\Phi$ is
unital and positive if and only if $\bt=0$ and $\|T\|_\infty\le1$, where $\|T\|_\infty$ is the
operator norm of $T$.  Necessary and sufficient conditions for complete positivity
 were given in \cite{RSW}.    In the special case when only $t_3 \neq 0$, it was
shown earlier by  Fujiwara  and Algoet  \cite{FA} that  a map of this form is CPT if and only if
$$
(\lambda_1 \pm\lambda_2 )^2 \leq (1 \pm \lambda_3)^2 -  t_3^2
$$
when $\lambda_k $ are the diagonal elements of $T$.

\begin{thm}\label{thm:unital}
For any unital map $\Phi_{T}:I + \bw \dtsig \mapsto I + (T\bw)\dtsig $ where $T$ is a real
matrix with $\norm{T}_\infty \leq 1$,
\begin{eqnarray*} 
\eta_\kappa^{\riem}(\Phi_{T}) = \eta_\kappa^{\geod}(\Phi_{T}) = \eta_g^{\RE}(\Phi_{T})
= \| T \|_\infty ^2
\end{eqnarray*}
for every $\kappa\in\cK$ and every $g\in\cG$. Furthermore, $\eta^{\tr}(\Phi_{T})=\|T\|_\infty$.
\end{thm}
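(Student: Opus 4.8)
The plan is to establish, for every $\kappa\in\cK$ and every $g\in\cG$, the chain $\|T\|_\infty^2\le\eta_\kappa^\riem(\Phi_T)=\eta_\kappa^\geod(\Phi_T)\le\eta_g^\RE(\Phi_T)\le\|T\|_\infty^2$, and separately the identity $\eta^\tr(\Phi_T)=\|T\|_\infty$. The trace-norm statement is immediate: a traceless Hermitian $A=\bw\dtsig\in\bH_2^0$ has eigenvalues $\pm|\bw|$, hence $\|A\|_1=2|\bw|$, while $\Phi_T(A)=(T\bw)\dtsig$ gives $\|\Phi_T(A)\|_1=2|T\bw|$; taking the supremum in \eqref{eta-Tr} yields $\eta^\tr(\Phi_T)=\sup_\bw|T\bw|/|\bw|=\|T\|_\infty$. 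For the lower bounds I would use that $\Phi_T$, being unital, fixes $I/2$, and that $\Omega_{I/2}^\kappa=2\,\mathrm{id}$ for every $\kappa$ (since $I/2$ commutes with everything and $\kappa(1)=1$). Evaluating \eqref{eta-Riem} at $\rho=I/2$ gives $\eta_\kappa^\riem(\Phi_T)\ge\sup_{A\in\bH_2^0}\Tr\Phi_T(A)^2/\Tr A^2=\|T\|_\infty^2$; and from the second-order expansion $H_g(I/2+\eps A,I/2)=g''(1)\,\eps^2\,\Tr A^2+o(\eps^2)$ together with $\Phi_T(I/2+\eps A)=I/2+\eps\Phi_T(A)$, letting $\eps\searrow0$ in \eqref{eta-RelEnt} gives $\eta_g^\RE(\Phi_T)\ge\|T\|_\infty^2$. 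The equality $\eta_\kappa^\riem=\eta_\kappa^\geod$ is Theorem~\ref{thm:Riem-geod}, and $\eta_\kappa^\riem\le\eta_g^\RE$ for the matching $g=g_\sym=(x-1)^2\kappa$ is Theorem~\ref{thm:Riem-RelEnt} (for general $g$ one also invokes \eqref{eta-g-sym}); these apply once $M^\kappa$ and $H_g$ are known to be monotone under $\Phi_T$, which I address next.

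The real content is the upper bound $\eta_g^\RE(\Phi_T)\le\|T\|_\infty^2$. Put $\tau=\|T\|_\infty$; if $\tau=0$ then $\Phi_T$ is constant and all quantities vanish, so assume $\tau>0$ and factor $\Phi_T=\Phi_{\tau I_3}\circ\Phi_{T/\tau}$ (composition of Bloch maps). Here $\Phi_{T/\tau}$ is positive and unital with $\|T/\tau\|_\infty=1$, and $\Phi_{\tau I_3}$ is the qubit depolarizing channel $\rho\mapsto\tau\rho+(1-\tau)I/2$. The positive unital qubit maps are exactly the $\Phi_S$ with $\|S\|_\infty\le1$, a compact convex set whose extreme points are $\{\Phi_R:R\in O(3)\}$, each $\Phi_R$ being a unitary conjugation ($\det R=1$) or a unitary conjugation composed with the transpose ($\det R=-1$). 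Both unitary conjugation and the transpose leave every quasi-entropy invariant: writing $\rho=\sum_i\lambda_i\proj{e_i}$, $\gamma=\sum_j\mu_j\proj{f_j}$, one has the spectral formula $H_g(\rho,\gamma)=\sum_{i,j}\mu_j\,g(\lambda_i/\mu_j)\,|\<e_i,f_j\>|^2$, and transposing conjugates all eigenvectors while preserving the numbers $|\<e_i,f_j\>|^2$. Hence each extreme $\Phi_R$ preserves $H_g$, and joint convexity of $(\rho,\gamma)\mapsto H_g(\rho,\gamma)$ (valid for operator convex $g$) gives $H_g(\Phi_{T/\tau}(\rho),\Phi_{T/\tau}(\gamma))\le H_g(\rho,\gamma)$; the same convexity argument yields monotonicity of $M^\kappa$ under $\Phi_{T/\tau}$, hence under $\Phi_T$. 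Consequently $H_g(\Phi_T(\rho),\Phi_T(\gamma))\le\eta_g^\RE(\Phi_{\tau I_3})\,H_g(\Phi_{T/\tau}(\rho),\Phi_{T/\tau}(\gamma))\le\eta_g^\RE(\Phi_{\tau I_3})\,H_g(\rho,\gamma)$, so it remains to show $\eta_g^\RE(\Phi_{\tau I_3})\le\tau^2$.

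For the depolarizing channel the key point is that $\Phi_{\tau I_3}(\rho)$ has the \emph{same} eigenvectors as $\rho$, with eigenvalues $\tilde\lambda_i=\tau\lambda_i+(1-\tau)/2$, so in the spectral formula the factors $|\<e_i,f_j\>|^2$ are unaffected. By the integral representation \eqref{int-exp1} it suffices to prove $H_g(\Phi_{\tau I_3}(\rho),\Phi_{\tau I_3}(\gamma))\le\tau^2H_g(\rho,\gamma)$ for $g(x)=(x-1)^2$ and for $g_s(x)=(x-1)^2/(x+s)$, $s\ge0$. Using $\lambda_1+\lambda_2=\mu_1+\mu_2=1$ one computes, for qubits, $H_{g_s}(\rho,\gamma)=a^2c\bigl(\tfrac1u+\tfrac1{(1+s)-u}\bigr)+b^2(1-c)\bigl(\tfrac1v+\tfrac1{(1+s)-v}\bigr)$, where $a=|\lambda_1-\mu_1|$, $b=|\lambda_1+\mu_1-1|$, $c=|\<e_1,f_1\>|^2$, $u=\lambda_1+s\mu_1$, $v=\lambda_1+s(1-\mu_1)$; each parenthesis equals $(1+s)/\bigl(w((1+s)-w)\bigr)$, which is convex in $w\in(0,1+s)$ and symmetric about its minimizer $w=(1+s)/2$. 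Applying $\Phi_{\tau I_3}$ replaces $a,b$ by $\tau a,\tau b$, leaves $c$ fixed, and sends $u,v$ to $\tau u+(1-\tau)(1+s)/2$ and $\tau v+(1-\tau)(1+s)/2$, i.e.\ moves them towards $(1+s)/2$ by the factor $\tau$; hence neither parenthesis increases and $H_{g_s}(\Phi_{\tau I_3}(\rho),\Phi_{\tau I_3}(\gamma))\le\tau^2H_{g_s}(\rho,\gamma)$. The case $g(x)=(x-1)^2$ is the same idea: since $(\rho-\gamma)^2$ is a scalar multiple of $I$, $H_{(x-1)^2}(\rho,\gamma)=\delta^2/(\mu_1\mu_2)$ with $\delta=\tfrac12\|\rho-\gamma\|_1$, and the ratio equals $\tau^2\mu_1\mu_2/(\tilde\mu_1\tilde\mu_2)\le\tau^2$ because $|\tilde\mu_1-\tfrac12|=\tau|\mu_1-\tfrac12|$. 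This gives $\eta_g^\RE(\Phi_{\tau I_3})\le\tau^2$, hence $\eta_g^\RE(\Phi_T)\le\tau^2$ for all $g\in\cG$; combined with Theorems~\ref{thm:Riem-geod}–\ref{thm:Riem-RelEnt} applied to $g=g_\sym=(x-1)^2\kappa$, this also forces $\eta_\kappa^\riem(\Phi_T)=\eta_\kappa^\geod(\Phi_T)\le\tau^2$, and together with the lower bounds the four quantities all equal $\|T\|_\infty^2$.

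I expect the depolarizing-channel estimate $\eta_g^\RE(\Phi_{\tau I_3})\le\tau^2$ to be the crux: the reduction to the one-parameter family $g_s$ via \eqref{int-exp1} is routine, but one must set up the qubit spectral formula carefully — in particular fixing orientation conventions in the Bloch parametrization and verifying that the relevant one-variable functions really are convex and symmetric about the point $\Phi_{\tau I_3}$ contracts towards — which is elementary but bookkeeping-heavy (this is where the ``somewhat lengthy'' appendix computation lives). A minor secondary point is justifying that the general results of Section~\ref{sect:gene-theorems} apply to the possibly non-CP map $\Phi_T$; this follows from the convex-hull description of positive unital qubit maps together with joint convexity of $H_g$ and of $M^\kappa$.
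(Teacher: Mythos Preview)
Your proof is correct and takes a genuinely different route from the paper's. The paper works directly with the extremal family $g_s(x)=(x-1)^2/(x+s)$ as you do, but instead of factoring $\Phi_T$ it computes $H_{g_s}(\rho,\gamma)$ for arbitrary qubit pairs via explicit Bloch-sphere matrix identities: a lemma expressing $\Tr(\by\dtsig)(L_P+sR_Q)^{-1}(\by\dtsig)$ as a $3\times3$ quadratic form in $\by$, followed by a ``ratio-reversal'' lemma ($\sup_\by\<T\by,A^{-1}T\by\>/\<\by,B^{-1}\by\>=\sup_\by\<T^*\by,BT^*\by\>/\<\by,A\by\>$) and a direct estimate of the resulting expression over an enlarged parameter set. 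Your factorization $\Phi_T=\Phi_{\tau I_3}\circ\Phi_{T/\tau}$ replaces all of that machinery with two structural observations: that positive unital qubit maps lie in the convex hull of $\{\Phi_R:R\in O(3)\}$ (so joint convexity of $H_g$ reduces $\Phi_{T/\tau}$ to the $H_g$-preserving extreme points), and that the depolarizing channel preserves eigenvectors, which lets the spectral formula for $H_{g_s}$ collapse to a one-variable convexity check. Your approach is more conceptual and exports the non-CP issue cleanly to the convex-hull argument; the paper's direct computation, while heavier, has the advantage of being entirely self-contained and of producing the explicit quadratic-form lemmas (Lemmas~\ref{lemm:s0}--\ref{lemm2:s0}) that are reused verbatim in the non-unital CQ analysis of Theorems~\ref{thm:etaCQ} and~\ref{thm:relent.neq.riem}, where no analogous factorization is available.
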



This result does not require that the map $\Phi_T$ be CP. Note that
$\eta^{\tr}(\Phi_{T}) = \|T\|_\infty=\sqrt{\eta_\kappa^{\riem}(\Phi_{T})}$, which is consistent
with Theorem \ref{thm:trace-Riem}.

The next theorem treats a family of trace-preserving maps $\Phi_{\alpha,\tau}:\bM_2\to\bM_2$
with two real parameters $\alpha,\tau$ determined by
$\bt=(0,0,\tau)^t$ and $T=\diag(\alpha,0,0)$; more explicitly,
$$
\Phi_{\alpha,\tau}(w_0I+\bw\dtsig)=w_0I+\alpha w_1\sigma_1+\tau w_0\sigma_3.
$$
In this case, the condition $\alpha^2+\tau^2\le1$ is necessary and sufficient for both
 positivity and complete positivity  
 as shown in \cite{FA,RSW}. It is also easy to check that the range of the
adjoint map $\widehat\Phi$ is included in the commutative subalgebra generated by
$\{I,\sigma_1\}$. Thus,  when $\tau \neq 0 $,
$\Phi_{\alpha,\tau}$ is a non-unital CQ channel.
 Below we assume that $\alpha\ge0$ and  $\alpha^2+\tau^2\le1$.

\begin{thm}\label{thm:etaCQ}
Let $\Phi \equiv \Phi_{\alpha,\tau}$ be a non-unital CQ channel with $\alpha,\tau$ specified
above. Then
\bsq \begin{eqnarray}
\eta^{\tr}(\Phi) & = & \alpha, \label{etaCQ:a}\\
\eta_{\max}^\riem(\Phi) & = & \frac{\alpha^2}{1 - \tau^2}, \label{etaCQ:b}\\
\eta_{{\rm \wh{WY} }}^\riem(\Phi) \equiv \eta_{(1+\sqrt x)^2/4x}^\riem(\Phi)
& \geq & \alpha^2\,\frac{ 1 + \sqrt{1 - \tau^2}}{2 (1 - \tau^2)}, \label{etaCQ:c}\\
\eta_{ x^{-1/2}}^\riem(\Phi) & \geq & \frac{\alpha^2}{\sqrt{1-\tau^2}}, \label{etaCQ:d}\\
\eta_{{\rm BKM}}^\riem(\Phi) \equiv \eta_{(\log x)/(x-1)}^\riem(\Phi)
& \geq & \frac{\alpha^2 }{2\tau} \log \tfrac{ 1 + \tau }{ 1 -\tau}, \label{etaCQ:e}\\
\eta_{{\rm WY}}^\riem(\Phi) \equiv \eta_{4/(1+\sqrt x)^2}^\riem(\Phi)
& = & \frac{2\alpha^2}{1 + \sqrt{1-\tau^2}}, \label{etaCQ:f}\\
\eta^{\riem}_{\min}(\Phi) & = & \alpha^2. \label{etaCQ:g}
\end{eqnarray} \esq 
Moreover, for the extreme points $\kappa_s$ of $\cK$ given in \eqref{ext-k},
\begin{equation}\label{etaCQ:ext}
\eta^{\riem}_{\kappa_s}(\Phi) = 
\frac{\alpha^2}{1 - \left( \frac{1-s}{1+s} \right)^2 \tau^2},\qquad 0\le s\le1.
\end{equation}
\end{thm}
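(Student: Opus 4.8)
\textbf{Proof proposal for Theorem~\ref{thm:etaCQ}.}
The plan is to reduce everything to a low-dimensional optimization. Since $\Phi\equiv\Phi_{\alpha,\tau}$ is a CQ channel whose adjoint range lies in the commutative subalgebra $\cA$ generated by $\{I,\sigma_1\}$, Proposition~\ref{prop:semi-classical}'s proof shows $\Phi=\Phi\cE$ with $\cE$ the trace-preserving conditional expectation onto $\cA$, so the supremum defining $\eta_\kappa^\riem(\Phi)$ in \eqref{eta-Riem} can be restricted to $\rho\in\cD_2\cap\cA$ and $A\in\bH_2^0\cap\cA$. Parametrize such $\rho=\tfrac12(I+r\sigma_1)$ with $r\in(-1,1)$ and $A=\sigma_1$ (the only traceless direction in $\cA$, up to scale); then $\<A,\Omega_\rho^\kappa(A)\>=\Tr\rho^{-1}A^2=\tfrac{1}{1-r^2}\cdot 2$ is $\kappa$-independent since $\rho,A$ commute, while $\Phi(\rho)=\tfrac12(I+\tau\sigma_3+\alpha r\sigma_1)$ and $\Phi(A)=\alpha\sigma_1$. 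So for each $\kappa$ the problem becomes
\begin{equation*}
\eta_\kappa^\riem(\Phi)=\alpha^2(1-\tau^2)\sup_{-1<r<1}\frac{(1-r^2)^{-1}\,\bigl\<\sigma_1,\Omega_{\Phi(\rho)}^\kappa(\sigma_1)\bigr\>}{(1-\tau^2)^{-1}\,(1-r^2)^{-1}\cdot\text{(norm of }\Phi(\rho)\text{ part)}}
\end{equation*}
— more precisely, after writing $\Phi(\rho)$ explicitly one computes $\bigl\<\sigma_1,\Omega_{\Phi(\rho)}^\kappa(\sigma_1)\bigr\>$ in the eigenbasis of $\Phi(\rho)$, whose eigenvalues are $\tfrac12(1\pm\sqrt{\tau^2+\alpha^2 r^2})$; this is a two-level computation yielding $\Omega^\kappa$ evaluated on off-diagonal and diagonal blocks in terms of $\kappa$ of the eigenvalue ratio.

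First I would carry out this two-level reduction: for a $2\times2$ state $\omega$ with eigenvalues $p_\pm$ and a Hermitian $X$, decompose $X$ along the eigenbasis of $\omega$; the diagonal part contributes $\Tr\omega^{-1}(\text{diag }X)^2$ independently of $\kappa$, and each off-diagonal entry $|X_{+-}|^2$ is weighted by $\tfrac{1}{p_+}\kappa(p_-/p_+)+$ (symmetric term), which by $x\kappa(x)=\kappa(x^{-1})$ simplifies. Applying this with $\omega=\Phi(\rho)$, $X=\sigma_1$ gives a closed form for the Rayleigh quotient as an explicit function $R_\kappa(r)$ of $r\in(-1,1)$, and then $\eta_\kappa^\riem(\Phi)=\sup_r R_\kappa(r)$. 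I expect $R_\kappa$ to be monotone in $r^2$ so the supremum is attained in the limit $r\to\pm1$ (i.e. at pure $\rho$), where $\Phi(\rho)$ has eigenvalues $\tfrac12(1\pm\sqrt{\tau^2+\alpha^2})$; for the lower-bound assertions \eqref{etaCQ:c}, \eqref{etaCQ:d}, \eqref{etaCQ:e} one simply evaluates $R_\kappa$ at a convenient point (e.g. $r=0$, where $\Phi(\rho)$ has eigenvalues $\tfrac12(1\pm\tau)$), which already produces the stated expressions; the equalities \eqref{etaCQ:b}, \eqref{etaCQ:f}, \eqref{etaCQ:g}, \eqref{etaCQ:ext} then require checking that the sup is genuinely at $r\to1$ for those particular $\kappa$'s — for $\kmin$, $\kmax$, $\kappa_\WY$, and the extreme $\kappa_s$ the function $\kappa$ is explicit enough to verify monotonicity directly.

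For \eqref{etaCQ:ext} I would substitute $\kappa_s$ from \eqref{ext-k} into the two-level formula; using the identity \eqref{L-R}-style rewriting of $R_\gamma^{-1}\kappa_s(L_\rho R_\gamma^{-1})$ as a combination of $(L+sR)^{-1}$ resolvents already established in the proof of Theorem~\ref{thm:RelEnt-BU}, the off-diagonal weight for eigenvalues $p_\pm$ becomes $\tfrac{1+s}{2}\bigl(\tfrac{1}{p_++sp_-}+\tfrac{1}{sp_++p_-}\bigr)$; plugging $p_\pm=\tfrac12(1\pm\sqrt{\tau^2+\alpha^2 r^2})$ and simplifying should give $R_{\kappa_s}(r)=\alpha^2\big/\bigl(1-(\tfrac{1-s}{1+s})^2(\tau^2+\alpha^2r^2)\cdot\text{something}\bigr)$, monotone increasing in $r^2$, hence the sup at $r\to1$ gives $\alpha^2/(1-(\tfrac{1-s}{1+s})^2\tau^2)$ after the $\alpha^2 r^2$ term is absorbed; the cases $s=1$ ($\kmin$), $s=0$ ($\kmax$) recover \eqref{etaCQ:g} and \eqref{etaCQ:b}, and $s$ chosen so $\kappa_s$ relates to $\kappa_\WY$ recovers \eqref{etaCQ:f}. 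Finally \eqref{etaCQ:a} follows from $\eta^\tr(\Phi)=\sup\|\Phi(A)\|_1/\|A\|_1$ over $A\in\bH_2^0$: restricting to $\cA$ (again by $\Phi=\Phi\cE$ and trace-norm contraction of $\cE$) gives $\|\Phi(\sigma_1)\|_1/\|\sigma_1\|_1=\alpha$, and this is optimal since $\eta^\tr(\Phi)\le\sqrt{\eta_{\min}^\riem(\Phi)}=\alpha$ by Theorem~\ref{thm:trace-Riem} and \eqref{etaCQ:g}.

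The main obstacle I anticipate is twofold: first, correctly organizing the $2\times2$ computation of $\<X,\Omega_\omega^\kappa(X)\>$ so that the dependence on $\kappa$ enters only through $\kappa$ evaluated at the eigenvalue ratio $p_-/p_+$ — this requires care with the left/right multiplication operators $L_\omega,R_\omega$ acting on $\bM_2$ and their simultaneous diagonalization (a $4$-dimensional eigenbasis indexed by pairs of eigenvectors of $\omega$); and second, proving that the one-variable function $R_\kappa(r)$ attains its supremum as $r\to1$ for the non-extreme cases, which in general requires a monotonicity argument in $r^2$ that may not hold for all $\kappa\in\cK$ (hence the inequalities rather than equalities in \eqref{etaCQ:c}--\eqref{etaCQ:e}) but does hold for the explicit $\kappa$'s in \eqref{etaCQ:b}, \eqref{etaCQ:f}, \eqref{etaCQ:g}, \eqref{etaCQ:ext}. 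Since the paper notes these proofs are "elementary but somewhat lengthy" and defers them to Appendix~\ref{sect:qubitpf}, I would present the reduction and the key formula $R_\kappa(r)$ in the main text and relegate the case-by-case monotonicity checks to the appendix.
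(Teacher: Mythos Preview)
Your CQ reduction via Proposition~\ref{prop:semi-classical} is a genuine simplification over the paper's approach: the paper works throughout Appendix~\ref{sect:qubitpf} with a general $\rho=\tfrac12(I+\bw\dtsig)$ and $A=\by\dtsig$, optimizes out $\by$ and $w_2,w_3$ using the Bloch-sphere lemmas (Lemmas~\ref{lemm:rev}--\ref{lemm2:s0}), and only then arrives at a one-variable supremum over $w_1$. Your restriction to $\rho\in\cA$, $A=\sigma_1$ lands directly on the same one-variable problem (with $r=w_1$), so your route is sound and shorter.

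However, your expectation about the location of the supremum is wrong, and this infects your sketch for \eqref{etaCQ:ext}. The function $R_\kappa(r)$ is \emph{not} increasing in $r^2$; in every case where the paper obtains an equality the supremum is attained at $r=0$ (i.e.\ $\rho=\tfrac12 I$), as the paper itself remarks just after Conjecture~6.3. For instance, with $\kappa=\kmax$ one gets $R_{\max}(r)=\alpha^2(1-r^2)/(1-\tau^2-\alpha^2r^2)$, whose $r^2$-derivative has sign $\alpha^2+\tau^2-1\le0$, so the maximum is at $r=0$. Your claim that for $\kappa_s$ the sup at $r\to1$ ``gives $\alpha^2/(1-(\tfrac{1-s}{1+s})^2\tau^2)$ after the $\alpha^2 r^2$ term is absorbed'' cannot be right: at $r\to1$ one would get $\alpha^2/(1-(\tfrac{1-s}{1+s})^2(\tau^2+\alpha^2))$, not the stated value. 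Rather, evaluating at $r=0$ (where $\Phi(\rho)=\tfrac12(I+\tau\sigma_3)$ is diagonal and $\sigma_1$ is purely off-diagonal) already gives the exact value in \eqref{etaCQ:ext}, and the work for the equalities is to show $R_\kappa(r)\le R_\kappa(0)$; the paper does this in \eqref{eq:eta.non-un.d}--\eqref{eq:eta.non-un.e} by splitting the expression into two factors each maximized at $w_1=0$. Once you correct the direction of monotonicity your plan goes through, and your observation that all the stated lower bounds already come from the single choice $r=0$ is exactly right.
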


In the above, the function $(1+\sqrt x)^2/4x\in\cK$ is the dual of $\kappa_\WY$ in Example 4,
i.e., $1/\kappa_\WY(x^{-1})=(1+\sqrt x)^2/4x$. The identities \eqref{etaCQ:b} and
\eqref{etaCQ:g} are of course contained in \eqref{etaCQ:ext}. By Proposition
\ref{prop:semi-classical} (for CQ channels) together with \eqref{etaCQ:b} and \eqref{etaCQ:g}
we observe that
\be \label{qubitCQgen}
\alpha^2 \leq \eta_\kappa^{\riem}(\Phi_{\alpha,\tau}) \leq \frac{\alpha^2}{1 - \tau^2}
\ee
for every $\kappa\in\cK$.

Although the bounds in the above theorem are sufficient to disprove two conjectures as remarked
below, we believe that they are optimal, i.e.,

\begin{conj}
Equality holds in \eqref{etaCQ:c} through \eqref{etaCQ:e} above.
\end{conj}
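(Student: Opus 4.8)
The plan is to compute $\eta_\kappa^\riem(\Phi_{\alpha,\tau})$ exactly for \emph{every} $\kappa\in\cK$; the conjectured equalities in \eqref{etaCQ:c}--\eqref{etaCQ:e} then fall out as special cases. Write $\Phi\equiv\Phi_{\alpha,\tau}$ and let $\cA\subset\bM_2$ be the commutative algebra generated by $\{I,\sigma_1\}$. Since $\Phi$ is CQ with range of $\widehat\Phi$ inside $\cA$, the reduction used in the proof of Proposition~\ref{prop:semi-classical} gives
\[
\eta_\kappa^\riem(\Phi)=\sup_{\rho\in\cD_2\cap\cA}\ \sup_{A\in\bH_2^0\cap\cA,\,A\ne0}
\frac{\bigl\<\Phi(A),\Omega_{\Phi(\rho)}^\kappa(\Phi(A))\bigr\>}{\Tr\rho^{-1}A^2},
\]
so one may take $\rho=\tfrac12(I+r\sigma_1)$ with $|r|<1$ and $A=\sigma_1$. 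Then $\Phi(A)=\alpha\sigma_1$, $\Phi(\rho)=\tfrac12(I+\alpha r\sigma_1+\tau\sigma_3)$ has eigenvalues $\tfrac{1\pm p}{2}$ with $p=p(r)\equiv\sqrt{\alpha^2r^2+\tau^2}$, and $\Tr\rho^{-1}\sigma_1^2=4/(1-r^2)$. Splitting $\sigma_1$ into its components parallel and perpendicular (in the Bloch picture) to the axis of $\Phi(\rho)$ diagonalizes $\<\sigma_1,\Omega_{\Phi(\rho)}^\kappa(\sigma_1)\>$; the parallel (Fisher--Rao) part is $\kappa$-independent, while the perpendicular part contributes a factor $\kappa\bigl(\tfrac{1+p}{1-p}\bigr)$. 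Using the eigenbasis formula for monotone metrics on $\bM_2$ one arrives at $\eta_\kappa^\riem(\Phi)=\sup_{0\le r<1}F_\kappa(r)$, where
\[
F_\kappa(r)\equiv\alpha^2(1-r^2)\left[\frac{\alpha^2r^2}{p^2(1-p^2)}
+\frac{\tau^2}{p^2(1-p)}\,\kappa\!\left(\frac{1+p}{1-p}\right)\right].
\]

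Evaluating at $r=0$, where $p=|\tau|$, yields $F_\kappa(0)=\dfrac{\alpha^2}{1-|\tau|}\,\kappa\!\left(\tfrac{1+|\tau|}{1-|\tau|}\right)$, and a short explicit computation checks that this equals the right-hand side of \eqref{etaCQ:c}, \eqref{etaCQ:d} and \eqref{etaCQ:e} for $\kappa=(1+\sqrt x)^2/4x$, $\kappa=x^{-1/2}$ and $\kappa=(\log x)/(x-1)$, respectively (and likewise reproduces \eqref{etaCQ:f}--\eqref{etaCQ:ext}). So the conjecture is equivalent to the assertion that $F_\kappa$ attains its supremum over $[0,1)$ at $r=0$ for these $\kappa$.

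The key point is that, for fixed $r$, the map $\kappa\mapsto F_\kappa(r)$ is \emph{affine}: only the last term depends on $\kappa$, and linearly. Hence, using the extremal decomposition $\kappa=\int_{[0,1]}\kappa_s\,dm(s)$ of Proposition~\ref{prop:ope-conv}(ii) with $m$ a probability measure, $F_\kappa(r)=\int_{[0,1]}F_{\kappa_s}(r)\,dm(s)$ and therefore $\sup_r F_\kappa(r)\le\int_{[0,1]}\bigl(\sup_r F_{\kappa_s}(r)\bigr)\,dm(s)$. Everything thus reduces to the extreme points: for $\kappa_s$ in \eqref{ext-k} one verifies the clean identity $\dfrac{1}{1-p}\,\kappa_s\!\left(\tfrac{1+p}{1-p}\right)=\dfrac{1}{1-c_sp^2}$ with $c_s\equiv\bigl(\tfrac{1-s}{1+s}\bigr)^2\in[0,1]$, so $F_{\kappa_s}(0)=\dfrac{\alpha^2}{1-c_s\tau^2}$, and one proves $F_{\kappa_s}(r)\le\alpha^2/(1-c_s\tau^2)$ for all $r$. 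Granting this, $\sup_r F_\kappa(r)\le\int\dfrac{\alpha^2}{1-c_s\tau^2}\,dm(s)=F_\kappa(0)\le\sup_r F_\kappa(r)$, forcing equality throughout, i.e. $\eta_\kappa^\riem(\Phi)=F_\kappa(0)$ for every $\kappa\in\cK$ — in particular equality in \eqref{etaCQ:c}--\eqref{etaCQ:e}.

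The only substantive computation is the extreme-point inequality $F_{\kappa_s}(r)\le\alpha^2/(1-c_s\tau^2)$, and it is elementary. The constraint $\alpha^2+\tau^2\le1$ gives $1-r^2\le(1-p^2)/(1-\tau^2)$; substituting this and using $\alpha^2r^2=p^2-\tau^2$ reduces the claim, after rearrangement, to
\[
\bigl[(1-\tau^2)-c_s\alpha^2r^2\bigr](1-c_s\tau^2)\le(1-\tau^2)(1-c_sp^2),
\]
and the difference of the two sides works out to $c_s\alpha^2r^2\tau^2(1-c_s)\ge0$. The main obstacle is really conceptual rather than technical: the naive route of directly maximizing $F_\kappa(r)$ in $r$ is awkward for the BKM function (whose second term carries $p^{-1}\log\tfrac{1+p}{1-p}$) and for $x^{-1/2}$ (carrying $(1-p^2)^{-1/2}$), so one wants to recognize the affine structure in $\kappa$ and push the optimization onto the rational extreme points $\kappa_s$, where it is trivial. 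One should also confirm continuity of $F_\kappa$ on $[0,1)$ and the (easy) $r\to1$ behaviour, but these are routine.
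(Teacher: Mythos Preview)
Your argument is correct, and it actually \emph{resolves} what the paper leaves as an open conjecture --- the paper offers no proof of this statement at all, remarking only that ``a proof for arbitrary choices of $\kappa$ does not seem easy.'' So there is no paper proof to compare with; rather, you have supplied one.

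The decisive idea the paper did not exploit is the combination of two observations. First, the CQ reduction from the proof of Proposition~\ref{prop:semi-classical} collapses the full variational problem over $\cD_2\times\bH_2^0$ to a one-parameter problem over $r\in[0,1)$, with the denominator becoming the classical Fisher term $\Tr\rho^{-1}A^2$; the paper used this reduction only to establish monotonicity in $\kappa$, not to compute $\eta_\kappa^\riem$ outright. Second, and crucially, the resulting function $F_\kappa(r)$ is \emph{affine} in $\kappa$, because the diagonal (Fisher--Rao) piece is $\kappa$-independent and the off-diagonal piece is linear in $\kappa$. This lets you push the $\sup_r$ inside the extremal decomposition $\kappa=\int\kappa_s\,dm(s)$ of Proposition~\ref{prop:ope-conv}(ii), reducing everything to the extreme points $\kappa_s$, where the paper already computes $\eta_{\kappa_s}^\riem(\Phi)=\alpha^2/(1-c_s\tau^2)$ in \eqref{etaCQ:ext}; your short algebraic check of $F_{\kappa_s}(r)\le\alpha^2/(1-c_s\tau^2)$ reproves this more cleanly. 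The upshot is the closed formula $\eta_\kappa^\riem(\Phi_{\alpha,\tau})=\dfrac{\alpha^2}{1-|\tau|}\,\kappa\!\left(\dfrac{1+|\tau|}{1-|\tau|}\right)$ valid for \emph{every} $\kappa\in\cK$, from which \eqref{etaCQ:c}--\eqref{etaCQ:e} follow by substitution (and \eqref{etaCQ:b}, \eqref{etaCQ:f}, \eqref{etaCQ:g}, \eqref{etaCQ:ext} are recovered as consistency checks). I verified your formula for $F_\kappa(r)$, the identity $\tfrac{1}{1-p}\kappa_s\bigl(\tfrac{1+p}{1-p}\bigr)=(1-c_sp^2)^{-1}$, and the final inequality with difference $c_s\alpha^2r^2\tau^2(1-c_s)\ge0$; all are correct. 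By contrast, the paper's exact computations in Appendix~B.2 (for $\kmax$, $\kmin$, $\kappa_\WY$, and $\kappa_s$) proceed by brute-force optimization over the full $(\bw,\by)$ space without using the CQ reduction, which is why the non-rational cases $x^{-1/2}$, $\kappa_\BKM$ and $(1+\sqrt{x})^2/4x$ resisted that approach.
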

In those cases in which we can compute  $ \eta_{\kappa}^\riem(\Phi_{\alpha,\tau}) $ exactly,
the supremum is attained when $\rho = \half I$ or, equivalently, $\bw = (0,0,0)^t$ and
 $A=\by\dtsig$ with $\by = (y_1, 0, 0)^t$.
Since the output of $\Phi$ does not involve $w_2, w_3$ it is reasonable that there is no loss
of generality in choosing $w_2 = w_3 = 0$. And since the channel is symmetric around $w_1 = 0$
or $P =  \half I$, this choice is also reasonable.  But a proof for arbitrary choices of $\kappa$ does not seem easy.

If  the above conjecture is true, then for these examples $\kappa_1 \leq \kappa_2$
implies $\eta_{\kappa_1}^\riem(\Phi_{\alpha,\tau})\leq\eta_{\kappa_2}^\riem(\Phi_{\alpha,\tau})$,
which is consistent with Proposition \ref{prop:semi-classical}.

\begin{remark}\rm
It follows from \eqref{etaCQ:a} and \eqref{etaCQ:b} that the conjecture \cite{LR} that 
$\eta^{\riem}_g(\Phi) \leq \eta^{\tr}(\Phi)$ is false.  Indeed, whenever $\alpha > 1 - \tau^2$,
it follows that  
\be \label{conjdobfalse}
\eta_{\max}^\riem(\Phi) = \frac{\alpha^2}{1 - \tau^2} > \alpha = \eta^{\tr}(\Phi). 
\ee
Since $\alpha > \alpha^2$, parameters can be found that are consistent with the CP condition
but satisfy \eqref{conjdobfalse}. In fact, $\alpha = \tau = 1/\sqrt2$ will do.
\end{remark}

\begin{remark}\rm
Although we do not know when equality holds in the bounds above, we
have sufficient information to conclude that the largest contraction coefficient is not
necessarily given by $\kappa(x) = x^{-1/2}$ as conjectured in \cite{KT}. In particular, when
$\alpha^2 + \tau^2 = 1$, the bound $\eta_{x^{-1/2}}^\riem(\Phi) \leq \eta^{\tr}(\Phi)$
(Theorem \ref{thm:Riem-trace}) implies that equality holds in \eqref{etaCQ:d}, so
$\eta_{x^{-1/2}}^\riem(\Phi)=\alpha<1=\eta_{\max}^\riem(\Phi)$ in this case.
This is foreshadowed by the fact that  Proposition \ref{prop:semi-classical} 
says that $\eta_\kappa^\riem(\Phi)$ is monotone
increasing in $\kappa\in\cK$ for a CQ channel.

 We also note that  the inequality
$\eta_{x^{-1/2}}^{\riem} \leq \eta^{\tr}$ becomes
$\frac{\alpha^2}{ \sqrt{1- \tau^2}} \leq \alpha$ or $\frac{\alpha}{ \sqrt{1- \tau^2}} \leq 1$,
which is equivalent to the CP condition $\alpha^2 + \tau^2 \leq 1 $.   

\end{remark}
The next theorem disproves the conjecture \cite{LR} that
$\eta_\kappa^\riem(\Phi)=\eta_g^\relent(\Phi)$ for every $\kappa\in\cK$ and the corresponding
$g\in\cG_\sym$. For $0\le s\le1$ let $g_s\in\cG$ be given as in \eqref{g-ext}, so
$(g_s)_\sym\in\cG_\sym$ corresponds to to the extreme $\kappa_s\in\cK$.

\begin{thm} \label{thm:relent.neq.riem}
Let $\Phi_{\alpha,\tau}$ be as above. If $4\tau^2>(1-\alpha^2)(4-\alpha^2)$ (this is the case
when $\alpha^2+\tau^2=1$ with $\alpha>0$), then
\begin{equation}\label{relent-ext}
\eta_{g_s}^\relent(\Phi_{\alpha,\tau})\ge\eta_{(g_s)_\sym}^\relent(\Phi_{\alpha,\tau})
>\eta_{\kappa_s}^\riem(\Phi_{\alpha,\tau})
\end{equation}
for any $s\le1$ sufficiently near $1$ (depending on $\alpha,\tau$). Moreover, if
$s>\sqrt{\tfrac{5-\sqrt{21}}{2}}\approx0.457$, then
$\eta_{(g_s)_\sym}^\relent(\Phi_{\alpha,\tau})>\eta_{\kappa_s}^\riem(\Phi_{\alpha,\tau})$ when
$\alpha^2=1-\tau^2$ is sufficiently small.
\end{thm}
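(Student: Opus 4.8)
The plan is to work entirely within the Bloch-sphere parametrization of $\Phi_{\alpha,\tau}$ and to exploit the fact, already established in Section~6 (and Appendix~\ref{sect:qubitpf}), that $\eta_{\kappa_s}^\riem(\Phi_{\alpha,\tau})$ equals the explicit value in \eqref{etaCQ:ext}, with the supremum attained at $\rho=\tfrac12 I$ and $A=y_1\sigma_1$. First I would compute $\eta_{(g_s)_\sym}^\relent(\Phi_{\alpha,\tau})$ from below by plugging a convenient one-parameter family of states into the ratio $H_{(g_s)_\sym}(\Phi(\rho),\Phi(\gamma))/H_{(g_s)_\sym}(\rho,\gamma)$. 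Since the channel kills the $\sigma_2,\sigma_3$ components and sends $w_0I+w_1\sigma_1+\dots\mapsto w_0I+\alpha w_1\sigma_1+\tau w_0\sigma_3$, the natural test pair is $\rho=\tfrac12[I+u\sigma_1]$ and $\gamma=\tfrac12[I-u\sigma_1]$ for small $u$; both are diagonal in the $\sigma_1$-eigenbasis, so $H_{(g_s)_\sym}$ reduces to the classical $g_s$-divergence of two $2$-point distributions, while the images $\Phi(\rho)=\tfrac12[I+\alpha u\sigma_1+\tau\sigma_3]$ and $\Phi(\gamma)=\tfrac12[I-\alpha u\sigma_1+\tau\sigma_3]$ are qubit states whose $g_s$-divergence one can write down explicitly via \eqref{H-int} (only the $s$-term of the integral decomposition survives because $g_s$ has $c=0$ and $\mu=\delta_0$ up to normalization; more precisely $g_s(x)=(x-1)^2/(x+s)$ gives $\mu=\delta_s$). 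The key computational object is the operator $(L_{\Phi(\rho)}+sR_{\Phi(\gamma)})^{-1}$ applied to $\Phi(\rho-\gamma)=\alpha u\sigma_1$, evaluated in the eigenbasis of the two-by-two density matrices.

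The heart of the argument is then a Taylor expansion in $u$ around $u=0$. At second order in $u$ one recovers the Riemannian contraction ratio: $\lim_{u\to0}H_{(g_s)_\sym}(\Phi(\rho),\Phi(\gamma))/H_{(g_s)_\sym}(\rho,\gamma)=\eta_{\kappa_s}^\riem$-type quantity evaluated at $\rho=\tfrac12[I+\tau\sigma_3]$ rather than at $\rho=\tfrac12 I$, and this already gives the value $\alpha^2/(1-((1-s)/(1+s))^2\tau^2)$ only if the optimal foot point were the maximally mixed state — but for the relative-entropy coefficient the foot point is being dragged toward $\tfrac12[I+\tau\sigma_3]$ by the channel, which is exactly the mechanism that can push the ratio above the Riemannian value. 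So rather than the $u\to0$ limit I would keep $u$ away from $0$ and show that the ratio, as a function of $u\in(0,u_{\max})$, has derivative with the correct sign at some point; equivalently, I would expand to fourth order in $u$ and show the $u^2$-coefficient of $[\text{ratio}]-\eta_{\kappa_s}^\riem(\Phi)$ is strictly positive under the stated hypothesis $4\tau^2>(1-\alpha^2)(4-\alpha^2)$. Carrying this out requires the fourth-order Taylor data of both numerator and denominator; the denominator is the classical $g_s$-divergence of $\tfrac12(1\pm u)$, namely $\tfrac{u^2}{1-(1-s)^2u^2/(1+s)^2}\cdot\tfrac{2}{1+s}$ up to a harmless constant, which is elementary, and the numerator is a rational function of $u$ coming from the $2\times2$ functional calculus.

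For the second assertion — the threshold $s>\sqrt{(5-\sqrt{21})/2}$ — the approach is the same but now one fixes the CP-boundary relation $\alpha^2=1-\tau^2$ and lets $\tau\to0$ (so $\alpha\to1$), tracking the sign of the same $u^2$-coefficient but now as a function of $s$ alone after taking the small-$\tau$ limit; the condition $4\tau^2>(1-\alpha^2)(4-\alpha^2)$ degenerates in this limit and a finer expansion is needed, producing a polynomial inequality in $s$ whose relevant root is $s^2=(5-\sqrt{21})/2$. I expect the main obstacle to be bookkeeping: one must expand a ratio of two rational functions of $u$ (each built from $2\times2$ spectral data depending on $\alpha,\tau,s$) to high enough order and verify that a messy but explicit polynomial in $(\alpha,\tau,s)$ has the right sign on the indicated region. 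A secondary subtlety is that \eqref{relent-ext} only needs a strict inequality for \emph{some} admissible test pair, so I do not need to identify the true maximizer of $\eta_{(g_s)_\sym}^\relent$; it suffices to exhibit one family beating $\eta_{\kappa_s}^\riem(\Phi)$, and the symmetric antipodal pair $\tfrac12[I\pm u\sigma_1]$ is the obvious candidate since it is exactly the configuration that is optimal on the Riemannian side and whose image is pulled off-center by the non-unital part $\tau$. The inequalities $\eta_{g_s}^\relent\ge\eta_{(g_s)_\sym}^\relent$ in \eqref{relent-ext} are immediate from \eqref{eta-g-sym}, so no work is needed there. Full details, being elementary but lengthy, would be deferred to Appendix~\ref{sect:qubitpf} in the same spirit as the other qubit computations.
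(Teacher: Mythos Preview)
Your proposal diverges from the paper's proof in two essential ways, and one of them is a genuine error.

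\textbf{First assertion: different regime, different condition.} The paper does not use the symmetric antipodal pair $\tfrac12[I\pm u\sigma_1]$ with a small-$u$ Taylor expansion. It takes the \emph{asymmetric} pair $P=\tfrac12[I+w_1\sigma_1]$, $Q=\tfrac12 I$ and drives $|w_1|\nearrow1$, i.e.\ it pushes $P$ to the boundary of the Bloch ball rather than letting both states collapse to the center. The symmetrized divergence ratio is computed exactly (via Lemma~\ref{lemm2:s0}) for this pair, the boundary limit $\wtd H(s)/H(s)$ is evaluated in closed form, and at $s=1$ one finds that $\wtd H(1)/H(1)>\alpha^2=\eta_{\kappa_1}^\riem(\Phi)$ \emph{precisely} when $4\tau^2>(1-\alpha^2)(4-\alpha^2)$; continuity in $s$ then gives \eqref{relent-ext} for $s$ near $1$. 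The hypothesis in the theorem is therefore not a generic sufficient condition but the exact output of this boundary computation. Your fourth-order-in-$u$ coefficient, even if positive somewhere, would produce a different algebraic condition on $(\alpha,\tau)$, so at best you would prove a variant of the statement rather than the statement as written. You also correctly note that the $u\to0$ limit of your ratio reproduces $\eta_{\kappa_s}^\riem(\Phi)$ on the nose (since the optimal Riemannian foot point is $\tfrac12 I$); this means your approach has zero margin at leading order and hinges entirely on an unverified sign of a higher-order coefficient, whereas the paper's boundary limit gives an explicit rational function of $(\alpha,\tau,s)$ to compare directly with \eqref{etaCQ:ext}.

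\textbf{Second assertion: wrong limit.} You propose to fix $\alpha^2=1-\tau^2$ and let $\tau\to0$ (so $\alpha\to1$). But the statement requires $\alpha^2$ to be \emph{small}, i.e.\ $\alpha\to0$ and $\tau\to1$. In your limit $\Phi_{\alpha,\tau}$ tends to the identity channel and all contraction coefficients tend to $1$, so no separation can emerge. The paper instead keeps the same boundary test pair, writes $\wtd H(s)/H(s)$ as $\alpha^2$ times an explicit rational function of $(s,\alpha)$ on the CP boundary, divides by $\eta_{\kappa_s}^\riem(\Phi)$, and sends $\alpha^2\to0$; the resulting limit $\dfrac{3s(s+2)(2s+1)}{(s^2+4s+1)(s+1)^2}$ exceeds $1$ exactly when $s^4-5s^2+1<0$, which gives the threshold $s^2>\tfrac{5-\sqrt{21}}{2}$. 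This is where that number comes from, and it cannot be recovered from the $\tau\to0$ limit you describe.
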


Note that $(g_0)_\sym$ is $g_{\max}$ given in Example 2 and the three coefficients in
\eqref{relent-ext} are equal in particular when $s=0$, as will be shown in Proposition
\ref{prop:max} for general CPT maps.

Theorems \ref{thm:unital}, \ref{thm:etaCQ} and \ref{thm:relent.neq.riem} will be proved in
the whole Appendix B.

\section{Results in Special Cases}\label{sect:special-cases}


\subsection{BKM metric}

\begin{thm}\label{thm:BKM}
For every CPT map $\Phi$,
$\eta_\BKM^\riem(\Phi)=\eta_\BKM^\relent(\Phi)$.
\end{thm}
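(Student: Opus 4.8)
\textbf{Proof strategy for Theorem~\ref{thm:BKM}.}
The plan is to exploit the special structure of the BKM metric as the \emph{second derivative} of the symmetrized relative entropy $H_{g_\BKM}(\rho,\gamma) = H(\rho,\gamma)+H(\gamma,\rho)$, combined with the already-established equality $\eta_\kappa^\riem(\Phi)=\eta_\kappa^\geod(\Phi)$ (Theorem~\ref{thm:Riem-geod}) and the one-sided inequality $\eta_\BKM^\riem(\Phi)\le\eta_\BKM^\relent(\Phi)$ from Theorem~\ref{thm:Riem-RelEnt}. So only the reverse inequality $\eta_\BKM^\relent(\Phi)\le\eta_\BKM^\riem(\Phi)$ needs a new argument. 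First I would recall the infinitesimal identity: for $\rho\in\cD_d$ and $A\in\bH_d^0$,
\[
\lim_{\eps\searrow0}\frac{1}{\eps^2}\,H_{g_\BKM}(\rho+\eps A,\rho)
= M_\rho^\BKM(A,A)
= \bigl\<A,\Omega_\rho^\BKM(A)\bigr\>,
\]
which holds because $g_\BKM(1)=g_\BKM'(1)=0$ and $g_\BKM''(1)=2$ (this is exactly the defining link $\kappa_\BKM(x)=g_\BKM(x)/(x-1)^2$, and it is the local version of the relation underlying \eqref{kg-rel}).

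The key step is an \emph{integrated/geodesic} comparison going the other direction: I want to show that $H_{g_\BKM}(\rho,\gamma)$ can be recovered, or at least bounded below, by integrating the BKM Riemannian length element along a suitable path, so that contraction of the metric forces contraction of the symmetrized relative entropy. The natural candidate is to use the fact that along the (affine or metric) segment joining $\gamma$ to $\rho$ one has a monotonicity/convexity statement relating $H_{g_\BKM}$ to $\int \sqrt{M^\BKM}$; more robustly, I would try to express $H_{g_\BKM}(\rho,\gamma)$ as a double integral of $M^\BKM$ along the line segment $\xi(t)=(1-t)\gamma+t\rho$, namely something of the shape $H_{g_\BKM}(\rho,\gamma)=\int_0^1\int_0^1 M^\BKM_{\xi(st)}(\rho-\gamma,\rho-\gamma)\,\text{(weight)}\,ds\,dt$ or a single-integral variant $\int_0^1 (1-t)\,M^\BKM_{\xi(t)}(\rho-\gamma,\rho-\gamma)\,dt$ — this kind of Taylor-with-integral-remainder representation is available because $\log$ has a clean integral form and $\Omega^\BKM_\rho = \int_0^\infty (\rho+tI)^{-1}\cdot(\rho+tI)^{-1}\,dt$ as in \eqref{Omega:BKM}. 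Given such a representation, applying the Riemannian contraction bound $M^\BKM_{\Phi(\xi(t))}(\Phi(A),\Phi(A))\le \eta_\BKM^\riem(\Phi)\,M^\BKM_{\xi(t)}(A,A)$ pointwise under the integral (using $\Phi$ affine, so $\Phi(\xi(t))$ is the corresponding segment from $\Phi(\gamma)$ to $\Phi(\rho)$, and $\Phi(\rho-\gamma)=\Phi(\rho)-\Phi(\gamma)$) immediately yields $H_{g_\BKM}(\Phi(\rho),\Phi(\gamma))\le \eta_\BKM^\riem(\Phi)\,H_{g_\BKM}(\rho,\gamma)$, hence $\eta_\BKM^\relent(\Phi)\le\eta_\BKM^\riem(\Phi)$, completing the proof.

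The main obstacle I expect is establishing the correct integral representation of $H_{g_\BKM}(\rho,\gamma)$ in terms of $M^\BKM$ along the segment with a \emph{nonnegative} weight and with the metric evaluated at points of $\cD_d$ (not merely in $\bP_d$), so that the pointwise contraction inequality is legitimately applicable; the subtlety is that the naive Taylor remainder gives $\partial_t^2 H_{g_\BKM}(\xi(t),\gamma)$, which is the metric in the \emph{first} argument only, not the symmetric metric $M^\BKM_{\xi(t)}(\rho-\gamma,\rho-\gamma)$ — one must symmetrize, or instead use the geodesic (rather than affine) interpolation and invoke Theorem~\ref{thm:Riem-geod} to pass between $\eta^\geod$ and $\eta^\riem$. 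A cleaner route, which I would pursue in parallel, is to differentiate along a one-parameter family: set $f(t)=H_{g_\BKM}(\rho_t,\gamma_t)$ where $\rho_t,\gamma_t$ is the BKM-geodesic from a common point, show $f(0)=f'(0)=0$ and $f''(0)=M^\BKM(\cdot,\cdot)$, and use that the map $\Phi$ contracts the whole family, reducing the global statement to the already-known infinitesimal one via a scaling/limiting argument as in the proof of Theorem~\ref{thm:Riem-geod}. Either way, the crux is controlling the gap between the "one-sided" second-order behavior of relative entropy and the symmetric metric, and verifying the comparison survives integration along the path inside $\cD_d$.
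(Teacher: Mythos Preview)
Your strategy is the paper's: express $H_{g_\BKM}(\rho,\gamma)$ as an integral of $M^\BKM$ along the affine segment $\xi(t)=(1-t)\rho+t\gamma$ and apply the Riemannian contraction bound pointwise under the integral. The clean resolution of your symmetry worry is to differentiate in the \emph{second} argument rather than take a second derivative: Daleckii--Krein gives $\frac{d}{dt}\bigl\{H(\rho,\xi(t))-H(\gamma,\xi(t))\bigr\}=\bigl\langle\gamma-\rho,\Omega_{\xi(t)}^\BKM(\gamma-\rho)\bigr\rangle$ exactly, so integrating from $0$ to $1$ yields the identity $H_{g_\BKM}(\rho,\gamma)=\int_0^1 M_{\xi(t)}^\BKM(\gamma-\rho,\gamma-\rho)\,dt$ with constant weight $1$ --- a first-order computation, no Taylor remainder and no need for geodesic interpolation or Theorem~\ref{thm:Riem-geod}.
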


\begin{proof}
By Theorem \ref{thm:Riem-RelEnt} it suffices to prove that
\begin{equation}\label{ineq:BKM}
\eta_\BKM^\relent(\Phi)\le\eta_\BKM^\riem(\Phi)
\end{equation}
for every CPT map $\Phi$. To do this, consider the line segment
$\xi(t)\equiv (1-t)\rho+t\gamma=\rho+t(\gamma-\rho)$, $0\le t\le1$, joining
$\rho,\gamma\in\cD_d$. By using Daleckii and Krein's differential formula (see, e.g.,
\cite[Section 2.3]{Hi}) we compute the derivative
\begin{align*}
{d\over dt}\,H(\rho,\xi(t))&={d\over dt}\,\Tr\rho(\log\rho-\log\xi(t))
=-\Tr\rho\biggl({d\over dt}\log\xi(t)\biggr) \\
&=-\Tr\rho\log^{[1]}(L_{\xi(t)},R_{\xi(t)})(\gamma-\rho),
\end{align*}
where $\log^{[1]}(x,y)\equiv (\log x-\log y)/(x-y)$, the divided difference of $\log x$. We
hence have
\begin{equation}\label{derivative1}
{d\over dt}\,H(\rho,\xi(t))
=-\Tr\rho\,{\log L_{\xi(t)}-\log R_{\xi(t)}\over L_{\xi(t)}-R_{\xi(t)}}\,(\gamma-\rho)
=-\bigl\<\rho,\Omega_{\xi(t)}^\BKM(\gamma-\rho)\bigr\>
\end{equation}
thanks to \eqref{Omega:BKM}, and similarly
\begin{equation}\label{derivative2}
{d\over dt}\,H(\gamma,\xi(t))
=-\bigl\<\gamma,\Omega_{\xi(t)}^\BKM(\gamma-\rho)\bigr\>.
\end{equation}
Therefore,
$$
{d\over dt}\,\{H(\rho,\xi(t))-H(\gamma,\xi(t))\}
=\bigl\<\gamma-\rho,\Omega_{\xi(t)}^\BKM(\gamma-\rho)\bigr\>
$$
so that
\begin{align*}
H_\BKM(\rho,\gamma)&\equiv H(\rho,\gamma)+H(\gamma,\rho) \\
&=\{H(\rho,\xi(1))-H(\gamma,\xi(1))\}-\{H(\rho,\xi(0))-H(\gamma,\xi(0))\} \\
&=\int_0^1{d\over dt}\,\{H(\rho,\xi(t))-H(\gamma,\xi(t))\}\,dt \\
&=\int_0^1\bigl\<\gamma-\rho,\Omega_{\xi(t)}^\BKM(\gamma-\rho)\bigr\>\,dt.
\end{align*}
By replacing $\rho,\gamma$ with $\Phi(\rho),\Phi(\gamma)$ we also have
$$
H_\BKM(\Phi(\rho),\Phi(\gamma))
=\int_0^1\bigl\<\Phi(\gamma-\rho),
\Omega_{\Phi(\xi(t))}^\BKM(\Phi(\gamma-\rho))\bigr\>\,dt.
$$
Since
$$
\bigl\<\Phi(\gamma-\rho),
\Omega_{\Phi(\xi(t))}^\BKM(\Phi(\gamma-\rho))\bigr\>
\le\eta_\BKM^\riem(\Phi)\bigl\<\gamma-\rho,
\Omega_{\xi(t)}^\BKM(\gamma-\rho)\bigr\>,\qquad0\le t\le1,
$$
the desired inequality \eqref{ineq:BKM} follows.
\end{proof}
 
The differential expressions in \eqref{derivative1} and \eqref{derivative2} are quite special,
so it seems that we cannot apply the differential method as above for other
$\kappa\in\cK$. One may also consider the contraction coefficient  defined in
\eqref{eta-relent}
$$
\eta_{x\log x}^\relent(\Phi)\equiv \sup_{\rho,\gamma\in\cD_d,\,\rho\ne\gamma}
{H(\Phi(\rho),\Phi(\gamma))\over H(\rho,\gamma)}
$$
with respect to the standard (non-symmetrized) relative entropy.
One has $\eta_\BKM^\relent(\Phi)\le\eta_{x\log x}^\relent(\Phi)$ by the inequality in
\eqref{eta-g-sym}, but it is unknown whether both contraction coefficients coincide or not.

\subsection{Other special results}

We collect here some additional special relations that may be of interest.

%
Since the maximal metric has the special property \eqref{quadspec} that every 
  metric $\<A,\Omega_\rho^{\max}(A)\>$ can be realized as a quadratic relative
entropy,  Theorem \ref{thm:Riem-RelEnt} immediately implies the following:
\begin{prop}\label{prop:max}
For every CPT map $\Phi$,
\begin{equation}\label{max-riem-ent}
\eta_{\max}^\riem(\Phi)=\eta_{g_{\max}}^\relent(\Phi)
=\eta_{(x-1)^2}^\relent(\Phi),
\end{equation}
where $g_{\max}\in\cG_\sym$ is given in Example \ref{ex:max}.
\end{prop}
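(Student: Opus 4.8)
The plan is to sandwich the three coefficients, the outer two inequalities coming from Theorem~\ref{thm:Riem-RelEnt} and the inner one from the special form \eqref{quadspec} of the maximal metric. I would apply Theorem~\ref{thm:Riem-RelEnt} to the function $g(x)\equiv(x-1)^2$, which lies in $\cG$ since it is operator convex with $g(1)=0$ and $g''(1)=2>0$. Its symmetrization is $g_\sym=g_{\max}$ (as computed in Example~\ref{ex:max}) and the associated function in $\cK$ is $\kmax$, so the chain $\eta_\kappa^\riem(\Phi)\le\eta_{g_\sym}^\relent(\Phi)\le\eta_g^\relent(\Phi)$ of that theorem reads
$$
\eta_{\max}^\riem(\Phi)\le\eta_{g_{\max}}^\relent(\Phi)\le\eta_{(x-1)^2}^\relent(\Phi).
$$
It therefore remains only to prove $\eta_{(x-1)^2}^\relent(\Phi)\le\eta_{\max}^\riem(\Phi)$.

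For this I would start from the elementary identity behind \eqref{quadspec}: for all $\mu,\nu\in\cD_d$, using $\Omega_\nu^{\max}(X)=\tfrac12(\nu^{-1}X+X\nu^{-1})$ together with cyclicity of the trace,
$$
\bigl\<\mu-\nu,\Omega_\nu^{\max}(\mu-\nu)\bigr\>=\Tr(\mu-\nu)^2\nu^{-1}=H_{(x-1)^2}(\mu,\nu).
$$
Now fix $\rho\ne\gamma$ in $\cD_d$ and set $A\equiv\gamma-\rho\in\bH_d^0$ with $A\ne0$. By linearity $\Phi(A)=\Phi(\gamma)-\Phi(\rho)$, and $\Phi(\rho),\Phi(\gamma)$ are density matrices since $\Phi$ is trace-preserving. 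Applying the identity to $(\mu,\nu)=(\rho,\gamma)$ in the denominator and to $(\mu,\nu)=(\Phi(\rho),\Phi(\gamma))$ in the numerator gives
$$
\frac{H_{(x-1)^2}(\Phi(\rho),\Phi(\gamma))}{H_{(x-1)^2}(\rho,\gamma)}
=\frac{\bigl\<\Phi(A),\Omega_{\Phi(\gamma)}^{\max}(\Phi(A))\bigr\>}{\bigl\<A,\Omega_\gamma^{\max}(A)\bigr\>}
\le\eta_{\max}^\riem(\Phi),
$$
the last step being the definition \eqref{eta-Riem} with foot point $\gamma$ (and $\kappa=\kmax$). Taking the supremum over all $\rho\ne\gamma$ yields $\eta_{(x-1)^2}^\relent(\Phi)\le\eta_{\max}^\riem(\Phi)$, and together with the displayed chain all three coefficients coincide.

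The argument is pure book-keeping, so I do not expect a real obstacle; the only point requiring care is that $\Phi(\rho)$ and $\Phi(\gamma)$ need not be positive definite, so in the last display the right-hand side of \eqref{eta-Riem} must be read through the support-projection convention introduced after \eqref{Riem-mono} (equivalently, one first restricts to pairs with $\Phi(\rho),\Phi(\gamma)\in\cD_{d'}$ and then passes to the limit by continuity). I would also correct the evident misprint in \eqref{quadspec} of Example~\ref{ex:max}, where ``$\sigma$'' should read ``$\gamma$''. In fact the very same computation applied to $\rho=\gamma+\eps A$ with $\eps\searrow0$ arbitrary shows directly that $\eta_{(x-1)^2}^\relent(\Phi)=\eta_{\max}^\riem(\Phi)$, so that Theorem~\ref{thm:Riem-RelEnt} is only needed to squeeze $\eta_{g_{\max}}^\relent(\Phi)$ between these two equal quantities --- which is the sense in which the proposition follows ``immediately''.
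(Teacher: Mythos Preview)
Your proof is correct and follows exactly the approach the paper has in mind: the paper's ``proof'' is simply the sentence preceding the proposition, noting that \eqref{quadspec} realizes every $(x-1)^2$-divergence as a maximal-metric quantity, so that Theorem~\ref{thm:Riem-RelEnt} immediately closes the chain of inequalities. You have merely spelled out the sandwich argument in detail (and correctly flagged both the $\sigma/\gamma$ misprint in \eqref{quadspec} and the support-projection caveat).
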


The identities \eqref{max-riem-ent} and \eqref{min-riem-ent} show some asymmetry between the
contraction properties of $\kmax$ and $\kmin$; \eqref{max-riem-ent} holds for all quantum
channels while \eqref{min-riem-ent} does for only QC channels, see a counter-example ($s=1$)
in Theorem \ref{thm:relent.neq.riem} for a CQ channel.


The functions $\kappa_t^\WYD$ given in \eqref{k-WYD} showed up through the representation of
the Wigner-Yanase-Dyson skew information in terms of monotone metrics, as described in
\cite[Section 2.4, Example 4.8]{HKPR}. Furthermore, as studied in \cite{JR}, the trace
functional of WYD concavity/convexity \cite{Li,An2} is recovered by the quasi-entropy for
$g^{(t)}$ given in \eqref{g^t} as follows:
$$
J_t(K,A,B)\equiv \<KB^{1/2},g^{(t)}(L_AR_B^{-1})KB^{1/2}\>
={1\over t(1-t)}\,(\Tr K^*AK-\Tr K^*A^tKB^{1-t})
$$
for $A,B\in\bP_d$ with a linear term. The $g^{(t)}$-divergence is
$$
H_t(\rho,\gamma)\equiv J_t(I,\rho,\gamma)={1-\Tr\rho^t\gamma^{1-t}\over t(1-t)}.
$$
Note that $H(\rho,\gamma)=\lim_{t\to1}H_t(\rho,\gamma)$ and
$H(\gamma,\rho)=\lim_{t\to0}H_t(\rho,\gamma)$ so that $H_t(\rho,\gamma)$ forms a
one-parameter extension of the relative entropy. By Theorem \ref{thm:Riem-RelEnt} we have

\begin{prop}\label{prop:WYD}
For every CPT map $\Phi$ and every $t\in(0,1)$,
$$
\eta_{\kappa_t^\WYD}^\riem(\Phi)\le\eta_{g^{(t)}}^\relent(\Phi)
=\sup_{\rho,\gamma\in\cD_d,\,\rho\ne\gamma}
{1-\Tr\Phi(\rho)^t\Phi(\gamma)^{1-t}\over1-\Tr\rho^t\gamma^{1-t}}.
$$
\end{prop}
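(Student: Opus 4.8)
The plan is to deduce Proposition~\ref{prop:WYD} directly from Theorem~\ref{thm:Riem-RelEnt}, which is already available and does all the heavy lifting. First I would recall that for $t\in(0,1)$ the function $g^{(t)}(x)=(x-x^t)/\bigl(t(1-t)\bigr)$ lies in $\cG$: it is operator convex on $(0,\infty)$ because $x^t$ is operator concave for $t\in(0,1)$, it vanishes at $x=1$, and a short computation gives $(g^{(t)})''(1)=1>0$. Hence Theorem~\ref{thm:Riem-RelEnt} applies with $g=g^{(t)}$, whose symmetrization $g^{(t)}_\sym$ corresponds via \eqref{kg-rel} to exactly the function $\kappa_t^\WYD$ of \eqref{k-WYD}; this correspondence is recorded in Example~\ref{ex:WYD}. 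Therefore Theorem~\ref{thm:Riem-RelEnt} immediately yields
$$
\eta_{\kappa_t^\WYD}^\riem(\Phi)\le\eta_{g^{(t)}_\sym}^\relent(\Phi)\le\eta_{g^{(t)}}^\relent(\Phi),
$$
which is the inequality asserted in the proposition.

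Next I would make the displayed supremum formula for $\eta_{g^{(t)}}^\relent(\Phi)$ explicit. By definition \eqref{eta-RelEnt}, $\eta_{g^{(t)}}^\relent(\Phi)=\sup_{\rho\ne\gamma} H_{g^{(t)}}(\Phi(\rho),\Phi(\gamma))/H_{g^{(t)}}(\rho,\gamma)$. Using the identification $H_{g^{(t)}}(\rho,\gamma)=H_t(\rho,\gamma)=\bigl(1-\Tr\rho^t\gamma^{1-t}\bigr)/\bigl(t(1-t)\bigr)$, which follows from \eqref{H_g} with $J_t(I,\rho,\gamma)$ as in the preceding paragraph of the excerpt, the common factor $t(1-t)$ cancels in numerator and denominator, giving the stated ratio $\bigl(1-\Tr\Phi(\rho)^t\Phi(\gamma)^{1-t}\bigr)/\bigl(1-\Tr\rho^t\gamma^{1-t}\bigr)$. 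One should note here that since $\Phi$ is CPT, $\Phi(\rho),\Phi(\gamma)$ are density matrices (possibly only positive semidefinite if $\Phi$ is not faithful), so the quantity $\Tr\Phi(\rho)^t\Phi(\gamma)^{1-t}$ is well defined; if one wants $\rho,\gamma\in\cD_d$ strictly and $\Phi(\rho)$ not invertible, one interprets the quasi-entropy on the support of $\Phi(I_d)$ as in the remark following \eqref{Riem-mono}, or simply observes the ratio extends continuously.

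Since essentially everything needed is already proved, there is no genuine obstacle; the only points requiring a word of care are (i) verifying $g^{(t)}\in\cG$, i.e.\ the operator convexity and the normalization $(g^{(t)})''(1)>0$ — both standard facts about $x\mapsto x^t$ — and (ii) checking that the symmetrization of $g^{(t)}$ corresponds under \eqref{kg-rel} precisely to $\kappa_t^\WYD$, which is exactly the content of \eqref{k-WYD} in Example~\ref{ex:WYD} and need only be cited. I would present the proof in two lines: invoke Theorem~\ref{thm:Riem-RelEnt} for $g=g^{(t)}$, then substitute the closed form of $H_t$ into \eqref{eta-RelEnt}.
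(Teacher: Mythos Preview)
Your proposal is correct and matches the paper's approach exactly: the paper simply writes ``By Theorem~\ref{thm:Riem-RelEnt} we have'' before stating the proposition, relying on Example~\ref{ex:WYD} for the correspondence between $g^{(t)}_\sym$ and $\kappa_t^\WYD$ and on the explicit formula $H_t(\rho,\gamma)=(1-\Tr\rho^t\gamma^{1-t})/(t(1-t))$ recorded just above it. Your two-line summary is precisely what the paper does.
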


For $\kappa_\WY(x)=4/(1+\sqrt x)^2$ and $\kmin(x)=2/(1+x)$, Theorem \ref{thm:Riem-geod} together
with \eqref{geod-WY} and \eqref{geod-Bures} yields
\begin{align}
\eta_\WY^\riem(\Phi)&=\sup_{\rho,\gamma\in\cD_d,\,\rho\ne\gamma}
\Biggl[{\arccos\Tr\Phi(\rho)^{1/2}\Phi(\gamma)^{1/2}\over
\arccos\Tr\rho^{1/2}\gamma^{1/2}}\Biggr]^2, \nonumber\\
\eta_{\min}^\riem(\Phi)&=\sup_{\rho,\gamma\in\cD_d,\,\rho\ne\gamma}
\biggl[{\arccos F(\Phi(\rho),\Phi(\gamma))\over\arccos F(\rho,\gamma)}\biggr]^2.
\label{Riem:min}
\end{align}
Since
$$
\biggl({\arccos t\over\arccos s}\biggr)^2<{1-t\over1-s}\quad
\mbox{for $0<s<t<1$},
$$
from \eqref{Riem:min} and \eqref{dist-Bures} we also have

\begin{prop}\label{prop:min}
For every CPT map $\Phi$,
\begin{align*}
\eta_{\min}^\riem(\Phi)&\le\sup_{\rho,\gamma\in\cD_d,\,\rho\ne\gamma}
{1-F(\Phi(\rho),\Phi(\gamma))\over1-F(\rho,\gamma)}
=\sup_{\rho,\gamma\in\cD_d,\,\rho\ne\gamma}
\biggl[{d_\Bures(\Phi(\rho),\Phi(\gamma))\over d_\Bures(\rho,\gamma)}\biggr]^2.
\end{align*}
\end{prop}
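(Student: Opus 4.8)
The plan is to reduce the statement to the two-variable inequality $(\arccos t/\arccos s)^2 < (1-t)/(1-s)$ for $0 < s < t < 1$, which is already quoted just above the statement, together with the geodesic formula \eqref{geod-Bures} for the Bures metric and Theorem \ref{thm:Riem-geod}. First I would invoke Theorem \ref{thm:Riem-geod} with $\kappa = \kmin$ to write $\eta_{\min}^\riem(\Phi) = \eta_{\min}^\geod(\Phi)$, and then use \eqref{geod-Bures}, $D_{\min}(\rho,\gamma) = \arccos F(\rho,\gamma)$, exactly as was done to obtain \eqref{Riem:min}. Thus
$$
\eta_{\min}^\riem(\Phi) = \sup_{\rho,\gamma\in\cD_d,\,\rho\ne\gamma}
\biggl[\frac{\arccos F(\Phi(\rho),\Phi(\gamma))}{\arccos F(\rho,\gamma)}\biggr]^2.
$$

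Next, for a fixed pair $\rho \ne \gamma$, set $s \equiv F(\rho,\gamma)$ and $t \equiv F(\Phi(\rho),\Phi(\gamma))$. Monotonicity of the fidelity under CPT maps gives $t \ge s$, and $s < 1$ since $\rho \ne \gamma$ (the fidelity equals $1$ only for equal states). If $t = s$ or the ratio term happens to be handled by a degenerate case (e.g. $t$ close to $1$), one checks directly; otherwise $0 \le s < t < 1$, and the quoted inequality $(\arccos t/\arccos s)^2 < (1-t)/(1-s)$ — extended to allow $s = 0$ by continuity of both sides — applies. Hence each term in the supremum defining $\eta_{\min}^\riem(\Phi)$ is strictly dominated by $(1 - F(\Phi(\rho),\Phi(\gamma)))/(1 - F(\rho,\gamma))$, so taking the supremum yields
$$
\eta_{\min}^\riem(\Phi) \le \sup_{\rho,\gamma\in\cD_d,\,\rho\ne\gamma}
\frac{1 - F(\Phi(\rho),\Phi(\gamma))}{1 - F(\rho,\gamma)}.
$$
Finally, the displayed identity $d_\Bures(\rho,\gamma)^2 = 2 - 2F(\rho,\gamma)$ from \eqref{dist-Bures} lets us rewrite $(1-F(\Phi(\rho),\Phi(\gamma)))/(1-F(\rho,\gamma)) = [d_\Bures(\Phi(\rho),\Phi(\gamma))/d_\Bures(\rho,\gamma)]^2$, giving the second equality in the statement.

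The only genuinely delicate point is the endpoint behavior: one must be careful that the supremum over $\rho \ne \gamma$ can approach a situation where $s \to 0$ (orthogonal supports) or $t \to 1$ (nearly no contraction), and check that the scalar inequality still gives the non-strict bound $\le$ in the limit. Since the paper already states the scalar inequality is strict for $0 < s < t < 1$, passing to the closure gives the weak inequality claimed, and no further estimates are needed; everything else is a direct substitution using results already established in the excerpt.
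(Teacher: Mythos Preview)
Your proposal is correct and follows essentially the same route as the paper: the paper's argument is just the one-line observation preceding the proposition, combining \eqref{Riem:min} with the scalar inequality $(\arccos t/\arccos s)^2<(1-t)/(1-s)$ and the definition \eqref{dist-Bures}. Your version is slightly more explicit about the endpoint cases $s=0$, $s=t$, and $t\to1$, but the substance is identical.
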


\medskip

\noindent{\bf Acknowledgments:}  Part of this work was done when both authors participated
in a workshop at  Centro de Ciencias de Benasque Pedro Pascual  in Benasque, Spain 
and during the ICM Satellite Conference on Operator Algebras and Applications
in Cheongpung, Korea.
The work of FH was supported in part by Grant-in-Aid for Scientific Research (C)21540208.
The work of MBR was partially supported by NSF grant CCF 1018401 which was administered by Tufts University.

\appendix

\section{Hiai-Petz Lemma}\label{sect:appendA}

The next lemma was proved in \cite{HP0} in a slightly different setting of the Riemannian
manifold $\bP_d$ and its tangent space $\bH_d$ instead of $\cD_d$ and $\bH_d^0$ here,
whose proof can work in the present setting as well. The proof is provided below for
completeness.

\begin{lemma}\label{lemma:limit-form}{\bf\cite[Lemma 4.2]{HP0}}\enspace
For every $\kappa\in\cK$, $\rho\in\cD_d$ and $A\in\bH_d^0$,
$$
\lim_{\eps\searrow0}{D_\kappa(\rho,\rho+\eps A)\over\eps}
=\sqrt{\<A,\Omega_\rho^\kappa(A)\>}.
$$
\end{lemma}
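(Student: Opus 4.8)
The plan is to establish the two one-sided inequalities
\[
\limsup_{\eps\searrow0}\frac{D_\kappa(\rho,\rho+\eps A)}{\eps}\le\sqrt{\bigl\<A,\Omega_\rho^\kappa(A)\bigr\>}
\quad\text{and}\quad
\liminf_{\eps\searrow0}\frac{D_\kappa(\rho,\rho+\eps A)}{\eps}\ge\sqrt{\bigl\<A,\Omega_\rho^\kappa(A)\bigr\>},
\]
using that $\rho$ is an interior (positive definite) point of the manifold and that the tangent inner product $X\mapsto\bigl\<X,\Omega_\sigma^\kappa(X)\bigr\>$ depends continuously on $\sigma$. Throughout one should recall that $\Omega_\sigma^\kappa=R_\sigma^{-1}\kappa(L_\sigma R_\sigma^{-1})$ is a self-adjoint positive invertible operator on the Hilbert space $\bM_d$ for every $\sigma\in\bP_d$ (since $\kappa>0$ on $(0,\infty)$ and $L_\sigma,R_\sigma$ are commuting positive invertible operators), mapping $\bH_d$ into $\bH_d$, and that $\sigma\mapsto\Omega_\sigma^\kappa$ is norm-continuous on $\bP_d$ via the functional calculus.

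For the upper bound I would simply use the straight segment $\xi(t)\equiv\rho+t\eps A$, $t\in[0,1]$, as a competitor in the definition \eqref{geod}; it lies in $\cD_d$ for $\eps$ small because $\rho>0$, its velocity is $\xi'(t)=\eps A\in\bH_d^0$, and hence
\[
D_\kappa(\rho,\rho+\eps A)\le\eps\int_0^1\sqrt{\bigl\<A,\Omega_{\rho+t\eps A}^\kappa(A)\bigr\>}\,dt.
\]
As $\eps\searrow0$ the integrand tends to $\sqrt{\bigl\<A,\Omega_\rho^\kappa(A)\bigr\>}$ uniformly in $t$ by continuity, which yields the $\limsup$ bound.

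For the lower bound, fix $\delta\in(0,1)$. By continuity of $\sigma\mapsto\Omega_\sigma^\kappa$ and positive invertibility of $\Omega_\rho^\kappa$ there are $r>0$ and $c>0$ such that, on the closed ball $\overline U\equiv\{\sigma\in\cD_d:\|\sigma-\rho\|_2\le r\}$ (which lies in $\cD_d$ for $r$ small), one has $\bigl\<X,\Omega_\sigma^\kappa(X)\bigr\>\ge(1-\delta)\bigl\<X,\Omega_\rho^\kappa(X)\bigr\>\ge c\|X\|_2^2$ for all $\sigma\in\overline U$ and all $X\in\bH_d$. Let $\xi$ be any piecewise smooth curve in $\cD_d$ from $\rho$ to $\rho+\eps A$. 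If $\xi$ ever reaches $\|\cdot\|_2$-distance $r$ from $\rho$, then its portion up to the first such time stays in $\overline U$ and joins $\rho$ to the sphere $\|\sigma-\rho\|_2=r$, so has length at least $\sqrt c\,r$ by the displayed comparison and the triangle inequality in $\bM_d$. Otherwise $\xi$ stays in $\overline U$, its velocities lie in $\bH_d^0$, and
\[
\int_0^1\sqrt{\bigl\<\xi'(t),\Omega_{\xi(t)}^\kappa(\xi'(t))\bigr\>}\,dt\ge\sqrt{1-\delta}\int_0^1\sqrt{\bigl\<\xi'(t),\Omega_\rho^\kappa(\xi'(t))\bigr\>}\,dt\ge\sqrt{1-\delta}\,\sqrt{\bigl\<\eps A,\Omega_\rho^\kappa(\eps A)\bigr\>},
\]
the last step being the triangle inequality for the $\Omega_\rho^\kappa$-norm applied to $\int_0^1\xi'(t)\,dt=\eps A$. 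Once $\eps$ is small enough that $\eps\sqrt{\bigl\<A,\Omega_\rho^\kappa(A)\bigr\>}<\sqrt c\,r$, the first alternative cannot be competitive, so $D_\kappa(\rho,\rho+\eps A)\ge\eps\sqrt{1-\delta}\,\sqrt{\bigl\<A,\Omega_\rho^\kappa(A)\bigr\>}$; letting $\delta\searrow0$ finishes the argument.

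The only genuinely delicate point is the local comparability of the metrics $\Omega_\sigma^\kappa$ for $\sigma$ near $\rho$ together with the ``escape'' estimate ruling out long excursions; both are immediate in the present finite-dimensional setting from continuity of the functional calculus and compactness of the unit sphere of $\bH_d$, with $\rho>0$ being exactly what makes $\Omega_\rho^\kappa$ bounded below. Everything else is the standard first-variation-of-arclength computation. (This is in essence the proof of \cite[Lemma 4.2]{HP0}, transcribed from $\bP_d,\bH_d$ to $\cD_d,\bH_d^0$.)
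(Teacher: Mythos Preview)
Your proof is correct and follows essentially the same strategy as the paper's: use continuity of $\sigma\mapsto\Omega_\sigma^\kappa$ at $\rho$ to sandwich the metric, take the straight segment for the upper bound, and reduce the lower bound to a Euclidean length estimate via the linear map $(\Omega_\rho^\kappa)^{1/2}$.

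The one noteworthy difference is in how the lower bound deals with curves that might wander far from $\rho$. The paper restricts attention to curves of Riemannian length $<r_0$ and then invokes the general fact (citing Kobayashi--Nomizu) that the geodesic-distance topology on $\cD_d$ coincides with the $\|\cdot\|_2$ topology, so such curves automatically stay in a small $\|\cdot\|_2$-ball; from there it uses the \emph{additive} comparison $\Omega_{\xi(t)}^\kappa\ge\Omega_\rho^\kappa-\delta\bI$. You instead treat arbitrary curves, use the \emph{multiplicative} comparison $\Omega_\sigma^\kappa\ge(1-\delta)\Omega_\rho^\kappa$ on a ball, and dispose of escaping curves by a direct length estimate $\ge\sqrt{c}\,r$. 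Your argument is thus slightly more self-contained (no appeal to an outside topological lemma), while the paper's additive perturbation gives an explicit Schur-multiplication formula for the bounds; the substance is the same.
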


\begin{proof}
First, recall that if $\mathbb{T}$ is a linear operator on the Hilbert space
$(\bM_d,\<\cdot,\cdot\>)$ represented as the Schur multiplication by a matrix
$(t_{ij})\in \bH_d$, then $\mathbb{T}\ge0$ if and only if $t_{ij}\ge0$ for all
$i,j=1,\dots,n$. We denote by $\bI$ the identity operator on $(\bM_d,\<\cdot,\cdot\>)$,
which is represented as the Schur multiplication by the matrix of all entries equal to $1$.
To prove the lemma, we may assume that $\rho=\diag(\lambda_1,\dots,\lambda_d)$. For brevity
let $a_{ij}\equiv \lambda_j^{-1}\kappa(\lambda_i\lambda_j^{-1})$ for $i,j=1,\dots,d$ and
$\alpha\equiv \min_{i,j}a_{ij}>0$. Since
$$
\Omega_\rho^\kappa(X)=(a_{ij})_{ij}\circ X,\qquad X\in\bM_d
$$
(see, e.g., \cite[Section 2.2]{HKPR}), it follows that
$\Omega_\rho^\kappa\ge\alpha\bI$ as operators on $(\bM_d,\<\cdot,\cdot\>)$. For each
$\delta\in(0,\alpha)$, since $\gamma\in\cD_d\mapsto\Omega_\gamma^\kappa$ is continuous,
there exists an $r_1>0$ such that if $\gamma\in\cD_d$ and $\|\gamma-\rho\|_2<r_1$ then
$\|\Omega_\gamma^\kappa-\Omega_\rho^\kappa\|_\infty<\delta$, where $\|\cdot\|_\infty$
denotes the operator norm for operators on $(\bM_d,\<\cdot,\cdot\>)$. Furthermore,
since $D_\kappa$ and $\|\cdot\|_2$ define the same topology on $\cD_d$ (see
\cite[Chapter IV, Proposition 3.5]{KN}), there exists an $r_0>0$ such that if
$\gamma\in\cD_d$ and $D_\kappa(\gamma,\rho)<r_0$ then $\|\gamma-\rho\|_2<r_1$.

Now let $A\in\bH_d^0$ and choose a sufficiently small $\eps>0$ so that
$D_\kappa(\rho,\rho+\eps A)<r_0$ and $\eps\|A\|_2<r_1$. Let $\xi:[0,1]\to\cD_d$ be any
smooth curve from $\rho$ to $\rho+\eps A$ such that $L_\kappa(\xi)<r_0$, where
$L_\kappa(\xi)$ is the length of $\xi$ with respect to the monotone metric induced by
$\kappa$. Since $D_\kappa(\xi(t),\rho)<r_0$ and so $\|\xi(t)-\rho\|_2<r_1$ for all
$0\le t\le1$, we have
\begin{align*}
L_\kappa(\xi)
&=\int_0^1\sqrt{\bigl\<\xi'(t),\Omega_{\xi(t)}^\kappa\xi'(t)\bigr\>}\,dt \\
&\ge\int_0^1\sqrt{\<\xi'(t),(\Omega_\rho^\kappa-\delta\bI)\xi'(t)\>}\,dt \\
&=\int_0^1\|(\Omega_\rho^\kappa-\delta\bI)^{1/2}\xi'(t)\|\,dt \\
&\ge\|(\Omega_\rho^\kappa-\delta\bI)^{1/2}(\eps A)\|_2 \\
&=\eps\Big\|\Bigl((a_{ij}-\delta)^{1/2}\Bigr)_{ij}\circ A\Big\|_2.
\end{align*}
In the above, note that $\Omega_\rho^\kappa-\delta\bI\ge0$ on $(\bM_d,\<\cdot,\cdot\>)$
since $\delta<\alpha$. Also, the second inequality above follows since
$\int_0^1\|(\Omega_\rho^\kappa-\delta I)^{1/2}\xi'(t)\|\,dt$ is the length of the curve
$(\Omega_\rho^\kappa-\delta\bI)^{1/2}\xi(t)$, $0\le t\le1$, from
$(\Omega_\rho^\kappa-\delta\bI)^{1/2}\rho$ to
$(\Omega_\rho^\kappa-\delta\bI)^{1/2}(\rho+\eps A)$ in the Euclidean space
$(\bH_d,\|\cdot\|_2)$ and it is shortest if $\xi$ is the segment between $\rho$ and
$\rho+\eps A$. Taking the infimum of $L_\kappa(\xi)$ yields
$$
D_\kappa(\rho,\rho+\eps A)
\ge\eps\Big\|\Bigl((a_{ij}-\delta)^{1/2}\Bigr)_{ij}\circ A\Big\|_2.
$$
On the other hand, let $\xi_0(t)\equiv\rho+t\eps A$. Since
$\|\xi_0(t)-\rho\|_2\le\eps\|A\|_2<r_1$ for $0\le t\le1$, we have
\begin{align*}
D_\kappa(\rho,\rho+\eps A)&\le L_\kappa(\xi_0)
=\int_0^1\sqrt{\bigl\<\xi_0'(t),\Omega_{\xi_0(t)}^\kappa(\xi_0'(t))\bigr\>}\,dt \\
&\le\int_0^1\sqrt{\bigl\<\xi_0'(t),(\Omega_\rho^\kappa+\delta\bI)(\xi_0'(t))\bigr\>}\,dt \\
&=\|(\Omega_\rho^\kappa+\delta\bI)^{1/2}(\eps A)\|_2 \\
&=\eps\Big\|\Bigl((a_{ij}+\delta)^{1/2}\Bigr)_{ij}\circ A\Big\|_2.
\end{align*}
Since $\delta$ is arbitrary,
$$
\lim_{\eps\searrow0}{D_\kappa(\rho,\rho+\eps A)\over\eps}
=\Big\|\Bigl(a_{ij}^{1/2}\Bigr)_{ij}\circ A\Big\|_2
=\|(\Omega_\rho^\kappa)^{1/2}A\|_2=\sqrt{\<A,\Omega_\rho^\kappa(A)\>},
$$
as desired.
\end{proof}

\section{Qubit Proofs}    \label{sect:qubitpf}

\subsection{Unital qubit channels: Proof of Theorem~\ref{thm:unital}}

Writing $E = \half (I + \bw \dtsig)$ and $F = \half (I + \bx \dtsig)$ with $|\bw| = |\bx| = 1$,
we find $EF = 0 \Leftrightarrow \bw = -\bx$, in which case $E - F = \bw \dtsig$ so that by
  \eqref{Ruskai-form},
\be  \label{dob.unit}
\eta^{\tr}(\Phi_T) = \half \sup_{|\bw| = 1} \, \tr \, | (T \bw)\dtsig |
= \sup_{|\bw| = 1}|T\bw|=\norm{T}_\infty.
\ee
Note that this implies that $\Phi_T$ is ``non-scrambling'' if and only if $\| T \|_\infty = 1$. 

It then follows from Theorems~\ref{thm:Riem-geod} and \ref{thm:trace-Riem} that
$$
\eta_\kappa^\geod(\Phi_T) = \eta_\kappa^\riem(\Phi_T) \geq \norm{T}_\infty^2
$$
for all $\kappa\in\cK$. Thus it suffices to show that $\eta_g^\RE(\Phi_T)\leq\norm{T}_\infty^2$
for all $g \in \cG$. In fact, from the integral expression \eqref{H-int} it suffices to do this
for $g_s\equiv (x-1)^2/(x+s)$, $s\ge0$, as in \eqref{g-ext}, for which we have
\be \label{H-gs}
H_{g_s}(\rho,\gamma) = \Tr (\rho-\gamma) \frac{1}{L_\rho+sR_\gamma} (\rho-\gamma).
\ee
Let $\rho = \half(I + \bw \dtsig)$ and $\gamma = \half(I + \bx \dtsig)$ with $|\bw|,|\bx|<1$
and $\by\equiv\bw-\bx\ne0$ (which guarantee $\rho,\gamma\in\cD_2$ and $\rho\ne\gamma$).
For $s=0$ use \eqref{matprod} and \eqref{matinv} below to have
$$
H_{g_0}(\rho,\gamma)=2\Tr(\by\dtsig)^2(I+\bw\dtsig)^{-1}={4|\by|^2\over1-|\bw|^2}.
$$
Since $\Phi_T(\rho)=\half[I + (T\bw) \dtsig]$ and $\Phi_T(\by\dtsig)=(T\by)\dtsig$, we have
$$
{H_{g_0}(\Phi_T(\rho),\Phi_T(\gamma))\over H_{g_0}(\rho,\gamma)}
={1-|\bw|^2\over4|\by|^2}\cdot{4|T\by|^2\over1-|T\bw|^2}
={1-|\bw|^2\over1-|T\bw|^2}\cdot{|T\by|^2\over|\by|^2}
$$
so that
$$
\eta_{g_0}^\relent(\Phi_T)
=\sup_{|\bw|<1}{1-|\bw|^2\over1-|T\bw|^2}\cdot\sup_{\by\ne0}{|T\by|^2\over|\by|^2}
=\norm{T}_\infty^2.
$$

Next, for $s>0$ we use Lemma~\ref{lemm:s0} with $\by\ne0$, $\bu = \bw - s \bx$ and
$\bv = \bw + s \bx$. Note that since $\by = \bw-\bx = \frac{1}{2s} [(1+s) \bu + (1-s) \bv]$,
$\by$ is orthogonal to $\bu \times \bv$ so that we can use the reduction by
Lemma~\ref{lemm2:s0} to conclude
$$
\eta_{g_{s}}^{\RE}(\Phi_{T}) = \sup
\frac {\bigl\bra T \by, \left( \left[(1+s)^2 - |T\bu|^2 \right] I + |T\bu \ket \bra T\bu|
- |T\bv \ket \bra T\bv | \right) ^{-1} T\by \bigr\ket}
{\bra \by, \left( \left[(1+s)^2 - |\bu|^2 \right] I + |\bu \ket \bra \bu |
- | \bv \ket \bra \bv | \right)^{-1} \by \ket},
$$
where the supremum is taken over
\begin{eqnarray*} 
\lefteqn{ \{ (\bu, \bv, \by) : \by = \bw - \bx,\,\bu = \bw - s \bx,\,\bv = \bw + s \bx,
\,|\bw| < 1,\,|\bx| < 1 \} } \\
& = & \biggl\{ (\bu, \bv, \by) : \by = \frac{1}{2s}[(1+s)\bv - (1-s)\bu],\,|\bu + \bv| < 2,
\,|\bu - \bv| < 2s \biggr\}.
\end{eqnarray*}
Since relaxing the constraints can only increase the supremum, we will let
$\by\,(\ne0) \in {\bf R}^3$ arbitrary, and since
$$
\{ (\bu, \bv) : |\bu + \bv| < 2,\, |\bu - \bv| < 2s \}  \subset 
{\cal M}_{s} \equiv \{ (\bu, \bv) :  |\bu| < 1 + s,\, |\bv| < 1 + s \},
$$
we allow $(\bu, \bv) \in {\cal M}_s$. Then we find using Lemma~\ref{lemm:rev} that
\be \label{eta:ext}
\eta_{g_{s}}^{\RE}(\Phi_{T}) & \leq & \sup_{{\cal M}_{s}} \sup_{\by\ne0} 
\frac { \bigl\bra T^* \by, \left( \left[(1+s)^2 - |\bu|^2 \right] I
+ |\bu \ket \bra \bu | - | \bv \ket \bra \bv | \right) T^* \by \bigr\ket }
{ \bigl\bra \by, \left( \left[(1+s)^2 - |T\bu|^2 \right] I
+ | T\bu \ket \bra T\bu | -  | T\bv \ket \bra T\bv | \right) \by \bigr\ket} \nn \\
& = & \sup_{{\cal M}_{s}} \sup_{\by\ne0}  
\frac{ |T^*\by|^2 [(1+s)^2 - |\bu|^2 ] + |\bra \by, T\bu \ket |^2 - |\bra \by, T\bv \ket |^2 }
{ |\by|^2 [(1+s)^2 - |T \bu|^2 ] + |\bra \by, T \bu \ket |^2 -  |\bra  \by, T \bv \ket |^2 }.
\ee  
Now consider  
\be  \label{compare}
\frac{ |T^*\by|^2 (1+s)^2 - |\bu|^2 |T^*\by|^2 + |\by|^2 |T \bu|^2 }{|\by|^2(1+s)^2}
= \frac{ | T^*\by |^2}{|\by|^2 } + \frac{|\bu|^2}{(1+s)^2}
\bigg[ \frac{ | T\bu |^2}{|\bu|^2 } - \frac{ | T^*\by |^2}{|\by|^2 } \bigg].
\ee
Now if the quantity in $[ ~~ ]$ on the RHS of \eqref{compare} is $ \leq 0$, then the LHS of 
\eqref{compare} is $\leq |T^*\by|^2/|\by|^2 \leq \norm{T^*}_\infty^2=\norm{T}_\infty^2\le 1$.
On the other hand, if it is $>0 $ then from $|\bu|\le 1+s$ the LHS of \eqref{compare} is
$\leq |T\bu|^2/|\bu|^2 \leq \norm{T}_\infty^2\le 1$. Thus, if we let
$a = |T^*\by|^2 (1+s)^2 - |\bu|^2 |T^*\by|^2 + |\by|^2 |T \bu|^2$, $b = |\by|^2 (1+s)^2 $,
and $c = |\by|^2 |T \bu|^2 - |\bra \by, T \bu \ket |^2 + |\bra \by, T \bv \ket |^2 $,
then the fraction in \eqref{eta:ext} is $(a-c)/(b-c)$. Since $a>c\ge0$ and
$a/b\le\norm{T}_\infty\le1$ by the analysis above, for every $\by\ne0$ and
$(\bu,\bv)\in{\cal M}_s$ we have
$$
{a-c\over b-c}\le{a\over b}\le\norm{T}_\infty^2,
$$
which implies that $\eta_{g_s}^\relent(\Phi_T)\le\norm{T}_\infty^2$. Therefore,
$\eta_g^\relent(\Phi_T)\le\norm{T}_\infty^2$ holds for all $g\in\cG$. \qed


\subsection{Non-unital CQ qubit channels: Proof of Theorem \ref{thm:etaCQ}}

Before proving Theorem~\ref{thm:etaCQ} observe that the effect of $\Phi\equiv\Phi_{\alpha,\tau}$
on $P = \half(I + \bw \dtsig)$ is to map $\bw\mapsto (\alpha w_1, 0, \tau)^t$ and on
$A = \by \dtsig$ it is $\by\mapsto (\alpha y_1, 0, 0)^t$. There is no loss of generality
in simply using $P = I + \bw \dtsig$ whenever the factors of $\half$ will cancel.

\subsubsection{Dobrushin coefficient}
As in the proof of \eqref{dob.unit}, 
$$
\eta^{\tr}(\Phi_{\alpha,\tau}) = \half \sup_{ |\bw| = 1} \tr \,|\alpha w_1 \sigma_1|= \alpha
$$
with the supremum attained at $\bw = (1,0,0)^t$.
 
\subsubsection{Maximal metric}
It follows from \eqref{matinv} and \eqref{matprod} that for $P=I+\bw\dtsig$,
$$
\bigl\<\by\dtsig,\Omega_P^{\max}(\by\dtsig)\bigr\>
= \Tr(\by\dtsig)^2P^{-1}= \frac{2|\by|^2}{1 - |\bw|^2 }
$$
so that 
\be \label{etamax.qnon}
\eta_{\max}^{\riem}(\Phi_{\alpha,\tau})  
& = & \alpha^2 \sup_{|\bw|<1} \sup_{\by\ne0} \frac{ y_1^2}{|\by|^2}
\frac{1 - |\bw|^2}{1 - \alpha^2 w_1^2 - \tau^2} \nn \\
& = & \alpha^2 \sup_{|w_1|<1} \frac{1 - w_1^2} {1 - \alpha^2 w_1^2 - \tau^2} \\
& = & \frac{\alpha^2}{1 - \tau^2}  ~ \leq ~ 1, \nn
\ee 
since it is easy to verify the last equality when $\alpha^2 + \tau^2 \leq 1$. 


\subsubsection{$\kappa(x) = x^{-1/2}$}   \label{pf6.2d}

For $\kappa(x) = x^{-1/2}$, first  observe that it follows from \eqref{matinv} and \eqref{matsqrt} that
\be  \label{sqrtinv}
(\by \dtsig) (I + \bw \dtsig)^{-1/2} = \sqrt{ \tfrac{\zeta}{2 (1- |\bw|^2)}} 
\Big[ \tfrac{-1}{\zeta} ( \bw \cdot \by) I
+ \big[\by + \tfrac{i}{\zeta} (\bw \times \by) \big] \dtsig \Big]
\ee
with $\zeta = \zeta(|\bw|)\equiv 1 + \sqrt{1 - |\bw|^2}$ so that for $P=I + \bw \dtsig$
\be   \label{theta.reduc}
\Tr (\by \dtsig) \Omega^{x^{-1/2}}_P (\by \dtsig)
& = & \Tr \Big[ (\by \dtsig) (I + \bw \dtsig)^{-1/2} )\Big]^2 \nn \\
& = & \frac{\zeta}{2(1 - |\bw|^2 )} \, \Big[ \tfrac{1}{\zeta^2}
(\bw \cdot \by)^2 + |\by|^2 - \tfrac{1}{\zeta^2} | \bw \times \by |^2 \Big] \qquad \nn \\
& = & \frac{\zeta}{2(1 - |\bw|^2 )} \, \Big[ \tfrac{1}{\zeta^2}
| \bw|^2 |\by|^2 ( \cos^2 \theta - \sin^2 \theta)  + | \by|^2 \Big] \nn \\
& = & \frac{ |\by|^2 }{ 2(1 - |\bw|^2) \zeta } \, \Big( | \bw|^2 \cos 2 \theta
+ \zeta^2 \Big),
\ee
where $\theta$ denotes the angle between $\bw$ and $\by$.  Thus  
\be  \label{etasqrtexact}
\lefteqn{ \eta_{x^{-1/2}}^{\riem}(\Phi_{\alpha,\tau}) } \\  \nn
& = & \sup_{\bw, \by} \frac{ \alpha^2 y_1^2}{|\by|^2} \,
\frac{1 - |\bw|^2 } {1 - \alpha^2 w_1^2 - \tau^2} \, 
\frac{\zeta(|\bw|)} {\zeta(|(\alpha w_1, 0, \tau) |)}
\frac { (\alpha^2 w_1^2 + \tau^2) \cos 2 \wtd{ \theta} + \zeta^2(|(\alpha w_1, 0, \tau) |)}
{| \bw|^2 \cos 2 \theta + \zeta^2(|\bw|)}
\ee
with $\wtd{\theta}$ the angle between  $ (\alpha y_1, 0 , 0)$ and $(\alpha w_1, 0, \tau) $.
The first ratio in this product is largest when $\by = (y_1, 0 , 0) $ and $\by$ enters only
implicitly in the last one in $\theta, \wtd{\theta}$. Assuming $\by = (y_1, 0 , 0)$ and
$w_1 = 0$ gives $\wtd{\theta} =\theta = \pi/2$ and $\cos 2 \wtd{\theta} = \cos 2 \theta = -1$.
Then the identity $\zeta^2(x) - x^2 = 2 \sqrt{1 - x^2}\,\zeta(x) $ can be used to simplify
the RHS of \eqref{etasqrtexact} to give
\bee  
\eta_{x^{-1/2}}^{\riem}(\Phi_{\alpha,\tau})
& \geq & \sup_{\bw = (0,w_2,w_3)} \alpha^2 \, \frac{1 - |\bw|^2 } {1 - \tau^2}\, 
\frac{\zeta(|\bw|) } {\zeta(|\tau| )}\,  
\frac{ 2 \sqrt{1 - \tau^2}\,\zeta(|\tau|) }{ 2 \sqrt{1 - |\bw|^2} \zeta(|\bw|) } \nonumber\\
& = & \sup_{\bw = (0,w_2,w_3)} \alpha^2 \frac{ \sqrt{ 1 - |\bw|^2}}{ \sqrt{1 - \tau^2} }
~  =  ~ \frac{ \alpha^2}{ \sqrt{1 - \tau^2} },
\eee
which is equivalent to \eqref{etageomopt}.

We could have obtained this bound more easily  by  considering the special case $P = I$.
However, since the methods used to obtain \eqref{theta.reduc} are used again later, there
is some merit in presenting the details in this relatively simple setting.  For the special  case,
observe that 
$\Phi(I) = I + \tau \sigma_3 $ so that
$ [ \Phi(I)]^{-1/2}  =\pmx  \tfrac{1}{\sqrt{1+ \tau} } & 0 \\ 0 & \tfrac{1}{\sqrt{1- \tau} } \emx$
and
$$
\Phi(\by \dtsig) [ \Phi(I)]^{-1/2} \Phi(\by \dtsig) [ \Phi(I)]^{-1/2}  = \alpha^2 y_1^2 
 \frac{1}{\sqrt{1- \tau^2} } I,
$$
from which it follows   that
\be  \label{etageomopt}
\eta_{x^{-1/2}}^{\riem}(\Phi_{\alpha,\tau}) \geq \sup_{\by\ne0} \alpha^2 \frac{ y_1^2}{|\by|^2}
\frac{1}{ \sqrt{1- \tau^2}} = \frac{\alpha^2}{ \sqrt{1-t^2}}.
\ee

\subsubsection{BKM metric}    \label{pf6.2e}

For $\kappa_\BKM(x) = (\log x)/(x - 1)$ we can repeat some of the strategy above to get a
lower bound. It follows from \eqref{Omega:BKM} that
\bee
\Omega_P^\BKM( A) = \int_0^{\infty} \frac{1}{P+ uI}\, A\, \frac{1}{P+ uI}  \,du.
\eee
Then using \eqref{matinv} with $a = 1 + u$ and \eqref{matprod}, we find for  $A = \by \dtsig $
and $P = I + \bw \dtsig $
$$
(\by \dtsig) \frac{1}{(1 + u) I + \bw \dtsig }
= \frac{ - \bw \cdot \by I  +[ (1+u) \by + i \bw \times \by] \dtsig }{(1+u)^2 - |\bw|^2},
$$
from which it follows that
\bee
\lefteqn{ \Tr (\by \dtsig) \frac{1}{(1 + u) I + \bw \dtsig} (\by \dtsig)
\frac{1}{(1 + u) I + \bw \dtsig }  
= \Tr\bigg[ (\by \dtsig) \frac{1}{(1 + u) I + \bw \dtsig }\bigg]^2 } \nn
\qquad \qquad  \\
& = & \frac{2}{\big[(1+u)^2 - |\bw|^2 \big] ^2}\,
\Big[ |\bw \cdot \by |^2 + (1+u)^2 |\by|^2 - | \bw \times \by|^2 \Big] \nn \\
& = & \frac{2\,|\by|^2}{\big[(1+u)^2 - |\bw|^2\big]^2}\,
\big[ (1+u)^2 + |\bw |^2 \cos 2 \theta \big],  \label{log3}
\eee
where for the second equality above we have used $\by\cdot(\bw\times\by)=0$ and $\theta$ is
the angle between $\bw$ and $\by$. For $\theta = \pi/2$,
$ \cos 2 \theta = -1$ so that \eqref{log3} becomes $ |\by|^2/[(1+u)^2 - |\bw|^2] $. Then
since the integral  
$$
\int_0^\infty \frac{1}{(1+u)^2 - |\bw|^2 }\,du = \frac{1}{2 |\bw| }
\log \tfrac{1 + | \bw |}{1 - | \bw |}
$$
is elementary, we can conclude
\be
\eta_\BKM^\riem(\Phi) & \geq & \sup_{ w_1 = 0,\, \by \cdot \bw = 0 }
\alpha^2 \frac{ y_1^2}{|\by|^2} \frac{|\bw|}{\tau}
\frac{\log \tfrac{ 1 + \tau }{ 1 -\tau}} { \log \tfrac{1 + |\bw|}{1 - |\bw|}}  \nn \\
& = & \sup_{\bw = (0, w_2, w_3) } \frac{ \alpha^2 }{\tau} \log \tfrac{ 1 + \tau}{ 1 -\tau}
\frac{|\bw|} { \log \tfrac{1 + |\bw|}{1 - |\bw|}} \nn \\
& = & \frac{ \alpha^2 }{2\tau} \log \tfrac{ 1 + \tau}{ 1 -\tau},
\ee
since the inequality $ \log \frac{1+x}{1-x} \geq  2x $ implies that  
$$
\sup_{|\bw| < 1 } \,  \frac{|\bw|} { \log \tfrac{1 + |\bw|}{1 - |\bw|}}
= \lim_{ |\bw| \raw 0}  \, \frac{|\bw|} { \log \tfrac{1 + |\bw|}{1 - |\bw|}} = {1\over2}.
$$
This appears to be a reasonable bound when $\tau $ is small. Although it might appear to blow up when
$\tau \raw 1$, the CP condition $\alpha^2 + \tau^2 = 1$ implies that if $\tau \raw 1$ then
$\alpha \raw 0 $.
Since $ 1 < \frac{1}{2\tau} \log \tfrac{ 1 + \tau}{ 1 -\tau} < {1\over 1-\tau^2} $
 for all $\tau\in(0,1)$,  we can conclude that  $\alpha^2<\eta_\BKM^\riem(\Phi)<  \frac{ \alpha^2 }{1-\tau^2}$
consistent with  \eqref{qubitCQgen}.%
%

\subsubsection{Dual of WY metric}
For the WY function $\kappa_\WY(x) = 4/(1+ \sqrt{x})^2$, the dual function
$1/\kappa_\WY(x^{-1}) = (1+ \sqrt{x})^2/4x $ gives the operator 
$$
\Omega_P^{\wh\WY} = \tfrac{1}{4} \big( L_P^{-1/2} + R_P^{-1/2} \big)^2.
$$
To proceed as above, for $P=I+\bw\dtsig$ we first find
\be
( L_P^{-1/2} + R_P^{-1/2} \big)(\by \dtsig )
& = & (I + \bw \dtsig)^{-1/2} (\by \dtsig ) + (\by \dtsig ) (I + \bw \dtsig)^{-1/2} \nn \\
& = & \sqrt{\frac{2\zeta(|\bw|)}{ 1- |\bw|^2}}\, 
\Big[ \tfrac{-1}{\zeta(|\bw|)}( \by \cdot \bw) I + \by \Big],
\ee
where we have used \eqref{sqrtinv} and the fact that the $\bw \times \by$ terms have the
opposite sign and cancel.    As in Sections \ref{pf6.2d} and \ref{pf6.2e} the resulting expressions
are difficult to deal with unless we make the simplifying assumption $\by = (y_1, 0, 0)^t$ and
$w_1 = 0 $  which yields the lower bound
%
\bee
\eta_{\wh{WY}}^\riem(\Phi)
&\geq & \alpha^2\, \frac{ 1 + \sqrt{1 -  \tau^2}} { 1 - \tau^2}
\sup_{\bw = (0, w_2, w_3)} \frac{1 - |\bw|^2 }{1 + \sqrt{ 1 - | \bw|^2}} \nn \\
& = & \alpha^2\, \frac{ 1 + \sqrt{1 - \tau^2}} {2( 1 - \tau^2)}.
\eee


\subsubsection{Minimal or Bures metric} 
For the smallest function $\kmin(x) = 2/(1+x)$ in $\cK$ we begin by using \eqref{prebur} of
Lemma~\ref{lemm:bures} to conclude that for $P = I + \bw \dtsig$
$$
\Tr (\by \dtsig) \frac{2}{L_P+ R_P} (\by \dtsig)
= \by\cdot\bigg[ \by + \frac{ (\bw \cdot \by) \, \bw }{1 - |\bw|^2} \bigg]
= \frac{|\by|^2(1-|\bw|^2)+(\bw\cdot\by)^2}{1-|\bw|^2}.
$$
Thus
$$
\eta_{\min}^\riem(\Phi) = \sup_{|\bw| < 1} \sup_{ \by \ne 0 } 
\frac{ \alpha^2 y_1^2(1 - |\bw|^2) }{ |\by|^2(1 - |\bw|^2 ) + (\bw \cdot \by)^2 }\,
\frac{ 1 - \tau^2 }{1 - \tau^2 - \alpha^2 w_1^2 }.
$$
For fixed $\bw$ we can optimize the first term, which depends only on the ratios $y_2/y_1$ and $y_3/y_1,$ directly or use
Lemma~\ref{lemm:Fmin} to conclude that the maximum is achieved when
$\by = \bigl(1, - w_1w_2/(1-w_1^2), -w_1w_3/(1-w_1^2)\bigr)^t$. In this case,
$$
{y_1^2(1-|\bw|^2)\over |\by|^2(1 - |\bw|^2 ) + (\bw \cdot \by)^2}
={(1-|\bw|^2)^2\over 1-w_1^2}.
$$
Thus
$$
\eta_{\min}^\riem(\Phi) = \alpha^2 ( 1  - \tau^2) \sup_{|w_1| < 1}    
\frac{1-w_1^2}{1 - \tau^2 - \alpha^2 w_1^2 }.
$$
When the CP condition $\alpha^2 + \tau^2 \leq 1$ holds, it is elementary to verify that
\be \label{elemw1}
\frac{1 -w_1^2}{ 1- \tau^2 - \alpha^2 w_1^2 } \leq \frac{1}{1-\tau^2} 
\ee
 so that the supremum is achieved when
$w_1 = 0 $, and 
\be  \label{etamin.qnon}
\eta_{\min}^\riem(\Phi) = \alpha^2.
\ee


\subsubsection{Extreme points}

The extreme functions have the form
$$
\kappa_s(x) = \frac{1+s}{2} \bigg( \frac{1}{x +s } + \frac{1}{1 +sx } \bigg),
$$
for which the corresponding operator is
$$
\Omega_P^{\kappa_s}
= \frac{1+s}{2} \bigg( \frac{1}{L_P + s R_P} + \frac{1}{R_P + s L_P} \bigg).
$$
Observe that for any function $h:(0, \infty) \to {\bf R}$ and $X\in\bM_d$, 
$$
\big[ R_P^{-1} h\big(L_P R_P^{-1} \big) (X) \big]^*
= L_P^{-1} h\big( R_P L_P^{-1} \big) (X^*)
= R_P^{-1} \wtd{h}\big(L_P R_P^{-1} \big) (X^*),
$$
where $\wtd{h}(x)\equiv x^{-1}h(x^{-1})$. Using this for the function $h(x)= (1+s)/(x +s )$
for which $\wtd{h}(x) = (1+s)/(1 +sx )$, we have for $A = A^*$
\be  \label{eq:Psym}
\bra A, \Omega_P^{\kappa_s}(A) \ket = \bigg\bra A, \frac{1+s}{L_P + s R_P} (A) \bigg\ket
= \bigg\bra A, \frac{1+s}{R_P + s L_P} (A) \bigg\ket.
\ee
Now we let $P = I + \bw \dtsig$, $A = \by \dtsig$ and apply Lemma~\ref{lemm:s0} 
with $\bu = (1-s) \bw$, $\bv = (1+s) \bw$ to obtain
\bee
\bra A, \Omega_P^{\kappa_s} (A) \ket
& = & 2(1+s)^2 \Big\bra \by, \big[\xi_s(|\bw|^2) I
- 4s \proj{\bw} \big]^{-1} \by \Big\ket \nn\\
& = & \frac{2(1+s)^2}{\xi_{s}(|\bw|^2)} \bigg[ |\by|^2
+ \frac{ 4 s (\bw \cdot \by )^2}{(1+s)^2 (1 - |\bw|^2)} \bigg],
\eee
where we have used \eqref{eq:inv.3dim} and 
\be  \label{xidef}
\xi_s(x) \equiv (1+s)^2 - (1-s)^2 x = (1+s)^2 (1-x) + 4sx.
\ee
Thus
\begin{align*}
& \frac{ \bra \Phi(A), \Omega_{\Phi(P)}^{\kappa_s} \Phi(A) \ket }
{ \bra A, \Omega_P^{s} A \ket} \\
&\quad = \frac{ (1 - |\bw|^2 ) \xi_s(| \bw |^2)}
{|\by|^2(1+s)^2(1-|\bw|^2) + 4 s (\bw \cdot \by )^2} \,
\frac{\alpha^2 y_1^2 \big[(1+s)^2(1-\alpha^2 w_1^2 - \tau^2) + 4 s\alpha^2 w_1^2 \big]}
{(1-\alpha^2 w_1^2 - \tau^2)\,\xi_{s}(\alpha^2 w_1^2 + \tau^2)}
\end{align*}
so that
\begin{align}\label{eq:eta.non-un.a}
& \eta_{\kappa_{s}}^{\riem}(\Phi) \\  \nn
&\quad = \alpha^2 \sup_{|\bw|<1} \Bigg[\frac {(1 - |\bw|^2) \xi_{s}(|\bw|^2) } 
{(1-\alpha^2 w_1^2 - \tau^2)\,\xi_{s}(\alpha^2 w_1^2 + \tau^2)} \, 
\sup_{\by\ne0} \frac{ y_1^2 \big[(1+s)^2(1-\alpha^2 w_1^2 - \tau^2) + 4 s\alpha^2 w_1^2 \big] }
{|\by|^2(1+s)^2(1-|\bw|^2) + 4 s (\bw \cdot \by )^2} \Bigg].
\end{align}
We first consider $\sup_\by$ with $\bw$ fixed.
This term depends only on the ratios $y_2/y_1$, $y_3/y_1$ so that there is no loss of
generality in assuming $y_1 = 1$, in which case only the denominator depends on $\by$ and
we consider instead its minimum, i.e., we seek
\begin{eqnarray*}
\min_{y_2,\,y_3} \left[
(1 + y_2^2 + y_3^2) (1+s)^2(1-|\bw|^2) + 4 s (w_1 + w_2 y_2 + y_3 y_3 )^2 \right]
\end{eqnarray*}
which is found in Lemma~\ref{lemm:Fmin} with $\mu = (1+s)^2(1-|\bw|^2)$ and $\nu = 4s$.
The minimum is  
\be  \label{Fminext}
\frac{ (1+s)^2 (1 - |\bw|^2) \xi_s(|\bw|^2) } { \xi_s(|\bw|^2) - 4 s w_1^2 }.
\ee
Inserting \eqref{Fminext} into \eqref{eq:eta.non-un.a} yields
\begin{eqnarray*}
\eta_{\kappa_{s}}^{\riem}(\Phi) = \alpha^2 \sup_{|\bw|<1}
\frac { \bigl[(1+s)^2(1-\alpha^2 w_1^2 - \tau^2) + 4 s \alpha^2 w_1^2 \bigr]
\,\bigl[\xi_{s}(|\bw|^2) - 4 s w_1^2\bigr]} 
{(1-\alpha^2 w_1^2 - \tau^2)\,\xi_{s}(\alpha^2 w_1^2 + \tau^2)\, (1+s)^2}.
\end{eqnarray*}
The only term in the expression above which involves $\bw$ rather than $w_1$ is
$\xi_{s}(|\bw|^2) \linebreak = (1+s)^2 - (1-s)^2|\bw|^2$, which is largest when $|\bw|$ is smallest,
i.e., $|\bw| = |w_1|$ or, equivalently, $\bw = (w_1,0,0)$.
Thus we find 
\begin{eqnarray}\label{eq:eta.non-un.d}
\eta_{\kappa_{s}}^{\riem}(\Phi) = \alpha^2 \sup_{|w_1|<1} 
\frac { (1+s)^2(1-\alpha^2 w_1^2 - \tau^2) + 4s \alpha^2 w_1^2 }
{(1+s)^2(1-\alpha^2 w_1^2 - \tau^2) + 4s (\alpha^2 w_1^2 + \tau^2)} \,
\frac{1 - w_1^2} {1-\alpha^2 w_1^2 - \tau^2}.
\end{eqnarray}
For $s = 0, 1$ this reduces to the expression for the maximal and minimal functions,
\eqref{etamax.qnon} and \eqref{elemw1}, respectively. As in \eqref{elemw1} the second factor
is largest when $w_1 = 0$. The first factor can be written as
\bee
\frac{R }{R + 4s \tau^2} = 1 - \frac{ 4s \tau^2}{R + 4s \tau^2},
\eee
which is largest when 
$$
R \equiv (1+s)^2(1-\alpha^2 w_1^2 - \tau^2) + 4s \alpha^2 w_1^2
= (1+s)^2 (1 - \tau^2) - (1-s)^2 \alpha^2 w_1^2
$$
is largest, which also occurs when $w_1^2 = 0 $. Using these observations in
\eqref{eq:eta.non-un.d} we can conclude that
\begin{eqnarray}\label{eq:eta.non-un.e}
\eta_{\kappa_{s}}^{\riem}(\Phi) = \frac{ (1+s)^2 \alpha^2}{(1+s)^2 -(1-s)^2 \tau^2 }
= \frac{ \alpha^2}{1 - \left(\frac{1-s}{1+s}\right)^2 \tau^2 }.  
\end{eqnarray}
When $s = 0, 1$ we recover the expressions \eqref{etamax.qnon} and \eqref{etamin.qnon}.

\subsubsection{Wigner-Yanase metric} 

Although we have had to make simplifying assumptions to obtain lower
bounds for all but the extremal $\kappa$, it is 
quite remarkable that we can obtain an exact expression in the case
of the Wigner-Yanase function $\kappa_\WY(x)=4/(1+\sqrt x)^2$.  Then
$$
\Omega_P^\WY={4\over\bigl(\sqrt{L_P}+\sqrt{R_P}\bigr)^2}.
$$
For $P=I+\bw\dtsig$, using \eqref{matsqrt} and Lemma \ref{lemm:bures} we can write
$$
{2\over\sqrt{L_P}+\sqrt{R_P}}(\by\dtsig)
=\sqrt{\frac{2}{\zeta}}\bigg[-{\zeta\bw\cdot\by\over\zeta^2-|\bw|^2}I
+\biggl(\by+{(\bw\cdot\by)\bw\over\zeta^2-|\bw|^2}\biggr)\dtsig\bigg]
$$
with $\zeta=\zeta(|\bw|)=1+\sqrt{1-|\bw|^2}$. Therefore,
\bee
  \tr (\by\dtsig){4\over\bigl(\sqrt{L_P}+\sqrt{R_P}\bigr)^2}(\by\dtsig)
 & = & \tr \biggl[{2\over\sqrt{L_P}+\sqrt{R_P}}(\by\dtsig)\biggr]^2 \nn\\
& = & \frac{4}{\zeta}\bigg[{\zeta^2(\bw\cdot\by)^2\over(\zeta^2-|\bw|^2)^2}+|\by|^2
+2\,{(\bw\cdot\by)^2\over\zeta^2-|\bw|^2}
+{(\bw\cdot\by)^2|\bw|^2\over(\zeta^2-|\bw|^2)^2}\bigg] \nn\\
& = & \frac{4}{\zeta}\bigg[|\by|^2
+(\bw\cdot\by)^2\,{3\zeta^2-|\bw|^2\over(\zeta^2-|\bw|^2)^2}\bigg] \label{WY-metric}   \\
%
& = & 4\,\big\<\by,\big[\zeta I-(2-\zeta^{-1})|\bw\>\<\bw|\big]^{-1}\by\big\>,
 \eee
where the last equality is the key to our ability to evaluate  $\eta_\WY^\riem(\Phi)$ exactly.
To obtain this, one can apply $|\bw|^2=\zeta(2-\zeta)$ and \eqref{eq:inv.3dim} to see that
\begin{align*}
\frac{1}{\zeta}\bigg[I+{3\zeta^2-|\bw|^2\over(\zeta^2-|\bw|^2)^2}\,|\bw\>\<\bw|\bigg]
&=\frac{1}{\zeta}\bigg[I+{2\zeta-1\over2\zeta(\zeta-1)^2}\,|\bw\>\<\bw|\bigg] \\
&=\big[\zeta I-(2-\zeta^{-1})|\bw\>\<\bw|\big]^{-1}.
\end{align*}
Then with  $T\equiv\diag(\alpha,0,0)$, $\widetilde\bw\equiv(\alpha w_1,0,\tau)^t$ and
$\widetilde\zeta\equiv\zeta(|\widetilde\bw|)$,
we use Lemma \ref{lemm:rev} to obtain
\begin{align*}
\eta_\WY^\riem(\Phi)
&=\sup_{|\bw|<1}\sup_{\by\ne0}
{\big\<T\by,\big[\widetilde\zeta I-(2-\widetilde\zeta^{-1})
|\widetilde\bw\>\<\widetilde\bw|\big]^{-1}T\by\big\>
\over\big\<\by,\big[\zeta I-(2-\zeta^{-1})|\bw\>\<\bw|\big]^{-1}\by\big\>} \\
&=\sup_{|\bw|<1}\sup_{\by\ne0}
{\big\<T\by,\big[\zeta I-(2-\zeta^{-1})|\bw\>\<\bw|\big]T\by\big\>
\over\big\<\by,\big[\widetilde\zeta I-(2-\widetilde\zeta^{-1})
|\widetilde\bw\>\<\widetilde\bw|\big]\by\big\>} \\
&=\sup_{|\bw|<1}\sup_{\by\ne0}
{\zeta\,|T\by|^2-(2-\zeta^{-1})|\<\bw,T\by\>|^2
\over\widetilde\zeta\,|\by|^2-(2-\widetilde\zeta^{-1})|\<\widetilde\bw,\by\>|^2} \\
&=\sup_{|\bw|<1}\sup_{(y_1,y_3)\ne(0,0)}
{\alpha^2\bigl[\zeta-w_1^2(2-\zeta^{-1})\bigr]y_1^2\over
\widetilde\zeta(y_1^2+y_3^2)-(2-\widetilde\zeta^{-1})(\alpha w_1y_1+\tau y_3)^2}.
\end{align*}
Since the denominator of the last ratio is
$$
\bigl[\widetilde\zeta-\tau^2(2-\widetilde\zeta^{-1})\bigr]
\Biggl(y_3-{\alpha\tau w_1(2-\widetilde\zeta^{-1})y_1\over
\widetilde\zeta-\tau^2(2-\widetilde\zeta^{-1})}\Biggr)^2
+{\widetilde\zeta^2-(\alpha^2w_1^2+\tau^2)(2\widetilde\zeta-1)\over
\widetilde\zeta-\tau^2(2-\widetilde\zeta^{-1})}\,y_1^2,
$$
it follows that
$$
\eta_\WY^\riem(\Phi)=\sup_{|\bw|<1}
{\alpha^2\bigl[\zeta-w_1^2(2-\zeta^{-1})\bigr]
\bigl[\widetilde\zeta-\tau^2(2-\widetilde\zeta^{-1})\bigr]
\over\widetilde\zeta^2-(\alpha^2w_1^2+\tau^2)(2\widetilde\zeta-1)}.
$$
Then $\widetilde\zeta=1+\sqrt{1-\tau^2-\alpha^2w_1^2}$ depends on $w_1$ only and
\begin{align*}
\zeta-w_1^2(2-\zeta^{-1})
&=1+\sqrt{1-|\bw|^2}-w_1^2\Biggl(2-{1-\sqrt{1-|\bw|^2}\over|\bw|^2}\Biggr) \\
&=2(1-w_1^2)-{(w_2^2+w_3^2)\bigl(1-\sqrt{1-|\bw|^2}\bigr)\over|\bw|^2},
\end{align*}
which obviously takes the maximum $2(1-w_1^2)$ when $w_2=w_3=0$. Therefore,
\begin{align*}
\eta_\WY^\riem(\Phi_{\alpha,\tau})
&=\sup_{|w_1|<1}
{2\alpha^2(1-w_1^2)\Bigl[1+\sqrt{1-\tau^2-\alpha^2w_1^2}
-\tau^2\Bigl(2-{1\over1+\sqrt{1-\tau^2-\alpha^2w_1^2}}\Bigr)\Bigr]\over
\bigl(1+\sqrt{1-\tau^2-\alpha^2w_1^2}\bigr)^2
-(\alpha^2w_1^2+\tau^2)\bigl(1+2\sqrt{1-\tau^2-\alpha^2w_1^2}\bigr)} \nonumber\\
&=\sup_{|w_1|<1}
{\alpha^2(1-w_1^2)\bigl[2(1-\tau^2)-\alpha^2w_1^2
+2(1-\tau^2)\sqrt{1-\tau^2-\alpha^2w_1^2}\bigr]\over
(1-\tau^2-\alpha^2w_1^2)\bigl(1+\sqrt{1-\tau^2-\alpha^2w_1^2}\bigr)^2} \\
&\le\alpha^2\left[\sup_{|w_1|<1}{1-w_1^2\over1-\tau^2-\alpha^2w_1^2}\right]
\left[\sup_{|w_1|<1}{2(1-\tau^2)\bigl(1+\sqrt{1-\tau^2-\alpha^2w_1^2}\bigr)
-\alpha^2w_1^2\over\bigl(1+\sqrt{1-\tau^2-\alpha^2w_1^2}\bigr)^2}\right].
\end{align*}
 As in \eqref{elemw1} the first supremum is attained when $w_1=0$.
For the second, let $\rho\equiv1-\tau^2$ and $x\equiv\alpha^2w_1^2\in[0,\alpha^2)$ where
$\alpha^2\le\rho<1$, and observe that the ratio can be written as
$$
{2\rho(1+\sqrt{\rho-x})-x\over(1+\sqrt{\rho-x})^2}
=\rho+(1-\rho){\rho-x\over(1+\sqrt{\rho-x})^2},
$$
which is maximized when $x=0$ (i.e., $w_1=0$). Thus we conclude
$$
\sup_{|w_1|<1}{2(1-\tau^2)\bigl(1+\sqrt{1-\tau^2-\alpha^2w_1^2}\bigr)
-\alpha^2w_1^2\over\bigl(1+\sqrt{1-\tau^2-\alpha^2w_1^2}\bigr)^2}
=\rho+(1-\rho){\rho\over(1+\sqrt\rho)^2}={2(1-\tau^2)\over1+\sqrt{1-\tau^2}}.
$$
Combining the two suprema yields
$$
\eta_\WY^\riem(\Phi_{\alpha,\tau})
\le{2\alpha^2\over1+\sqrt{1-\tau^2}},
$$
It is straightforward to see that the special case  $\by = (y_1,0,0)^t $ and $\bw = 0$ yields
the reverse inequality.  

The proof of Theorem \ref{thm:etaCQ} is now complete. \qed


\subsection{Proof of Theorem~\ref{thm:relent.neq.riem}}

For $P=I+\bw\dtsig$, $Q=I+\bx\dtsig$ and $0\le s\le1$, by \eqref{H-gs} and Lemma \ref{lemm2:s0}
we have
\begin{align*}
{H_{g_s}(Q,P)\over2(1+s)}
&={1\over2(1+s)}\,\Tr(\by\dtsig){1\over L_P+sR_Q}(\by\dtsig) \\
&=\Bigl\<\by,\bigl[\bigl\{(1+s)^2-|\bu|^2\bigr\}I
+\proj{\bu}-\proj{\bv}\bigr]^{-1}\by\Bigr\>,
\end{align*}
where $\by=\bw-\bx$, $\bu=\bw-s\bx$ and $\bv=\bw+s\bx$, and we note that $\by$ is orthogonal
to $\bu\times\bv$. The formula for $H_{g_s}(P,Q)$ is similar with $\bw$, $\bx$ interchanged.
Since the first inequality of \eqref{relent-ext} holds in general by \eqref{eta-g-sym}, we will
estimate $\eta_{(g_s)_\sym}^\relent(\Phi)$ for $\Phi=\Phi_{\alpha,\tau}$. For this we take
$\bw=(w_1,0,0)^t$ and $\bx=0$ for simplicity, for which we have
\be \label{H-gs(QP)}
{H_{g_s}(Q,P)\over2(1+s)}={w_1^2\over(1+s)^2-w_1^2},\qquad
{H_{g_s}(P,Q)\over2(1+s)}={w_1^2\over(1+s)^2-s^2w_1^2}.
\ee
Since $\Phi(P)=I+\wtd\bw\dtsig$ and $\Phi(Q)=I+\wtd\bx\dtsig$ where
$\wtd\bw=(\alpha w_1,0,\tau)^t$ and $\wtd\bx=(0,0,\tau)^t$, we have the expression
\be \label{H-gs-Phi(QP)}
{H_{g_s}(\Phi(Q),\Phi(P))\over2(1+s)}
=\Bigl\<\wtd\by,\bigl[\bigl\{(1+s)^2-|\wtd\bu|^2\bigr\}I
+\proj{\wtd\bu}-\proj{\wtd\bv}\bigr]^{-1}\wtd\by\Bigr\>,
\ee
where $\wtd\by=\wtd\bw-\wtd\bx=(\alpha w_1,0,0)^t$,
$\wtd\bu=\wtd\bw-s\wtd\bx=(\alpha w_1,0,(1-s)\tau)^t$ and
$\wtd\bv=\wtd\bw+s\wtd\bx=(\alpha w_1,0,(1+s)\tau)^t$. The matrix form of the operator inside
$[\quad]^{-1}$ of \eqref{H-gs-Phi(QP)} is
$$
\begin{pmatrix}
\xi_s(\tau^2)-\alpha^2w_1^2 & 0 & -2s\alpha\tau w_1 \\
0 & \xi_s(\tau^2)-\alpha^2w_1^2 & 0 \\
-2s\alpha\tau w_1 & 0 & \xi_s(\tau^2)-4s\tau^2-\alpha^2w_1^2
\end{pmatrix},
$$
where $\xi_s(\cdot)$ is in \eqref{xidef}. The $(1,1)$-entry of the inverse of this matrix is
$\bigl[\xi_s(\tau^2)-4s\tau^2-\alpha^2w_1^2\bigr]/\det$ where $\det$ is the determinant of the
$2\times2$ matrix of the first and the third rows and columns. Therefore, the exact form of
\eqref{H-gs-Phi(QP)} is
\be \label{H-gs-Phi(QP)2}
{H_{g_s}(\Phi(Q),\Phi(P))\over2(1+s)}
={\alpha^2w_1^2\bigl[\xi_s(\tau^2)-\alpha^2w_1^2-4s\tau^2\bigr]\over
\bigl[\xi_s(\tau^2)-\alpha^2w_1^2\bigr]\bigl[\xi_s(\tau^2)-\alpha^2w_1^2-4s\tau^2\bigr]
-4s^2\alpha^2\tau^2w_1^2}.
\ee
A similar computation with $\wtd\bw,\wtd\bx$ interchanged yields
\be \label{H-gs-Phi(PQ)2}
{H_{g_s}(\Phi(P),\Phi(Q))\over2(1+s)}
={\alpha^2w_1^2\bigl[\xi_s(\tau^2)-s^2\alpha^2w_1^2-4s\tau^2\bigr]\over
\bigl[\xi_s(\tau^2)-s^2\alpha^2w_1^2\bigr]\bigl[\xi_s(\tau^2)-s^2\alpha^2w_1^2-4s\tau^2\bigr]
-4s^2\alpha^2\tau^2w_1^2}.
\ee

We define
\begin{align*}
H(s)&\equiv\lim_{|w_1|\nearrow1}{H_{g_s}(Q,P)+H_{g_s}(P,Q)\over2(1+s)}, \\
\wtd H(s)&\equiv
\lim_{|w_1|\nearrow1}{H_{g_s}(\Phi(Q),\phi(P))+H_{g_s}(\Phi(P),\Phi(Q))\over2(1+s)}.
\end{align*}
Since $\eta_{(g_s)_\sym}^\relent(\Phi)\ge\wtd H(s)/H(s)$ for every $s\in[0,1]$, we may compare
$\wtd H(s)/H(s)$ with $\eta_{\kappa_s}^\riem(\Phi)$ for $s$ near $1$. For this we observe by
\eqref{H-gs(QP)}, \eqref{H-gs-Phi(QP)2} and \eqref{H-gs-Phi(PQ)2} that $H(1)=2/3$ and
$$
\wtd H(1)={2\alpha^2(4-\alpha^2-4\tau^2)\over
(4-\alpha^2)(4-\alpha^2-4\tau^2)-4\alpha^2\tau^2}
={2\alpha^2(4-\alpha^2-4\tau^2)\over(4-\alpha^2)^2-16\tau^2}.
$$
Now assume that $4\tau^2>(1-\alpha^2)(4-\alpha^2)$ (in particular, $\alpha^2>0$). Since
$(4-\alpha^2)^2-16\tau^2>0$ (thanks to $\alpha^2+\tau^2\le1$ and $\tau^2<1$) and
$$
3(4-\alpha^2-4\tau^2)-\bigl[(4-\alpha^2)^2-16\tau^2\bigr]
=4\tau^2-(1-\alpha^2)(4-\alpha^2)>0,
$$
it follows that $\wtd H(1)/H(1)>\alpha^2$. From the continuity of the $s$-dependence of
$\wtd H(s)/H(s)$ and $\eta_{\kappa_s}^\riem(\Phi)$ in \eqref{eq:eta.non-un.e}, we arrive at
the first assertion stated in the theorem.

For the second assertion, when $\alpha^2+\tau^2=1$, a tedious computation gives
$$
{\wtd H(s)\over H(s)}=\alpha^2\,
{s(s+2)(2s+1)\bigl[12s(s+1)^2+(2s^4+s^3+s+2)\alpha^2\bigr]\over
(s^2+4s+1)\bigl[4s(s+1)+s^3\alpha^2\bigr]\bigl[4s(s+1)+\alpha^2\bigr]}.
$$
The limit of $\bigl[\wtd H(s)/H(s)\bigr]/\eta_{\kappa_s}^\riem(\Phi)$ as $\alpha^2=1-\tau^2$
tends to $0$ is
$$
{3s(s+2)(2s+1)\over(s^2+4s+1)(s+1)^2}.
$$
The numerator minus the denominator of the above ratio is $-(s^4-5s^2+1)$, which is positive
when $s^2>\tfrac{5-\sqrt{21}}{2}$. This yields the second assertion of the theorem. \qed

\subsection{Useful results}

\subsubsection{Basic formulas}
We observe that any Hermitian matrix with $\Tr A = 0$ can be written as $A = \by \dtsig$
with $\by \in {\bf R}^3,$ and that $\Tr (aI + {\bf w \dtsig}) = 2a.$

The following formulas will be useful:
\begin{align}
(aI + {\bw \dtsig})(bI + {\by \dtsig})
& = (ab + \bw \cdot \by)I + (a \by + b \bw + i \bw \times \by) \dtsig, \label{matprod}\\
(aI + {\bw \dtsig})^{-1}
& = \frac{1}{a^2 - |\bw|^2} (aI - \bw \dtsig), \label{matinv}\\      
(bI + \bw \dtsig)^{1/2} & = \sqrt{ \frac{\zeta(b,\bw)}{2}}
\bigg[ I + \frac{ \bw \dtsig}{\zeta(b,\bw)} \bigg], \label{matsqrt}
\end{align}
where $\zeta(b,\bw) \equiv b + \sqrt{b^2 - |\bw|^2}$.

It will be convenient to use the physicists bra and ket notation for vectors in ${\bf R}_3$
as well as ${\bf C}^d$ in which  $| \bx \ket \bra \bx|$ denotes $|\bx|^2$ times the projection
onto $\bx$, more generally $| \bw \ket \bra \bx| : \by \mapsto (\bx\cdot\by)\bw $. 
In that notation, if $a\ne0,b|\bw|^2$ then
\begin{eqnarray}\label{eq:inv.3dim}
\big( aI - b |\bw \ket \bra \bw | \big)^{-1}
= a^{-1} \bigg[ I + \frac{b}{a-b|\bw|^2}\, |\bw \ket \bra \bw | \bigg].
\end{eqnarray}

The following lemmas are useful in Sections B.1--B.3 to prove the theorems of Section 6.

\begin{lemma}  \label{lemm:rev}
Let $A, B $ be positive linear operators on ${\bf R}^3$.  Then
\begin{eqnarray*}
 \sup_{\by \ne0} \frac
  {\bra T \by, A^{-1} T \by \ket}
  {\bra \by, B^{-1} \by \ket}  =
 \sup_{\by \ne0} \frac
  {\bra T^* \by, B T^* \by \ket}
  {\bra \by, A\by \ket}. 
\end{eqnarray*}
\end{lemma}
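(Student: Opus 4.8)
The plan is to reduce both suprema to eigenvalue problems for self-adjoint operators and observe that they have the same nonzero spectrum. First I would introduce the positive square roots $A^{1/2}$ and $B^{1/2}$, which exist and are invertible since $A,B>0$ on $\mathbf{R}^3$. On the left-hand side, substitute $\by = B^{1/2}\bz$ (a bijection of $\mathbf{R}^3\setminus\{0\}$ onto itself), so that the denominator becomes $\bra B^{1/2}\bz, B^{-1}B^{1/2}\bz\ket = |\bz|^2$, and the numerator becomes $\bra TB^{1/2}\bz, A^{-1}TB^{1/2}\bz\ket = |A^{-1/2}TB^{1/2}\bz|^2$. Hence
\[
\sup_{\by\ne0}\frac{\bra T\by,A^{-1}T\by\ket}{\bra\by,B^{-1}\by\ket}
= \sup_{\bz\ne0}\frac{|A^{-1/2}TB^{1/2}\bz|^2}{|\bz|^2}
= \|A^{-1/2}TB^{1/2}\|_\infty^2 .
\]
Symmetrically, on the right-hand side substitute $\by = A^{-1/2}\bz$: the denominator $\bra \by,A\by\ket$ becomes $|\bz|^2$, and the numerator $\bra T^*\by, BT^*\by\ket$ becomes $|B^{1/2}T^*A^{-1/2}\bz|^2$, so that
\[
\sup_{\by\ne0}\frac{\bra T^*\by,BT^*\by\ket}{\bra\by,A\by\ket}
= \|B^{1/2}T^*A^{-1/2}\|_\infty^2 .
\]
Finally, $B^{1/2}T^*A^{-1/2}$ is exactly the adjoint (transpose) of $A^{-1/2}TB^{1/2}$, since $A^{-1/2}$ and $B^{1/2}$ are self-adjoint; and the operator norm of any real matrix equals the operator norm of its transpose. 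Therefore the two squared norms coincide, which is the claimed identity.

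There is really no serious obstacle here: the only points requiring a word of care are that $A,B$ are strictly positive (so that $A^{1/2},B^{1/2}$ are invertible and the change of variables is a bijection), and the elementary fact $\|M\|_\infty = \|M^*\|_\infty$ for the operator norm, which follows from $\|M\|_\infty^2 = \|M^*M\|_\infty = \|MM^*\|_\infty = \|M^*\|_\infty^2$. If one wishes to avoid square roots of operators entirely, an alternative is to note that $\sup_{\by\ne0}\bra T\by,A^{-1}T\by\ket/\bra\by,B^{-1}\by\ket$ is the largest generalized eigenvalue $\lambda$ of the pencil $T^*A^{-1}T\,\bz = \lambda B^{-1}\bz$, i.e.\ of $B\,T^*A^{-1}T\bz = \lambda\bz$; this matrix is similar (via $B^{1/2}$) to the symmetric matrix $B^{1/2}T^*A^{-1}TB^{1/2}$, whose nonzero eigenvalues agree with those of $A^{-1/2}TB^{1/2}(A^{-1/2}TB^{1/2})^* = A^{-1/2}TBT^*A^{-1/2}$, and the latter describes exactly the right-hand supremum after the analogous manipulation. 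Either route is short; I would present the square-root version since it is cleanest and the operators live on $\mathbf{R}^3$ so all manipulations are finite-dimensional and unproblematic.
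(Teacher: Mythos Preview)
Your proof is correct and follows essentially the same approach as the paper: both substitute $\by=B^{1/2}\bz$ on the left and $\by=A^{-1/2}\bz$ on the right to reduce the suprema to operator norms, then invoke $\|\Upsilon\|_\infty=\|\Upsilon^*\|_\infty$ (equivalently $\|\Upsilon^*\Upsilon\|=\|\Upsilon\Upsilon^*\|$) with $\Upsilon=A^{-1/2}TB^{1/2}$. The only cosmetic difference is that the paper phrases the intermediate quantity as $\|B^{1/2}T^*A^{-1}TB^{1/2}\|$ rather than $\|A^{-1/2}TB^{1/2}\|_\infty^2$, which is the same thing.
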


\proof  Writing $\bz = B^{-1/2} \by$ one finds
\begin{eqnarray*}
 \sup_{\by \ne0} \frac {\bra T \by , A^{-1} T \by \ket}{\bra \by ,B^{-1} \by \ket} & = &
 \sup_{\bz \ne0} \frac{\bra B^{1/2} \bz, T^*A^{-1} T B^{1/2} \bz \ket}
  {\bra \bz , \bz \ket} \\
  & = & \| B^{1/2} T^*A^{-1} T B^{1/2} \| \\
  & = & \|A^{-1/2} T B T^* A^{-1/2} \| \\
  & = &  \sup_{\bz \ne0} \frac {\bra A^{-1/2} \bz, T B T^* A^{-1/2} \bz \ket}
  {\bra \bz , \bz \ket},  
\end{eqnarray*}
where we have used the fact that $\| \Upsilon^* \Upsilon \| = \| \Upsilon \Upsilon^* \|$ with
$\Upsilon =A^{-1/2} T B^{1/2}$. Then redefining $\by =A^{-1/2} \bz$ gives the desired result.
\qed

\begin{lemma}  \label{lemm:Fmin}
For any fixed $\mu>0$, $\nu\ge0$ and $\bw$, the minimum of
$$
F(y_2,y_3) \equiv \mu (1  + y_2^2 + y_3^2) + \nu (w_1 + w_2 y_2 + w_3 y_3)^2
$$
is 
$$
\min_{y_2,y_2}F(y_2,y_3) = \frac{ \mu(\mu + \nu \, |\bw|^2) }{\mu + \nu(|\bw|^2-w_1^2)}.
$$
\end{lemma}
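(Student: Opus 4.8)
The plan is to treat the minimization as what it is: an unconstrained quadratic optimization in the two real variables $y_2,y_3$. First I would note that the quadratic part of $F$ is the form attached to the $2\times2$ matrix $\mu I_2+\nu\,\bv\bv^{\,t}$, where $\bv\equiv(w_2,w_3)^t$; since $\mu>0$ and $\nu\ge0$ this matrix is positive definite, so $F$ is strictly convex and coercive, and hence attains its global minimum at its unique critical point. So the whole task reduces to locating that critical point and evaluating $F$ there.

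To do so I would set $c\equiv w_1+w_2y_2+w_3y_3$ and write the stationarity equations $\partial F/\partial y_j=0$, $j=2,3$, which read $\mu y_j+\nu c\,w_j=0$; thus at the minimizer $y_j=-(\nu c/\mu)w_j$. Substituting these back into the defining relation for $c$ gives
$$c=w_1-\frac{\nu c}{\mu}\,(w_2^2+w_3^2)=w_1-\frac{\nu c}{\mu}\,(|\bw|^2-w_1^2),$$
hence $c=\mu w_1\big/\bigl(\mu+\nu(|\bw|^2-w_1^2)\bigr)$, which is well defined since the denominator is $\ge\mu>0$. Finally, using $y_2^2+y_3^2=(\nu c/\mu)^2(|\bw|^2-w_1^2)$ one computes
$$F=\mu+\frac{\nu^2c^2}{\mu}(|\bw|^2-w_1^2)+\nu c^2=\mu+\frac{\nu c^2}{\mu}\bigl(\mu+\nu(|\bw|^2-w_1^2)\bigr)=\mu+\frac{\mu\nu\,w_1^2}{\mu+\nu(|\bw|^2-w_1^2)},$$
and combining the two terms over the common denominator, together with $\mu+\nu(|\bw|^2-w_1^2)+\nu w_1^2=\mu+\nu|\bw|^2$, yields the asserted value $\mu(\mu+\nu|\bw|^2)\big/\bigl(\mu+\nu(|\bw|^2-w_1^2)\bigr)$.

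There is essentially no obstacle; the only things to keep straight are the elementary algebra, the identity $w_2^2+w_3^2=|\bw|^2-w_1^2$, and the convexity remark that certifies the critical point is a global minimum. As an alternative that sidesteps calculus, one can write $F=\mu+\nu w_1^2+\langle\by',M\by'\rangle+2\nu w_1\langle\bv,\by'\rangle$ with $\by'\equiv(y_2,y_3)^t$ and $M\equiv\mu I_2+\nu\,\bv\bv^{\,t}$, invoke the standard formula $\min_{\by'}\bigl(\langle\by',M\by'\rangle+2\langle b,\by'\rangle\bigr)=-\langle b,M^{-1}b\rangle$, and evaluate $M^{-1}$ by the Sherman--Morrison identity; this reproduces the same expression and could replace the calculation above if a coordinate-free argument is preferred.
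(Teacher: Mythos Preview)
Your proof is correct and follows essentially the same approach as the paper: both locate the unique critical point of the convex quadratic by setting the gradient to zero and solving the resulting linear system, then evaluate $F$ there. Your use of the auxiliary scalar $c=w_1+w_2y_2+w_3y_3$ streamlines the linear algebra compared to the paper's explicit $2\times2$ matrix inversion, and you add the convexity remark (positive definiteness of $\mu I_2+\nu\,\bv\bv^{\,t}$) that the paper leaves implicit, but the substance is the same.
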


\proof
The condition that $\nabla F =0 $ yields two linear equations which can be written in the form
$$
\pmx \mu + \nu \, w_2^2 & \nu \, w_2 w_3 \\ \nu \, w_2 w_3 & \mu + \nu \, w_3^2 \emx
\pmx y_2 \\y_3 \emx = - \nu \, w_1 \pmx w_2 \\ w_3 \emx.     
$$
This has the solution
\bee
\pmx y_2 \\y_3 \emx & = & \frac{-\nu \, w_1}{\det}
\pmx \mu + \nu \, w_3^2 & - \nu \, w_2 w_3 \\ - \nu \, w_2 w_3 & \mu + \nu \, w_2^2 \emx
\pmx w_2 \\ w_3 \emx  \nn \\
& = & \frac{- \mu \nu \, w_1}{\det} \pmx w_2 \\ w_3 \emx
~ = ~ \frac{- \nu \, w_1}{ \mu + \nu \, (w_2^2 + w_3^2) } \pmx w_2 \\ w_3 \emx
\eee
since $\det = \mu^2 + \mu \nu \, (w_2^2 + w_3^2)$. It is now easy to compute the value of $F$
at this solution as
\begin{align*}
&\frac{\mu\bigl[\mu+\nu(w_2^2+w_3^2)\bigr]^2+\mu\nu^2w_1^2(w_2^2+w_3^2)+\mu^2\nu w_1^2}
{\bigl[\mu+\nu(w_2^2+w_3^2)\bigr]^2} \\
&\qquad=\frac{\mu\bigl[\mu+\nu(w_2^2+w_3^2)\bigr]
+\mu\nu w_1^2}{\mu+\nu(w_2^2+w_3^2)}
=\frac{ \mu(\mu + \nu \, |\bw|^2) }{\mu + \nu(|\bw|^2-w_1^2)}. \qed
\end{align*}


\begin{lemma}  \label{lemm:bures}
Let $P = I + \bw \dtsig $ with  $| \bw | < 1$. Then for every $\by\in\bR^3$,
\be  \label{prebur}
\frac{2}{L_P+ R_P} (\by \dtsig) =   
 - \frac{ \bw \cdot \by }{1 - |\bw|^2} I
 + \bigg[ \by + \frac{ (\bw \cdot \by) \, \bw }{1 - |\bw|^2} \bigg] \dtsig
\ee
\end{lemma}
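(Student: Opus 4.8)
The plan is to solve the linear equation $(L_P+R_P)(X)=2\,\by\dtsig$ for $X\in\bM_2$ directly. Since $P=I+\bw\dtsig>0$ whenever $|\bw|<1$, both $L_P$ and $R_P$ are positive operators on the Hilbert space $\bM_2$, so $L_P+R_P$ is positive and invertible; hence $\frac{2}{L_P+R_P}(\by\dtsig)$ is the unique $X$ with $PX+XP=2\,\by\dtsig$. Because $\{I,\sigma_1,\sigma_2,\sigma_3\}$ spans $\bM_2$, I would look for this $X$ in the form $X=aI+\bz\dtsig$ with $a\in\bR$ and $\bz\in\bR^3$ to be determined; finding such an $X$ that works automatically exhibits the unique solution.

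First I would compute $PX$ and $XP$ using the product formula \eqref{matprod}. One gets
\[
PX=(a+\bw\cdot\bz)I+(\bz+a\bw+i\,\bw\times\bz)\dtsig,\qquad
XP=(a+\bw\cdot\bz)I+(\bz+a\bw-i\,\bw\times\bz)\dtsig,
\]
and upon adding, the cross-product terms cancel, leaving
\[
(L_P+R_P)(X)=PX+XP=2(a+\bw\cdot\bz)\,I+2(\bz+a\bw)\dtsig .
\]
Matching this against $2\,\by\dtsig$ forces the two conditions $a+\bw\cdot\bz=0$ and $\bz+a\bw=\by$. Substituting $\bz=\by-a\bw$ into the first condition gives $a(1-|\bw|^2)=-\,\bw\cdot\by$, hence $a=-(\bw\cdot\by)/(1-|\bw|^2)$ (well-defined since $|\bw|<1$), and then $\bz=\by+(\bw\cdot\by)\bw/(1-|\bw|^2)$. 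This is precisely the right-hand side of \eqref{prebur}, so the claimed identity holds.

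There is essentially no obstacle here: the argument is elementary linear algebra in the four-dimensional algebra $\bM_2$. The only point that deserves an explicit remark is the invertibility of $L_P+R_P$, which justifies writing $\frac{2}{L_P+R_P}$ and guarantees that the explicit $X$ produced by the ansatz is the solution; this in turn follows from the positivity of $P$ on $\bM_2$.
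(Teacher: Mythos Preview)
Your proof is correct and follows essentially the same approach as the paper: both set $X=\beta I+\bz\dtsig$, compute $PX+XP$ using \eqref{matprod} so that the cross-product terms cancel, and match coefficients. The only cosmetic difference is that you eliminate $\bz$ first to solve directly for the scalar, whereas the paper eliminates $\beta$ first and then inverts $I-|\bw\rangle\langle\bw|$ via \eqref{eq:inv.3dim} to recover $\bz$.
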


\proof
Write $\beta I+ \bz \dtsig=2(L_P+ R_P)^{-1}(\by \dtsig)$.
Then \eqref{matprod} and $ \bz \times \bw = - \bw \times \bz $ imply that
\bee
\by \dtsig & = & \half ( I + \bw \dtsig) ( \beta I+ \bz \dtsig )
+ \half ( \beta I+ \bz \dtsig ) ( I + \bw \dtsig) \nn \\
& = & (\beta + \bz \cdot \bw) I + ( \bz + \beta \bw) \dtsig.
\eee
Since $I$ and the Pauli matrices form a basis for $\bM_2$, this implies
$\beta = - \bz \cdot \bw$ and $\by=\bz+\beta\bw$ so that
$$
\by = \bz - ( \bz \cdot \bw ) \bw = (I - \proj{\bw} ) \bz.
$$
Then using \eqref{eq:inv.3dim} with $a = b =1$ we find that 
$$
\bz = \bigg[ I + \frac{ \proj{\bw} }{1 - |\bw|^2} \bigg] \by
= \by+\frac{(\bw\cdot\by)\bw}{1-|\bw|^2}.
$$
Inserting this into $\beta = - \bz \cdot \bw$ yields \eqref{prebur}. \qed

\subsubsection{Lemmas for extreme points}
 
\begin{lemma} \label{lemm:s0}
Let $0\le s\le1$, $P = I + \bw \dtsig$ and $Q = I + \bx \dtsig$ with $|\bw|,|\bx|<1$. Let
$\bu = \bw - s \bx$ and $\bv = \bw + s \bx$. Then for every $\by\in\bR^3$,
\begin{align}\label{gen22forma}
&\Tr(\by \dtsig) \frac{1+s}{L_P + s R_Q} (\by \dtsig) \nonumber\\
&\quad= 2(1+s)^2\,\biggl\bra \by, \left[ \bigl\{(1+s)^2-|\bu|^2\bigr\}I
+ | \bu \ket \bra \bu | - | \bv \ket \bra \bv |
- \frac{| \bu \times \bv \ket \bra \bu \times \bv|}
{(1+s)^2 - | \bv|^2} \right] ^{-1} \by \biggr\ket,
\end{align}
where the operator inside $[\quad]^{-1}$ of \eqref{gen22forma} is positive and invertible.
\end{lemma}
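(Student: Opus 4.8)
The plan is to reduce the Hilbert-space computation on $\bM_2$ to an explicit $3\times 3$ matrix inversion on $\bR^3$, using that $\Tr(\by\dtsig)Y$ depends only on the $\dtsig$-component of $Y\in\bM_2$. First note that $L_P+sR_Q$ is a sum of commuting positive invertible operators on $(\bM_2,\langle\cdot,\cdot\rangle)$, hence positive and invertible; set $Y\equiv(L_P+sR_Q)^{-1}(\by\dtsig)$ and write $Y=\beta I+\bz\dtsig$ with $\beta\in\bC$, $\bz\in\bC^3$. Expanding $PY+sYQ=\by\dtsig$ with the product rule \eqref{matprod} and using $\bw\times\bz+s\,\bz\times\bx=\bu\times\bz$, the $I$- and $\dtsig$-components yield $(1+s)\beta+\bz\cdot\bv=0$ and $(1+s)\bz+\beta\bv+i\,\bu\times\bz=\by$. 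Eliminating $\beta$ gives $M\bz=(1+s)\by$ with $M\equiv(1+s)^2I-|\bv\ket\bra\bv|+i(1+s)[\bu]_\times$, where $[\bu]_\times$ denotes the matrix acting by $[\bu]_\times\bx=\bu\times\bx$. In the coordinates $(\beta,\bz)$ the operator $L_P+sR_Q$ is represented by an invertible $4\times4$ matrix whose $(1,1)$-block $1+s$ is invertible, so its Schur complement — equal to $(1+s)^{-1}M$ — is invertible; hence $M$ is invertible.

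Since $\Tr(\by\dtsig)=0$ and $\Tr(\by\dtsig)(\bz\dtsig)=2\,\by\cdot\bz$, the left side of \eqref{gen22forma} equals $(1+s)\Tr(\by\dtsig)Y=2(1+s)\,\by\cdot\bz=2(1+s)^2\,\by^{\rmT}M^{-1}\by$. For $\by\in\bR^3$ this coincides with $2(1+s)^2\langle\by,N^{-1}\by\rangle$, where $N$ is the operator inside $[\ \cdot\ ]^{-1}$ in \eqref{gen22forma}, provided — since $\by^{\rmT}M^{-1}\by$ equals its transpose and $(M^{-1})^{\rmT}=(M^{\rmT})^{-1}$ — that the symmetric matrix $\tfrac12(M^{-1}+(M^{\rmT})^{-1})$ equals $N^{-1}$. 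Now $\tfrac12(M^{-1}+(M^{\rmT})^{-1})=M^{-1}S(M^{\rmT})^{-1}$ with $S\equiv\tfrac12(M+M^{\rmT})=(1+s)^2I-|\bv\ket\bra\bv|$, and $|\bv|\le|\bw|+s|\bx|<1+s$ forces $S>0$, so this symmetric matrix has inverse $M^{\rmT}S^{-1}M$. Expanding $M^{\rmT}S^{-1}M=(S-i(1+s)[\bu]_\times)S^{-1}(S+i(1+s)[\bu]_\times)=S+(1+s)^2[\bu]_\times S^{-1}[\bu]_\times$, then applying the inversion formula \eqref{eq:inv.3dim} to $S^{-1}$ together with the identities $[\bu]_\times^2=|\bu\ket\bra\bu|-|\bu|^2I$, $[\bu]_\times|\bv\ket=|\bu\times\bv\ket$ and $\bra\bv|[\bu]_\times=-\bra\bu\times\bv|$, one obtains precisely $M^{\rmT}S^{-1}M=N$. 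Thus $\by^{\rmT}M^{-1}\by=\langle\by,N^{-1}\by\rangle$ for real $\by$, which is \eqref{gen22forma}. For the final assertion, $N=M^{\rmT}S^{-1}M$ is invertible as a product of invertibles, and since the left side of \eqref{gen22forma} is strictly positive whenever $\by\ne0$ (because $L_P+sR_Q>0$), the real symmetric matrix $N^{-1}$ has strictly positive quadratic form, so $N$ is positive and invertible.

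The main obstacle is the computation $M^{\rmT}S^{-1}M=N$ in the second paragraph: because $M$ is not symmetric one must first pass to its symmetric part before inverting, and matching the outcome to $N$ requires careful use of the triple-cross-product identity and of the bound $|\bv|^2<(1+s)^2$. The manipulations in the first paragraph — eliminating $\beta$ and merging the two cross-product terms into $\bu\times\bz$ — are routine but must be carried out with the correct signs.
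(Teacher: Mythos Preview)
Your argument is correct, and it takes a genuinely different algebraic route from the paper's. The paper writes $\bz=\bz_1+i\bz_2$ with $\bz_1,\bz_2\in\bR^3$, solves the imaginary-part equation for $\bz_2$ in terms of $\bz_1$, substitutes back, and after the same triple-cross-product identities obtains the real relation $(1+s)\by=N\bz_1$; invertibility of $N$ is then argued from surjectivity, and the trace is identified with $2\by\cdot\bz_1$. You instead keep the single complex $3\times3$ matrix $M=(1+s)^2I-|\bv\ket\bra\bv|+i(1+s)[\bu]_\times$, observe that a scalar equals its own transpose so that $\by^{\rmT}M^{-1}\by=\by^{\rmT}(M^{\rmT})^{-1}\by$, and then show that the inverse of $\tfrac12(M^{-1}+(M^{\rmT})^{-1})=M^{-1}S(M^{\rmT})^{-1}$ is $M^{\rmT}S^{-1}M=N$. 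The cross-product manipulations are essentially the same in both proofs; what you gain is the explicit factorization $N=M^{\rmT}S^{-1}M$, which makes the invertibility of $N$ immediate from that of $M$ (via your Schur-complement remark) and of $S$ (via $|\bv|<1+s$), whereas the paper infers invertibility only a posteriori from surjectivity. Conversely, the paper's real/imaginary split is slightly more elementary in that it never leaves $\bR^3$ and avoids the transpose-symmetrization device.
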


\proof
As above, let $\beta I+ \bz \dtsig = \big( L_P + s R_Q \big)^{-1} (\by \dtsig) $ so that 
\begin{align}  \label{eq:s0a}
\by \dtsig & = \big( L_P + s R_Q \big)( \beta I+ \bz \dtsig ) \nn\\
& = [(1+s)\beta + \bz \cdot (\bw + s \bx)] I
+ [(1+s) \bz + \beta (\bw + s \bx) - i \bz \times (\bw - s \bx)] \dtsig \nn\\
& = [(1+s)\beta + \bz \cdot \bv] I + [(1+s) \bz + \beta \bv - i \bz \times \bu] \dtsig.
\end{align}
Since the coefficient of $I$ on the right side of \eqref{eq:s0a} must be $0$, we find
$$
\beta = - \frac{\bz \cdot \bv}{1 + s}.
$$
Inserting this into \eqref{eq:s0a} and equating real and imaginary parts yield
\bsq \begin{eqnarray}
\by & = & \left[ (1+s) I - \frac{ | \bv \ket \bra\bv |}{1+s} \right] \bz_1
+ \bz_2 \times \bu,  \label{eq:s0Ba} \\
0 & = & \left[ (1+s) I - \frac{ | \bv \ket \bra \bv |}{1+s} \right] \bz_2
- \bz_1 \times \bu,  \label{eq:s0Bb}
\end{eqnarray}  \esq 
where we have written $\bz = \bz_1 + i \bz_2$ and $(\bz \cdot \bv ) \bv = \proj{\bv} \bz $.
Solving \eqref{eq:s0Bb} for $\bz_2$ with use of \eqref{eq:inv.3dim} yields
\begin{eqnarray*}
\bz_2 & = & \frac{1}{1+s} \left[ I + \frac{ \proj{\bv} }
{(1+s)^2 - |\bv|^2} \right] (\bz_1 \times \bu).  \\
\end{eqnarray*}
Inserting this into \eqref{eq:s0Ba} gives
\begin{eqnarray}  \label{eq:s0Bc}
\by = \frac{1}{1+s} \left[ \bigl\{(1+s)^2-|\bu|^2\bigr\} I
+ | \bu \ket \bra \bu | - | \bv \ket \bra \bv |
- \frac{| \bu \times \bv \ket \bra \bu \times \bv|}
{(1+s)^2 - | \bv|^2} \right] \bz_1,
\end{eqnarray} 
where we have used
\begin{align*}
&(\bz_1 \times \bu)\times\bu=-(\bu\cdot\bu)\bz_1 +(\bu \cdot \bz_1)\bu, \\
&[|\bv\>\<\bv|(\bz_1\times\bu)]\times\bu=[\bv\cdot(\bz_1\times\bu)]\, \bv\times\bu
= - \proj{\bv \times  \bu }\, \bz_1
\end{align*}
in the first and the second terms from $\bz_2$, respectively. Thus we have proved that for
every $\by\in\bR^3$ there exists a $\bz_1\in\bR^3$ satisfying \eqref{eq:s0Bc}. This implies
that the operator inside $[\quad]$ of \eqref{eq:s0Bc} is surjective and hence invertible.
Since
\begin{eqnarray*}
\tr\, (\by \dtsig) \frac{1}{L_P + s R_Q} (\by \dtsig)  
= \tr\, (\by \dtsig) (\beta I + \bz \dtsig) = 2\by \cdot \bz_1,
\end{eqnarray*}
we obtain \eqref{gen22forma} by solving for $\bz_1$ in \eqref{eq:s0Bc}. Moreover, since the
LHS of \eqref{gen22forma} $\ge0$, the operator inside $[\quad]^{-1}$ is indeed positive. \qed

\medskip
In the case where $\by = \bw - \bx \in \hbox{span} \{ \bu,\bv \}$ and so $\by$ is orthogonal
to $\bu \times \bv$, we can simplify the expression above.

\begin{lemma} \label{lemm2:s0}
In the notation of Lemma~\ref{lemm:s0}, when $\by$ is orthogonal to $\bu \times \bv$ equation 
\eqref{gen22forma} becomes
\begin{eqnarray}\label{gen22formb}
2(1+s)^2\, \Big\bra \by, \big[ \big\{(1+s)^2 - |\bu|^2 \big\} I
+ \proj{\bu} - \proj{\bv} \big] ^{-1} \by \Big\ket.
\end{eqnarray}
\end{lemma}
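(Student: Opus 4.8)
The plan is to reduce the identity to a statement about $3\times3$ real symmetric operators and their block structure. Write $M$ for the operator inside $[\ \cdot\ ]^{-1}$ in \eqref{gen22forma},
$$
M \equiv \bigl\{(1+s)^2-|\bu|^2\bigr\}I + \proj{\bu} - \proj{\bv} - \frac{\proj{\bu\times\bv}}{(1+s)^2-|\bv|^2},
$$
and $N$ for the operator inside $[\ \cdot\ ]^{-1}$ in \eqref{gen22formb}, so that $M = N - \frac{\proj{\bu\times\bv}}{(1+s)^2-|\bv|^2}$. By Lemma~\ref{lemm:s0} the operator $M$ is positive and invertible, and from $|\bw|,|\bx|<1$ one has $|\bu|=|\bw-s\bx|<1+s$ and $|\bv|=|\bw+s\bx|<1+s$, so the scalars $(1+s)^2-|\bu|^2$ and $(1+s)^2-|\bv|^2$ are strictly positive. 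Since $\by$ is assumed orthogonal to $\bu\times\bv$, it will be enough to show that $M^{-1}$ and $N^{-1}$ agree on the subspace $(\bu\times\bv)^{\perp}$.

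First I would dispose of the degenerate case: if $\bu$ and $\bv$ are linearly dependent then $\bu\times\bv=0$, the last term of $M$ is absent, $M=N$, and there is nothing to prove. So assume $\{\bu,\bv\}$ spans a plane $V=\mathrm{span}\{\bu,\bv\}$, with $V^{\perp}=\bR(\bu\times\bv)$. The next step is the elementary observation that $\proj{\bu}$ and $\proj{\bv}$ map $\bR^{3}$ into $V$ and kill $V^{\perp}$, while $\proj{\bu\times\bv}$ kills $V$ and maps $V^{\perp}$ into itself; hence both $M$ and $N$ are block diagonal for the decomposition $\bR^{3}=V\oplus V^{\perp}$, and their restrictions to $V$ coincide, $M|_{V}=N|_{V}$. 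On $V^{\perp}$ the operator $N$ acts as the positive scalar $(1+s)^2-|\bu|^2$, and $N|_{V}=M|_{V}>0$, so $N$ too is positive and invertible; consequently $M^{-1}|_{V}=(M|_{V})^{-1}=(N|_{V})^{-1}=N^{-1}|_{V}$.

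Finally, orthogonality of $\by$ to $\bu\times\bv$ puts $\by$ in $V$, so $\bra \by,M^{-1}\by\ket=\bra \by,N^{-1}\by\ket$, and multiplying through by $2(1+s)^2$ turns \eqref{gen22forma} into \eqref{gen22formb}. I do not expect any serious obstacle here; the whole argument is routine linear algebra, and the only things that need a moment's care are the separate treatment of the collinear case $\bu\parallel\bv$ and the verification that the scalars $(1+s)^2-|\bu|^2$ and $(1+s)^2-|\bv|^2$ are positive (so that $M$, $N$, and their restrictions are genuinely invertible).
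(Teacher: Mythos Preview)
Your proof is correct. The paper takes a slightly different, more algebraic route: it factors out the scalar $(1+s)^2-|\bu|^2$ to write the operator as $I-W-X$ with
\[
W=\frac{\proj{\bv}-\proj{\bu}}{(1+s)^2-|\bu|^2},\qquad
X=\frac{\proj{\bu\times\bv}}{\bigl[(1+s)^2-|\bu|^2\bigr]\bigl[(1+s)^2-|\bv|^2\bigr]},
\]
and then uses the Neumann series together with $WX=0$ (and implicitly $XW=0$) and $X\by=0$ to conclude $(I-W-X)^{-1}\by=(I-W)^{-1}\by$. Your block-diagonal argument with respect to $V\oplus V^{\perp}$ is the geometric counterpart of this: the identities $WX=XW=0$ are precisely the statement that $W$ and $X$ live on complementary blocks, and your observation $M|_{V}=N|_{V}$ replaces the series manipulation. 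Your version has the minor advantage of handling the collinear case $\bu\parallel\bv$ explicitly and of making the invertibility of $N$ transparent, while the paper's formulation packages the reduction into a single line once the commutation is noted; substantively the two arguments are equivalent.
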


\proof
First note that when $X \ge 0 $ and $ I - W - X \ge 0 $ is invertible, then $ I - W \ge 0 $
is also invertible. Then it suffices to observe that when $ WX = 0$ and $X\by=0$,
$$
(I - W - X )^{-1} \by = \sum_{k = 0}^\infty (W + X)^k \by = \sum_{k = 0}^\infty W^k \by
=(I-W)^{-1}\by, 
$$
and apply this with
$$
W = \frac{ \proj{\bv} - \proj{\bu} } {(1+s)^2-|\bu|^2},\qquad
X = \frac{| \bu \times \bv \ket \bra \bu \times \bv|}
{\bigl[(1+s)^2-|\bu|^2\bigr]\bigl[(1+s)^2-|\bv|^2\bigr]}. \qed
$$

Note that when $P=Q$ so that $\bu=(1-s)\bw$ and $\bv=(1+s)\bw$, the $s$-dependence of all
terms in \eqref{gen22formb} has the form $(1\pm s)^2/(1+s)^2=(1\pm s^{-1})^2/(1+s^{-1})^2$,
which implies \eqref{eq:Psym} for $A = \by \dtsig$ as
$$
\frac{1+s}{L_P+sR_P}=\frac{1+s^{-1}}{L_P+s^{-1}R_P}=\frac{1+s}{R_P+sL_P}.
$$

    \pagebreak

\end{document}